\newtheorem{theorem}{Theorem}[section]
\newtheorem{corollary}{Corollary}[theorem]
\numberwithin{theorem}{section} %
\numberwithin{corollary}{section}
\newcommand{\classicalexponent}{0.176 }
\newcommand{\classicalexponenterror}{0.011}
\newcommand{\classicalconstant}{19.369}
\newcommand{\classicalconstanterror}{0.657}
\newcommand{\classicalexponentwalksatlm}{0.345}
\newcommand{\classicalexponentdsatur}{0.42}
\DeclareMathOperator*{\argmin}{arg\,min}
\newcommand{\crossoverqubitsfaconespeeduptwo}{242}
\newcommand{\pfaconespeeduptwo}{71}
\newcommand{\yearsfaconespeeduptwo}{3}
\newcommand{\Tcountfaconespeeduptwo}{419.0}
\newcommand{\Tdepthfaconespeeduptwoa}{6750.25}
\newcommand{\Tfactoriesfaconespeeduptwo}{840}
\newcommand{\Tpqubitsfaconespeeduptwo}{152.43}
\newcommand{\distancefaconespeeduptwo}{35}
\newcommand{\Trotationsfaconespeeduptwo}{27}
\newcommand{\coresfaconespeeduptwo}{71}
\newcommand{\ancillafaconespeeduptwo}{43680}
\newcommand{\decodersfaconespeeduptwo}{52.32}
\newcommand{\rotfaconespeeduptwoa}{1.76}
\newcommand{\rotfaconespeeduptwob}{-09}
\newcommand{\Tfaconespeeduptwoa}{1.00}
\newcommand{\Tfaconespeeduptwob}{-17}
\newcommand{\oracleratiofaconespeeduptwo}{22}
\newcommand{\crossoverqubitsfaconespeedupthree}{191}
\newcommand{\pfaconespeedupthree}{253}
\newcommand{\hoursfaconespeedupthree}{64.57}
\newcommand{\Tcountfaconespeedupthree}{0.96}
\newcommand{\Tdepthfaconespeedupthreea}{20.76}
\newcommand{\Tfactoriesfaconespeedupthree}{592}
\newcommand{\Tpqubitsfaconespeedupthree}{84.43}
\newcommand{\distancefaconespeedupthree}{29}
\newcommand{\Trotationsfaconespeedupthree}{24}
\newcommand{\coresfaconespeedupthree}{50}
\newcommand{\ancillafaconespeedupthree}{30784}
\newcommand{\decodersfaconespeedupthree}{36.9}
\newcommand{\rotfaconespeedupthreea}{6.65}
\newcommand{\rotfaconespeedupthreeb}{-08}
\newcommand{\Tfaconespeedupthreea}{1.44}
\newcommand{\Tfaconespeedupthreeb}{-14}
\newcommand{\crossoverqubitsfaconespeedupfour}{179}
\newcommand{\pfaconespeedupfour}{623}
\newcommand{\hoursfaconespeedupfour}{14.99}
\newcommand{\Tcountfaconespeedupfour}{0.21}
\newcommand{\Tdepthfaconespeedupfoura}{5.0}
\newcommand{\Tfactoriesfaconespeedupfour}{540}
\newcommand{\Tpqubitsfaconespeedupfour}{73.91}
\newcommand{\distancefaconespeedupfour}{28}
\newcommand{\Trotationsfaconespeedupfour}{23}
\newcommand{\coresfaconespeedupfour}{46}
\newcommand{\ancillafaconespeedupfour}{28080}
\newcommand{\decodersfaconespeedupfour}{33.66}
\newcommand{\rotfaconespeedupfoura}{1.41}
\newcommand{\rotfaconespeedupfourb}{-07}
\newcommand{\Tfaconespeedupfoura}{6.48}
\newcommand{\Tfaconespeedupfourb}{-14}
\newcommand{\classicalpowerfaconespeedupfour}{269.27}
\newcommand{\oracleratiofaconespeedupfour}{23}
\newcommand{\pfacfivespeedupfour}{623}
\newcommand{\qubitshundredlogicalcyleone}{233}
\newcommand{\timehundredlogicalcyleone}{82.21}
\newcommand{\improvoptimisticcrossoverqubitsfactwospeedupfournumberzero}{185}
\newcommand{\improvoptimistichoursfactwospeedupfournumberzero}{7.23}
\newcommand{\improvoptimisticTpqubitsfactwospeedupfournumberzero}{40.84}
\newcommand{\improvoptimisticcrossoverqubitsfactwospeedupfournumberone}{195}
\newcommand{\improvoptimistichoursfactwospeedupfournumberone}{25.45}
\newcommand{\improvoptimisticTpqubitsfactwospeedupfournumberone}{25.77}
\newcommand{\improvoptimisticcrossoverqubitsfactwospeedupfournumbertwo}{203}
\newcommand{\improvoptimistichoursfactwospeedupfournumbertwo}{65.5}
\newcommand{\improvoptimisticTpqubitsfactwospeedupfournumbertwo}{29.82}
\newcommand{\improvoptimisticcrossoverqubitsfactwospeedupfournumberthree}{203}
\newcommand{\improvoptimistichoursfactwospeedupfournumberthree}{65.24}
\newcommand{\improvoptimisticTpqubitsfactwospeedupfournumberthree}{45.25}
\newcommand{\improvoptimisticcrossoverqubitsfactwospeedupfournumberfour}{177}
\newcommand{\improvoptimistichoursfactwospeedupfournumberfour}{2.94}
\newcommand{\improvoptimisticTpqubitsfactwospeedupfournumberfour}{8.88}
\newcommand{\improvpessimisticdaysfaconespeedupfournumberzero}{2.31}
\newcommand{\improvpessimisticTpqubitsfaconespeedupfournumberzero}{135.81}
\newcommand{\improvpessimisticdaysfaconespeedupfournumberone}{7.83}
\newcommand{\improvpessimisticTpqubitsfaconespeedupfournumberone}{77.31}
\newcommand{\improvpessimisticdaysfaconespeedupfournumbertwo}{20.74}
\newcommand{\improvpessimisticTpqubitsfaconespeedupfournumbertwo}{107.14}
\newcommand{\improvpessimisticdaysfaconespeedupfournumberthree}{20.7}
\newcommand{\improvpessimisticTpqubitsfaconespeedupfournumberthree}{153.67}
\newcommand{\improvpessimistichoursfaconespeedupfournumberfour}{21.75}
\newcommand{\improvpessimisticTpqubitsfaconespeedupfournumberfour}{29.92}
\newcommand{\improvpessimisticcrossoverqubitsfactwospeedupfournumberzero}{278}
\newcommand{\improvpessimisticcrossoverqubitsfactwospeedupfournumberone}{289}
\newcommand{\improvpessimisticcrossoverqubitsfactwospeedupfournumbertwo}{296}
\newcommand{\improvpessimisticcrossoverqubitsfactwospeedupfournumberthree}{296}
\newcommand{\improvpessimisticcrossoverqubitsfactwospeedupfournumberfour}{271}
\newcommand{\improvcrossoverqubitsfactwospeeduptwo}{235}
\newcommand{\improvdaysfactwospeeduptwo}{144.64}
\newcommand{\improvTcountfactwospeeduptwo}{265.24}
\newcommand{\improvTdepthfactwospeeduptwoa}{4594.56}
\newcommand{\improvTfactoriesfactwospeeduptwo}{400}
\newcommand{\improvTpqubitsfactwospeeduptwo}{17.0}
\newcommand{\improvdistancefactwospeeduptwo}{18}
\newcommand{\improvTrotationsfactwospeeduptwo}{27}
\newcommand{\improvcoresfactwospeeduptwo}{725760}
\newcommand{\improvancillafactwospeeduptwo}{20800}
\newcommand{\improvdecodersfactwospeeduptwo}{25.04}
\newcommand{\improvrotfactwospeeduptwoa}{1.76}
\newcommand{\improvrotfactwospeeduptwob}{-09}
\newcommand{\improvTfactwospeeduptwoa}{1.00}
\newcommand{\improvTfactwospeeduptwob}{-17}
\newcommand{\improvrotfaconespeedupthreea}{8.63}
\newcommand{\improvcrossoverqubitsfactwospeedupthree}{189}
\newcommand{\improvhoursfactwospeedupthree}{12.02}
\newcommand{\improvTcountfactwospeedupthree}{0.87}
\newcommand{\improvTdepthfactwospeedupthreea}{19.32}
\newcommand{\improvTfactoriesfactwospeedupthree}{296}
\newcommand{\improvTpqubitsfactwospeedupthree}{9.69}
\newcommand{\improvdistancefactwospeedupthree}{15}
\newcommand{\improvTrotationsfactwospeedupthree}{24}
\newcommand{\improvcoresfactwospeedupthree}{725760}
\newcommand{\improvancillafactwospeedupthree}{15392}
\newcommand{\improvdecodersfactwospeedupthree}{18.54}
\newcommand{\improvrotfactwospeedupthreeb}{-08}
\newcommand{\improvTfactwospeedupthreea}{1.57}
\newcommand{\improvTfactwospeedupthreeb}{-14}
\newcommand{\improvcrossoverqubitsfactwospeedupfour}{177}
\newcommand{\improvhoursfactwospeedupfour}{2.94}
\newcommand{\improvTcountfactwospeedupfour}{0.2}
\newcommand{\improvTdepthfactwospeedupfoura}{4.72}
\newcommand{\improvTfactoriesfactwospeedupfour}{270}
\newcommand{\improvTpqubitsfactwospeedupfour}{8.84}
\newcommand{\improvdistancefactwospeedupfour}{15}
\newcommand{\improvTrotationsfactwospeedupfour}{23}
\newcommand{\improvcoresfactwospeedupfour}{725760}
\newcommand{\improvancillafactwospeedupfour}{14040}
\newcommand{\improvdecodersfactwospeedupfour}{16.92}
\newcommand{\improvrotfactwospeedupfoura}{1.47}
\newcommand{\improvrotfactwospeedupfourb}{-07}
\newcommand{\improvTfactwospeedupfoura}{7.00}
\newcommand{\improvTfactwospeedupfourb}{-14}
\newcommand{\pessimisticcrossoverqubitsfaconespeeduptwo}{366}
\newcommand{\pessimisticyearsfaconespeeduptwo}{708}
\newcommand{\pessimisticTcountfaconespeeduptwo}{1241049.01}
\newcommand{\pessimisticTdepthfaconespeeduptwoa}{13296817.43}
\newcommand{\pessimisticTfactoriesfaconespeeduptwo}{1240}
\newcommand{\pessimisticTpqubitsfaconespeeduptwo}{72.22}
\newcommand{\pessimisticdistancefaconespeeduptwo}{22}
\newcommand{\pessimisticTrotationsfaconespeeduptwo}{27}
\newcommand{\pessimisticcoresfaconespeeduptwo}{725760}
\newcommand{\pessimisticancillafaconespeeduptwo}{64480}
\newcommand{\pessimisticdecodersfaconespeeduptwo}{77.25}
\newcommand{\pessimisticrotfaconespeeduptwoa}{1.76}
\newcommand{\pessimisticrotfaconespeeduptwob}{-09}
\newcommand{\pessimisticTfaconespeeduptwoa}{1.00}
\newcommand{\pessimisticTfaconespeeduptwob}{-17}
\newcommand{\pessimisticcrossoverqubitsfaconespeedupthree}{286}
\newcommand{\pessimistichoursfaconespeedupthree}{354.34}
\newcommand{\pessimisticTcountfaconespeedupthree}{73.22}
\newcommand{\pessimisticTdepthfaconespeedupthreea}{996.59}
\newcommand{\pessimisticTfactoriesfaconespeedupthree}{936}
\newcommand{\pessimisticTpqubitsfaconespeedupthree}{36.41}
\newcommand{\pessimisticdistancefaconespeedupthree}{17}
\newcommand{\pessimisticTrotationsfaconespeedupthree}{26}
\newcommand{\pessimisticcoresfaconespeedupthree}{725760}
\newcommand{\pessimisticancillafaconespeedupthree}{48672}
\newcommand{\pessimisticdecodersfaconespeedupthree}{58.32}
\newcommand{\pessimisticrotfaconespeedupthreea}{7.15}
\newcommand{\pessimisticTfaconespeedupthreea}{1.66}
\newcommand{\pessimisticTfaconespeedupthreeb}{-16}
\newcommand{\pessimisticrotfactwospeedupthreeb}{-09}
\newcommand{\pessimisticcrossoverqubitsfaconespeedupfour}{263}
\newcommand{\pessimistichoursfaconespeedupfour}{21.75}
\newcommand{\pessimisticTcountfaconespeedupfour}{4.31}
\newcommand{\pessimisticTdepthfaconespeedupfoura}{65.25}
\newcommand{\pessimisticTfactoriesfaconespeedupfour}{836}
\newcommand{\pessimisticTpqubitsfaconespeedupfour}{29.81}
\newcommand{\pessimisticdistancefaconespeedupfour}{16}
\newcommand{\pessimisticTrotationsfaconespeedupfour}{25}
\newcommand{\pessimisticcoresfaconespeedupfour}{725760}
\newcommand{\pessimisticancillafaconespeedupfour}{43472}
\newcommand{\pessimisticdecodersfaconespeedupfour}{52.1}
\newcommand{\pessimisticrotfaconespeedupfoura}{3.17}
\newcommand{\pessimisticrotfaconespeedupfourb}{-08}
\newcommand{\pessimisticTfaconespeedupfoura}{3.26}
\newcommand{\pessimisticTfaconespeedupfourb}{-15}
\begin{document}
\title{Threshold for Fault-tolerant Quantum Advantage\\ with the Quantum Approximate Optimization Algorithm}

\author{Sivaprasad Omanakuttan}
\email{sivparasad.thattupurackalomanakuttan@jpmchase.com}
\author{Zichang He}
\author{Zhiwei Zhang}
\author{Tianyi Hao}
\author{Arman Babakhani}
\author{Sami~Boulebnane}
\author{Shouvanik~Chakrabarti}
\author{Dylan~Herman}
\author{Joseph~Sullivan}
\author{Michael A.~Perlin}
\email{michael.perlin@jpmchase.com}
\author{Ruslan Shaydulin}
\email{ruslan.shaydulin@jpmchase.com}
\author{Marco Pistoia}
\affiliation{Global Technology Applied Research, JPMorganChase, New York, NY 10001, USA}

\begin{abstract}
  Optimization is often cited as a promising application of quantum computers.
  However, the low degree of provable quantum speedups has led prior rigorous end-to-end resource analyses to conclude that a quantum computer is unlikely to surpass classical state-of-the-art on optimization problems under realistic assumptions.
  In this work, we compile and analyze the Quantum Approximate Optimization Algorithm (QAOA) combined with Amplitude Amplification (AA) applied to random 8-SAT at the satisfiability threshold. 
  Our compilation involves careful optimization of circuits for Hamiltonian simulation, which may be of independent interest.
  We use the analytical scaling of the time-to-solution for QAOA identified by \citet{Sami_8_SAT_QAOA} and find that with QAOA depth $p=\pfacfivespeedupfour$, QAOA+AA achieves a crossover with state-of-the-art classical heuristics at \crossoverqubitsfaconespeedupfour{}
  variables and \hoursfaconespeedupfour{} hours of runtime when executed on a surface-code-based fault-tolerant quantum computer with \Tpqubitsfaconespeedupfour{} million physical qubits, a physical error rate of $10^{-3}$, and a $1~\mu$s code cycle time.
  Notably, we allow the classical solver to be parallelized as long as its total energy consumption is equal to that required for decoding in the surface code. 
  We further show that this restriction on classical solver energy consumption can be relaxed given optimistic but plausible reductions in physical error rates and fault-tolerance overheads, enabling a crossover of \improvoptimistichoursfactwospeedupfournumberfour{} hours using \improvoptimisticTpqubitsfactwospeedupfournumberfour{} million physical qubits against a classical solver running on a supercomputer with $725,760$ CPU cores.
  These findings support the hypothesis that large-scale fault-tolerant quantum computers will be useful for optimization.
\end{abstract}

\maketitle

\section{Introduction}

Optimization is often included in the list of the domains for which quantum computers are likely to have an impact~\cite{Abbas2024,Alexeev2021,Herman2023} due to the existence of many broadly applicable quantum algorithms with provable asymptotic speedups~\cite{quant-ph/9607014,montanaro2018quantum,montanaro2020quantum,2210.03210,Somma2008,Wocjan2008,Hastings2018,2212.01513,2410.23270}. However, most such speedups are obtained using a variant of amplitude amplification~\cite{quant-ph/9607014} and are therefore only quadratic~\cite{montanaro2018quantum,montanaro2020quantum,2210.03210,Somma2008,Wocjan2008}. Recently, algorithms with super-quadratic speedups have been proposed based on the short-path algorithm~\cite{Hastings2018,2212.01513,2410.23270}; however, their speedup is only slightly better than quadratic and is only over a restricted set of classical algorithms, namely brute force~\cite{Hastings2018,2212.01513} and Markov chain search~\cite{2410.23270}. In some restricted cases, quantum algorithms have been shown to achieve an exponential speedup over the best known %
classical algorithms~\cite{2411.04979,2408.08292,2503.12789}. However, it remains to be seen whether superpolynomial separations can be achieved for more general classes of optimization problems.

In addition to the small degree of the speedup, the practical applicability of these algorithms is limited by the high cost of their implementation, including the need for very deep circuits and extensive amounts of quantum arithmetic. Combined with overhead of error correction, these observations led prior resource analyses to conclude that quantum algorithms with small polynomial speedups in general~\cite{focus_beyond_quadratic} and for optimization in particular~\cite{Yuval_toffoli,Campbell2019,Dalzell2023,2412.13274} are unlikely to deliver a practical speedup on realistic large-scale fault-tolerant quantum computers. Notably, a detailed analysis of quantum algorithms for unstructured random $k$-SAT problems concluded that a speedup is unlikely once the classical cost of decoding is taken into account~\cite{Campbell2019}. For structured problems, the speedup is absent even if the cost of decoding is ignored~\cite{2412.13274}.

The quantum approximate optimization algorithm (QAOA)~\cite{Hogg2000,Hogg2000search,farhi2014quantumapproximateoptimizationalgorithm} is a quantum heuristic for optimization that has recently been shown to provide a polynomial speedup over classical state-of-the-art for the Low Autocorrelation Binary Sequences~\cite{Shaydulin2024} and random 8-SAT~\cite{Sami_8_SAT_QAOA} problems. QAOA solves optimization problems by using a parameterized quantum circuit consisting of $p$ steps of applying, in alternation, two Hamiltonian evolution operators.
The first operator is called the ``phaser'' and applies a phase to computational basis states proportionally to the objective function value of the corresponding bitstring.
The second operator is called the ``mixer'' and induces non-trivial dynamics equivalent to a quantum walk on the boolean hypercube.
A notable attribute of QAOA is the simplicity of its circuit, consisting of repetitions of two fast-forwardable Hamiltonian evolutions, which has enabled small-scale hardware demonstrations~\cite{Shaydulin2023npgeq,Pelofske2023,Pelofske2024,2409.12104,Tasseff2024}.

In this work, we focus on the random 8-SAT problem, for which \citet{Sami_8_SAT_QAOA} analytically derive the instance-average success probability and show that it decays exponentially with problem size $n$, with empirical results showing a power-law dependency on QAOA depth $p$. Specifically, the time-to-solution of QAOA with $p$ layers for 8-SAT is shown to be $O(2^{0.69p^{-0.32}n})$.
The exponent can be reduced by another factor of two by using amplitude amplification (QAOA+AA).
For sufficiently large $p$, this approach gives a polynomial speedup over state-of-the-art classical algorithms.

Our main result is an analysis of the conditions for a quantum advantage for the random 8-SAT problem using QAOA+AA compiled to a fault-tolerant quantum processor based on the surface code.
On the quantum algorithm side, we compile QAOA+AA and optimize multiple aspects of the circuit, including reducing the circuit depth required to implement QAOA phaser and $k$-SAT oracle, and identifying the optimal number of $T$ gates to maximize the fidelity of Hamiltonian evolutions, among other improvements. As some of these optimizations apply to Hamiltonian evolution more broadly and are not specific to QAOA, they may be of independent interest. We note that our logical circuit is composed only of single-qubit Pauli gates, single- and two-qubit Pauli measurements, and the preparation of logical $\ket{0}$, $\ket{+}$, $T$, and CCZ states, allowing us to give a complete resource estimate.

On the classical algorithm side, we benchmark state-of-the-art classical heuristics for random 8-SAT and identify Sparrow \cite{balint2014engineering} as the most performant one, with a time-to-solution that scales as $2^{\classicalexponent n}$ with the number of variables $n$, which is a notable improvement over the scaling of $2^{\classicalexponentdsatur{}n}$ and $2^{\classicalexponentwalksatlm{}n}$ used as the classical point of comparison in prior analyses of the prospects for a quantum advantage for random 8-SAT~\cite{Campbell2019,Sami_8_SAT_QAOA}. %
We further allow for the parallelization of Sparrow, whose effective speedup from running on multiple cores was studied in Ref.~\cite{arbelaez2013using}, and assume that Sparrow parallelizes as well as the most parallelizable unstructured problem instance considered in Ref.~\cite{arbelaez2013using}.
The number of CPUs used by Sparrow in our analysis is determined by energy consumption; namely, we require that the total energy consumption of the CPUs running Sparrow is equal to that required to perform real-time decoding of the surface code \cite{Barber2025}.

We find that for random 8-SAT on $n=\crossoverqubitsfaconespeedupfour{}$ variables, QAOA+AA with $p=\pfaconespeedupfour{}$ executed on $\Tpqubitsfaconespeedupfour{}\times 10^6$ physical qubits has an expected time-to-solution of \hoursfaconespeedupfour{} hours, which is equal to that of state-of-the-art classical solver running on \coresfaconespeedupfour{} CPU cores with a power budget of \classicalpowerfaconespeedupfour{} W.
For $n=\qubitshundredlogicalcyleone{}$ variables, QAOA+AA achieves $100\times$ speedup over classical state-of-the-art with a quantum runtime of \timehundredlogicalcyleone{} hours.

Our main result relies only on existing techniques and a relatively conservative set of hardware assumptions. However, we anticipate that the rapid progress in error correction and hardware will lead to further reductions in resource requirements, which are not taken into account in our main results.
We therefore analyze the impact of potential future progress in magic state cultivation~\cite{gidney2024magic}, algorithmic fault-tolerance~\cite{zhou2024algorithmic}, and hardware error rates.
We find that these improvements may lead to a crossover time of only \improvoptimistichoursfactwospeedupfournumberfour{} hours using \improvoptimisticTpqubitsfactwospeedupfournumberfour{}$\times 10^6$ physical qubits against classical solver running on all $725,760$ cores of MareNostrum 5 GPP supercomputer~\cite{top500}.

Our results contrast with the negative findings of previous works~\cite{focus_beyond_quadratic,Yuval_toffoli,Campbell2019} and are enabled by combining multiple contributions with recent developments in the field.
First, we leverage improved decoding techniques~\cite{Barber2025,Wu2023,Higgott2025}, which have led to a $100\times$ reduction in the classical computing overheads of decoding as compared to the estimates used in Ref.~\cite{Campbell2019}.
Second, we leverage improved resource-state factories \cite{litinski2019magic} and (similarly to Ref.~\cite{Campbell2019}) allow for their parallelization, such that the runtime of the quantum algorithm is not limited by the time required to produce a resource state.
Third, we trade space for time by optimizing the compilation, scheduling, and parallelism of circuit components.
These techniques reduce the effective time cost of one clause in the QAOA phaser to one logical QEC cycle ($d$ rounds of syndrome extraction in a distance-$d$ surface code).
Finally, we use QAOA+AA, which has a simple circuit and, as shown by \citet{Sami_8_SAT_QAOA}, offers a super-quadratic speedup.

\section{Results}

\begin{figure*}
  \centering
  \includegraphics{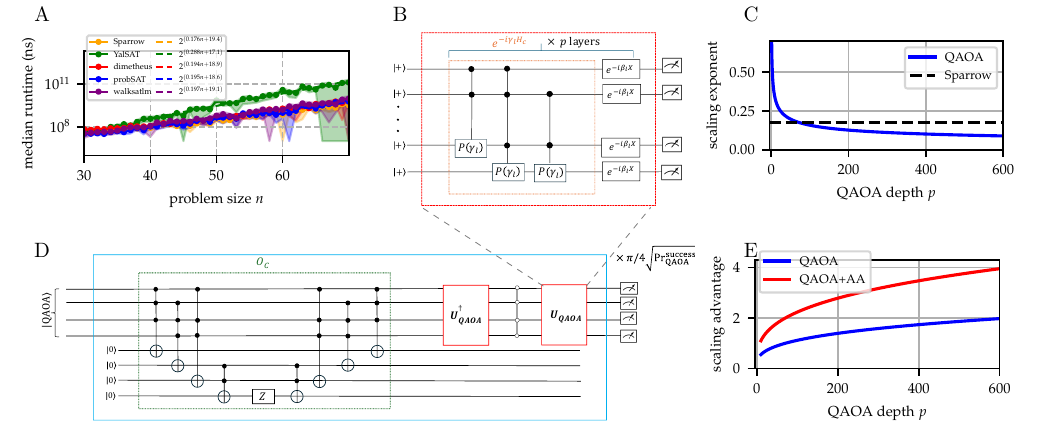}
  \caption{
    \textbf{Classical and quantum algorithms for 8-SAT.}
    \textbf{(A)} Median time-to-solution (TTS) of state-of-the-art classical solvers for random 8-SAT near the satisfiability threshold with a clause-to-variable ratio of $r=176$.
    The best scaling is achieved by the Sparrow solver, for which the TTS is $2^{\classicalexponent n + \classicalconstant} \, \mathrm{ns}$.
    \textbf{(B)} Example QAOA circuit for an instance of 3-SAT, which is composed of $p$ layers that alternate the phaser and mixer, where the phaser (orange dotted box) encodes the $k$-SAT objective function and the mixer consists of single-qubit rotations $R_{\rm x}(\beta_l) = e^{-i\beta_l X/2}$.
    \textbf{(C)} The TTS scaling exponent of QAOA with varying QAOA depths $p$, reproduced from Ref.~\cite{Sami_8_SAT_QAOA}.
    The dashed line indicates the scaling exponent for Sparrow, for reference.
    \textbf{(D)} The full circuit for QAOA+AA consists of $\pi/(4\sqrt{\Pr_{\rm QAOA}^{\rm success}})$ repetitions of a module that contains two QAOA circuits, an oracle $O_C$ that flips the sign of solutions to $k$-SAT, and a ``zero-state oracle'' that flips the sign of the all-0 state.
    \textbf{(E)} The asymptotic speedup of QAOA and QAOA+AA over the state-of-the-art classical solver Sparrow for varying values of QAOA depth $p$.
    QAOA+AA scales quadratically better than QAOA, enabling larger speedups.
  }
  \label{fig:algorithms}
\end{figure*}

We study the regime of quantum advantage for the random $k$-SAT problem~\cite{CojaOghlan2009}, which has been studied extensively from the perspective of both classical \cite{dequen2006efficient,marino2016backtracking,coja2010better,aguirre2001random} and quantum~\cite{Campbell2019,Campos2021,zhang2022quantum,Sami_8_SAT_QAOA,2412.13274,2411.17442} algorithms.
In this problem, the objective is to determine whether there exists an assignment of $n$ truth values to variables that satisfies a Boolean formula with $m$ clauses, where each clause in the formula contains exactly $ k $ literals.
We defer formal definition of the ensemble of instances used to Methods.
As the clause-to-variable ratio $\frac{m}{n}$ grows, $k$-SAT undergoes a phase transition from random instances being satisfiable with high probability to being almost surely unsatisfiable at $\frac{m}{n} \approx 2^k \log{2}$ (up to the leading order in $k$)~\cite{CojaOghlan2009}. %
Whereas efficient classical algorithms or performant heuristics exist for $k$-SAT with $\frac{m}{n} < 2^k\frac{\log{k}}{k}$~\cite{coja2010better},
it remains an open question to understand the power of algorithms to address problems with $2^k\frac{\log{k}}{k} < \frac{m}{n} \lesssim 2^k \log{2}$.
Following Ref.~\cite{Sami_8_SAT_QAOA}, we focus on random 8-SAT close to the satisfiability threshold and set $\frac{m}{n}=176$.

\subsection{Clasical solvers}

We begin by benchmarking the state-of-the-art classical solvers.
We include in our comparison incomplete classical algorithms, i.e., algorithms that cannot deduce that an instance is unsatisfiable.
We remark that our quantum algorithm, QAOA, is also an incomplete solver.
We include winners of the three latest random track SAT competitions (Sparrow \cite{balint2014engineering}, 2018 \cite{heule2019sat}, yalSAT \cite{biere2016splatz}, 2017 \cite{balyosat} and Dimetheus \cite{gableske2013performance}, 2016 \cite{balyo2017sat}; random track removed in 2019 \cite{heule2019benchmark}).
Additionally, we include probSAT \cite{balint2012choosing} and WalkSATlm \cite{cai2013improving}, which are known to perform well on random problems \cite{fu2022improving}.
Fig.~\ref{fig:algorithms}A shows the growth of time-to-solution (TTS) with number of variables, with additional details deferred to Methods.
We find Sparrow \cite{balint2014engineering} to be the most performant solver, with median TTS given by
\begin{align}
  T_c\left(n\right) = 2^{\left(\classicalexponent{}\pm\classicalexponenterror{}\right) n+\left(\classicalconstant{}\pm\classicalconstanterror{}\right)}~ \mathrm{ns},
\end{align}
where the error bars denote a $90\%$ confidence interval.
This scaling is a notable improvement over the scaling exponents \classicalexponentdsatur{} and \classicalexponentwalksatlm{} used in prior analyses of the prospects for a quantum advantage for random $k$-SAT~\cite{Campbell2019,Sami_8_SAT_QAOA}.
We remark that our benchmarking finds a lower exponent of $0.197$ for WalkSATlm than that reported Ref.~\cite{Sami_8_SAT_QAOA} due to our use of an optimized implementation \cite{cai2013improving, walksatlm_implementation}.

The parallelization behavior of Sparrow, and specifically the effective speedup obtained by running on multiple cores, was studied in Ref.~\cite{arbelaez2013using}.
Therein, the authors found that the effective speedup from parallelization is linear in the number of cores for structured boolean satisfiability problems, but sublinear for unstructured problem instances.
We allow for the parallelization of Sparrow in our analysis, and assume that Sparrow parallelizes as well as the most parallelizable unstructured problem instance (Rand-9) considered in Ref.~\cite{arbelaez2013using}.
The number of CPUs used by Sparrow is determined by setting their power budget ($\frac{280}{48}\approx5.8$ watts per CPU; see \cref{sec:classical_solvers}) equal to that required to perform real-time decoding of surface code (8 milliwatts per decoder) \cite{Barber2025}. We remark that the parallelization of Sparrow in the regime corresponding to our main results is near-perfect since the number of CPU cores is modest ($\leq\coresfaconespeeduptwo$).
See \cref{sec:number_of_decoders,sec:parallel_classical_solvers} for additional details.

\subsection{Quantum algorithm}

We consider QAOA, which solves satisfiability problems using a parameterized quantum state (shown in \cref{fig:algorithms}B),
\begin{align}
  \ket{\rm QAOA} &= U_{\rm QAOA} \ket{+}^{\otimes n}, \\
  U_{\rm QAOA} &= \prod_{l=1}^{p} e^{-i\beta_l H_M} e^{-i\gamma_l H_C},
  \label{eq:qaoa_parametrized_state}
\end{align}
where $\beta$ and $\gamma$ are free parameters; $p$ is the number of alternating layers, also called the QAOA depth; $\ket{+}^{\otimes n} \propto \sum_{x\in\set{0,1}^n} \ket{x}$ is a superposition over all computational basis states; $H_C$ is the cost Hamiltonian encoding the optimization problem; and we set the mixer Hamiltonian to $H_M = \frac12\sum_{j=1}^n X_j$ with $X_j = \op{0}{1}_j + \op{1}{0}_j$.

The cost Hamiltonian $H_C$ for $k$-SAT energetically rewards satisfying clauses,
\begin{equation}\label{eq:QAOA_cost_Hamiltonian}
  H_C = -\sum_{j=1}^m  \sum_{x \in \set{0,1}^n} C_j(x)\op{x},
\end{equation}
where each clause can be written in the form
\begin{equation}
  C_j(x) = \ell_{j1}(x_{j1})\lor \ell_{j2}(x_{j2})\lor \cdots \lor \ell_{jk}(x_{jk}).
  \label{eq:one_clause}
\end{equation}
Here $x_{ji}$ is the $i$-th variable addressed by clause $C_j$, and $\ell_{ji}(x_{ji})=x_{ji}$ or $\ell_{ji}(x_{ji})=\neg x_{ji} = 1 - x_{ji}$ depending on the clause.
Note that this notation is over-complete, as the same variable may be addressed by multiple clauses.

Ref.~\cite{Sami_8_SAT_QAOA} derives an analytical expression for instance-average success probability of QAOA with $p$ layers. Evaluating this expression with optimized parameters $\beta$, $\gamma$, Ref.~\cite{Sami_8_SAT_QAOA} obtains $\Pr^{\rm success}_{\rm QAOA} = 2^{-0.69p^{-0.32}n}$.
The expected TTS of QAOA is equal to the inverse of the success probability (up to a polynomial cost for QAOA circuit implementation).
As Fig.~\ref{fig:algorithms}C shows, QAOA with a moderate depth achieves a speedup over Sparrow.

The scaling of TTS for QAOA with problem size $n$ can be improved quadratically by boosting the success probability with amplitude amplification (AA)~\cite{brassard2000quantum}.
The combined QAOA+AA circuit, summarized in Fig.~\ref{fig:algorithms}D, applies the following operator $\frac{\pi}{4\sqrt{\Pr^{\rm success}_{\rm QAOA}}}$ times~\cite{brassard2000quantum}:
\begin{align}
  Q = U_{\rm QAOA} O_0 U_{\rm QAOA}^\dag O_C.
  \label{eq:QAOA+AA}
\end{align}
Here $O_0$ is a ``zero-state oracle'' that flips the sign of the all-0 state, $O_0\ket{x}=-\ket{x}$ if $x=(0,\cdots,0)$ and $O_0\ket{x} = \ket{x}$ otherwise; and the $O_C$ is a $k$-SAT oracle defined by $O_C\ket{x}=(-1)^{C(x)}\ket{x}$, where $C(x) = 1$ for solutions $x$ to the 8-SAT problem instance and $C(x)=0$ otherwise.
Leveraging the quadratic speedup from AA, QAOA+AA improves the time-to-solution for random $8$-SAT to $2^{0.345p^{-0.32}n}$.
The asymptotic speedup of QAOA and QAOA+AA over the state-of-the-art Sparrow solver as a function of QAOA depth $p$ is shown in \cref{fig:algorithms}E.

We note that we have reported an instance-averaged TTS for both the classical (Sparrow) and quantum (QAOA+AA) algorithms.
However, for a given instance, the required time to find a solution could be higher or lower than the average.
When running a classical algorithm in practice, we can estimate an upper bound on the required runtime and implement a check that stops the run early whenever a satisfying assignment is found.
In the quantum case, additional care is required to avoid ``overshooting'' the target state with amplitude amplificaton.
We show in Theorem~\ref{th:quantum_overshooting} of Methods how to prevent overshooting at the cost of quadrupling the quantum runtime, and include this overhead in our analysis.
For clarity of presentation, we use the instance-averaged TTS as the runtime for both classical and quantum algorithms; however, the crossover point does not change if both times are scaled by a constant.

\begin{figure*}
  {\includegraphics{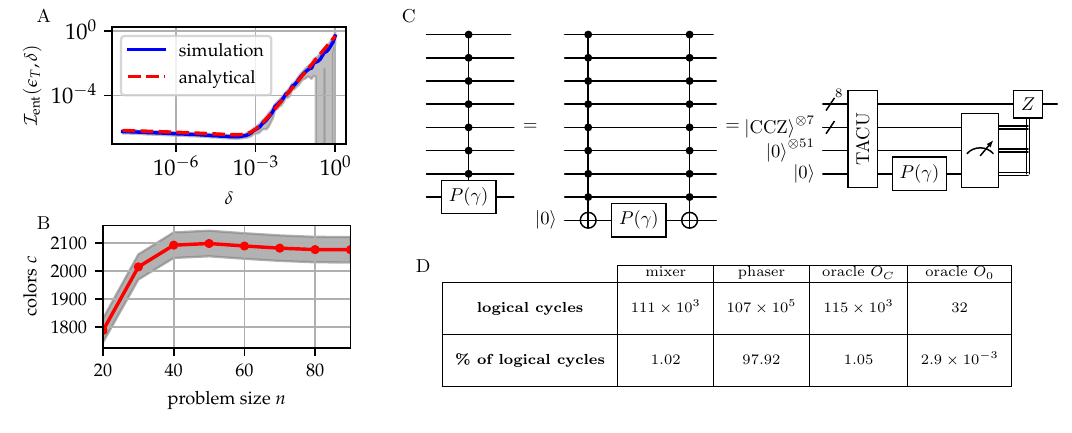}}
  \caption{\textbf{Compilation of QAOA+AA for 8-SAT.}
    \textbf{(A)} Given a $T$-state infidelity $\epsilon_T$ (here $\epsilon_T=10^{-8}$), we optimize the accuracy $\delta$ with which phase gates $P(\gamma) = e^{i\gamma\op{1}}$ are decomposed into a $\mathrm{Clifford}+T$ gate set by minimizing the entanglement infidelity $\mathcal{I}_{\mathrm{ent}}$ of the decomposition (see \cref{sec:optimal_decomp_arb_rotation} of the Methods).
    Increasing $\delta$ leads to greater decomposition error, while smaller $\delta$ increases the number of noisy $T$ gates in the decomposition.
    The dashed line shows analytical estimates for $\mathcal{I}_{\mathrm{ent}}$, while the solid line and shaded region shows the mean and standard deviation from numerical trials with $40$ random angles.
    \textbf{(B)} The number of colors $c$ empirically found to be necessary to color the clause collision graph for random instances of 8-SAT with different problem sizes $n$ and a clause-to-variable ratio of $r=\frac{m}{n}=176$.
    These colors are used to partition 8-SAT clauses into $c$ subsets such that every subset consists of clauses that address mutually disjoint sets of qubits.
    The value of $c$ sets a lower bound on the runtime of the QAOA phaser for $k$-SAT.
    \textbf{(C)} Schematic of the circuit to implement an 8-qubit phase gate $P_8(\gamma) = e^{i\gamma\op{1}^{\otimes 8}}$, which requires 7 CCZ states and 52 logical $\ket{0}$-state ancillas.
    The operation schematically labelled ``TACU'' addresses the top 8 qubits for only one logical cycle, and the final operation labelled ``$Z$'' indicates the application of Pauli-$Z$ correction gates as determined by measurement outcomes.
    These correction gates can be deferred to the end of the phaser, allowing a new batch of clauses to be dispatched at every logical cycle of the phaser (see \cref{sec:phase_op_resources} of the Methods and \cref{sec:TACU}).
    \textbf{(D)} The time budget for different components of the QAOA+AA circuit corresponding to the crossover point for a cubic speedup in \cref{fig:main_results}.
    The $k$-SAT phaser dominates this budget, highlighting the importance of its optimization.
  }
  \label{fig:techniques}
\end{figure*}

\subsection{Quantum runtime}

To compute the quantum TTS, we analyze the cost of implementing the QAOA+AA circuit in detail.
The quantum TTS $T_q$ is equal to the combined runtime of the four components in Eq.~\eqref{eq:QAOA+AA}, also shown in \cref{fig:algorithms}D:
\begin{align}
  T_q
  = \frac{\pi}{4\sqrt{\Pr^{\rm success}_{\rm QAOA}}}
  \left[2 p\left(T_{\mathrm{mixer}}+T_{\mathrm{phaser}}\right)+T_{O_C}+T_{O_0}\right].
  \label{eq:each_component}
\end{align}
Here $T_{\mathrm{mixer}}$, $T_{\mathrm{phaser}}$, $T_{O_C}$, and $T_{O_0}$ are, respectively, the runtimes of the QAOA mixer $e^{-i\beta H_M}$, the QAOA phaser $e^{-i\gamma H_C}$, the $k$-SAT oracle $O_C$, and the zero-state oracle $O_0$.

To evaluate the runtime of these components, we consider their fault-tolerant implementation in a surface code architecture with a $\mathrm{Clifford} + T + \mathrm{CCZ}$ gate set that is realized with lattice surgery \cite{horsman2012surface} and resource-state factories \cite{litinski2019magic}.
Our runtimes will be provided in units of the logical cycle time $T_{\mathrm{LC}} = d\times 1~\mu$s, or the time needed to perform $d$ rounds of syndrome measurement in a distance-$d$ surface code.
For reference, we find that code distances $d\approx30$ are sufficient for all crossovers reported in this work (see \cref{sec:distance_surface_code}), in agreement with findings elsewhere in the literature \cite{focus_beyond_quadratic,Gidney2021}.

The mixer acts independently on each qubit, and can thereby be reduced to $n$ parallel implementations of the single-qubit phase gate $P(\gamma) = e^{i\gamma\op{1}}$ (in the $X$ basis).
We consider the approximation of single-qubit phase gates in the $\textrm{Clifford}+T$ gate set with the ancilla-assisted mixed fallback method of Ref.~\cite{kliuchnikov2023shorter}, and identify the required decomposition accuracy for achieving a QAOA+AA circuit fidelity of 99\% with a depolarizing noise model (see \cref{fig:techniques}A and \cref{sec:optimal_decomp_arb_rotation} of the Methods).
Denoting the number of logical cycles required to thus implement $P(\gamma)$ to the desired accuracy by $n_P$, which is approximately equal to the number of $T$ gates of the decomposition, we use the method of Ref.~\cite{litinski2022active} to consume one $T$ state per logical cycle, and account for a sufficient number of $T$-state factories to match this $T$ state consumption rate.
The runtime of the mixer is then
\begin{align}
  T_{\mathrm{mixer}} = n_P \times T_{\mathrm{LC}}.
\end{align}
While the precise value of $n_P$ depends on the choice of QAOA depth $p$, for reference we note that $n_P\in[25, 29]$ for the cases analyzed in this work.

The phaser $e^{-i\gamma H_C}$ applies, for each clause $C_j$ of the form in Eq.~\eqref{eq:one_clause}, a phase $e^{i\gamma}$ to all bitstrings that satisfy the clause.
To minimize the runtime of the phaser, we use a graph coloring algorithm \cite{osti_960616} to partition clauses into subsets such that every subset consists of clauses that address mutually disjoint sets of qubits.
Specifically, construst a \textit{clause collision graph}, which associates each clause with the node of a graph, and draw an edge between any pair of clauses that address the same qubit.
Coloring the nodes in this graph thereby identifies, by color, subsets of clauses that can be applied in parallel.
A simple counting argument suggests that a $k$-SAT instance with a clause-to-variable ratio of $r=\frac{m}{n}$ should be colorable with $c\sim k^2 r$ colors: each clause has $k$ variables, each of which belong to $\frac{km}{n} = kr$ clauses on average.
The graph we construct can be thereby expected to have degree $\sim k^2 r$, and by Vizing's theorem \cite{vizing1964estimate} be colorable with $\sim k^2 r$ colors.
As we show in \cref{fig:techniques}B, for $k=8$, a clause-to-variable ratio of $r=\frac{m}{n}=176$, and $n\gtrsim40$, this argument empirically overestimates the number of colors by a factor of $\sim 5$, and that in practice clauses can be partitioned into $c\approx12r$ subsets of average size $\frac{m}{c}\approx\frac{n}{12}$.

\begin{figure*}
  \begin{tabular}{|c|c|c|c|c|c|}
    \multicolumn{6}{l}{A} \\ \hline
    \textbf{asymptotic speedup}
    & \textbf{QAOA depth} $p$
    & \textbf{problem size} $n$
    & \textbf{code distance} $d$
    & \textbf{physical qubits} ($\times 10^6$)
    & \textbf{crossover time} \\
    \hline\hline
    quadratic
    & \pfaconespeeduptwo
    & \crossoverqubitsfaconespeeduptwo
    & \distancefaconespeeduptwo
    & \Tpqubitsfaconespeeduptwo
    & \yearsfaconespeeduptwo{} y
    \\
    \hline
    cubic
    & \pfaconespeedupthree
    & \crossoverqubitsfaconespeedupthree
    & \distancefaconespeedupthree
    & \Tpqubitsfaconespeedupthree
    & \hoursfaconespeedupthree{} h
    \\ \hline
    quartic
    & \pfaconespeedupfour
    & \crossoverqubitsfaconespeedupfour
    & \distancefaconespeedupfour
    & \Tpqubitsfaconespeedupfour
    & \hoursfaconespeedupfour{} h
    \\ \hline
  \end{tabular}
  \\
  \includegraphics{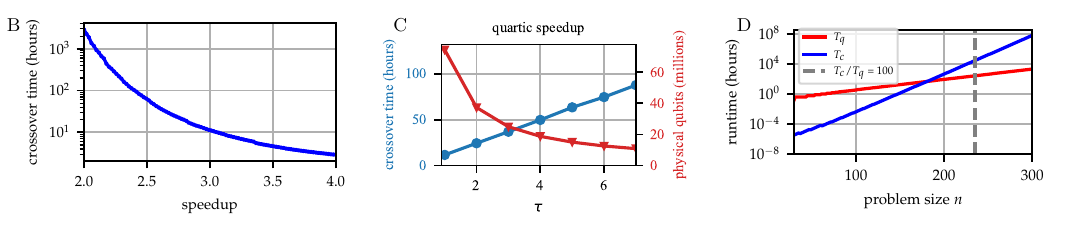}
  \caption{\textbf{Crossover times for random 8-SAT near the satisfiability threshold.}
    \textbf{(A)} Parameters for crossover points at which QAOA+AA and the classical solver Sparrow take, in expectation, equal time to solve random instances of 8-SAT near the satisfiability threshold.
    Crossover points are shown for QAOA depths $p$ that correspond to asymptotically quadratic, cubic, and quartic speedups.
    Here $n$ is the number of 8-SAT variables at the crossover point and $d$ is the surface code distance required to achieve a QAOA+AA circuit fidelity of $99\%$.
    We also report the number of physical qubits required when accounting for both logical ancilla qubits and resource-state factories from Ref.~\cite{litinski2019magic}, as well as the runtime of the algorithms at the crossover point.
    \textbf{(B)} The crossover time as a function of the asymptotic speedup for QAOA+AA over Sparrow.  The speedup is determined by the QAOA depth $p$, and the QAOA+AA algorithm has a crossover time of less than a
    day for a large range of speedups.
    \textbf{(C)} Different choices of a spacetime tradeoff in the QAOA+AA algorithm can reduce qubit overheads at a cost of increasing the crossover time.
    Specifically, we show how the time and space requirements for the crossover point with an asymptotically quartic speedup vary with respect to a ``slowdown factor'' $\tau$ that controls the number of logical cycles between dispatched clauses in the $k$-SAT phaser.
    Dispatching clauses at a slower rate (increasing $\tau$) reduces the resource-state consumption rate of the phaser, thereby requiring fewer resource-state factories and allowing for more reuse of ancilla qubits that are devoted to applying phases.
    \textbf{(D)} Classical runtimes $T_c$ for Sparrow and quantum runtime $T_q$ for QAOA+AA as a function of the problem size $n$.
    Here the QAOA depth $p$ is chosen for an asymptotically quartic speedup.
    The dashed vertical line indicates the problem size, $n=\qubitshundredlogicalcyleone$, for which $T_c/T_q = 100$.
  }
  \label{fig:main_results}
\end{figure*}

We now consider the time required to apply the phase for each clause.
The phase for each clause can be implemented by a $k$-qubit phase gate $e^{i\gamma \op{1}^{\otimes k}}$ that is sandwiched by bit-flip (Pauli-$X$) gates on variables that are negated in the clause.
In turn, the $k$-qubit phase gate can nominally be implemented by sandwiching a single-qubit phase gate $P(\gamma)$ between two $(k+1)$-qubit Toffoli gates.
In practice, these two Toffoli gates can be merged into a single $k$-qubit temporary-AND-compute-and-uncompute (TACU) gadget that ``dispatches'' the phase gate to an ancilla qubit, measures that ancilla qubit, and adaptively applies Pauli-$Z$ corrections to the $k$ variable qubits, shown schematically in \cref{fig:techniques}C.
We optimize the two-qubit TACU gadget of Ref.~\cite{litinski2022active} to construct a $k$-qubit TACU gadget that consumes $k-1$ CCZ states, addresses the $k$-SAT variable qubits for one logical cycle, and can be executed in a total of $n_P + 4\lceil\log_2 k\rceil$ logical cycles (see \cref{sec:TACU}).
This gadget allows us to dispatch one batch of clauses at each logical cycle of the $k$-SAT phaser, deferring Pauli-$Z$ corrections to the end of the phaser.
In total, the phaser can be implemented by dispatching clauses for $c$ logical cycles, waiting for all gadgets to complete, and applying adaptive Pauli-$Z$ corrections, such that the total runtime of the phaser is
\begin{align}
  T_{\mathrm{phaser}} = (c  + n_P + 4\lceil\log_2(k)\rceil) \times T_{\mathrm{LC}}.
\end{align}
We provide additional details in \cref{sec:phase_op_resources} and \cref{sec:TACU}.

The $k$-SAT oracle $O_C$ flips the sign of states that satisfy $m$ clauses, each of which is an OR of $k$ bits.
Similarly to the phaser, we use a $k$-qubit TACU gadget to flag the satisfaction of each clause with an ancilla qubit, and use a graph coloring algorithm to identify clauses whose satisfaction can be computed in parallel.
In turn, the ancilla qubits that flag the satisfaction of individual clauses can be pairwise AND-ed in a binary tree to flag the satisfaction of all clauses with one qubit, at which point the oracle phase is applied with a single-qubit Pauli-$Z$ gate.
A time-optimal implementation of the $k$-SAT oracle leads to an increase in both the CCZ consumption rate and the number of computational ancilla qubits required by the QAOA+AA algorithm, thereby greatly increasing space overheads.
Our implementation of the oracle therefore includes intentional delay times to reduce its CCZ consumption rate and allow for the recycling of ancilla qubits that compute the satisfaction of clauses.
Altogether, up to minor corrections (see  \cref{sec:oracle-bounds}) we find that the oracle can be implemented with runtime
\begin{align}
  T_{O_C} = 4c\log_2\left(\frac{km}{c}\right) \times T_{\mathrm{LC}}
\end{align}
without exceeding the qubit requirements of the phaser.

The zero-state oracle $O_0$ is equal, up to conjugation by Pauli-$X$ gates, to an $n$-qubit multi-control $\pi$-phase ($Z$) gate.
This gate can be implemented with a multi-qubit TACU gadget that consumes $n-1$ CCZ states.
The total runtime of the zero-state oracle is then
\begin{align}
  T_{O_0} = 4 \log_2 n \times T_{\mathrm{LC}}.
\end{align}
As illustrated by the time budget in \cref{fig:techniques}D, the runtime $T_q$ is dominated by the time to implement the QAOA phaser.

\subsection{Qubit requirements}

The qubit requirements of the QAOA+AA algorithm are dominated by two contributions: the number of logical ancilla qubits required for TACU gadgets, and the number of resource-state factories required to match $T$ and CCZ consumption rates of the algorithm.
Both contributions are, in turn, determined by the number of TACU gadgets that run in parallel as they are dispatched in the phaser.
If $s$ TACU gadgets dispatched at every logical cycle and each gadget runs for $\lambda$ logical cycles, the qubits used for the first batch of gadgets can be used for the batch at logical cycle $\lambda+1$, such that the total number of TACU gadgets that run in parallel is
\begin{align}
  n_{\mathrm{jobs}} = s\times \lambda.
  \label{eq:jobs}
\end{align}
For the cases analyzed in this work, $s \approx \frac{m}{c} \approx \frac{n}{12} \lesssim 20$ and $\lambda = 4\lceil\log_2(k)\rceil + n_P \le 41$.

Every TACU gadget in the phaser uses $\lfloor\frac{13}{2}k\rfloor$ logical ancilla qubits (see \cref{sec:TACU}).
For resource-state factories, we use the (15-to-1)$^4_{13,5,5}\times$(20-to-4)$_{27,13,15}$ $T$ factory and (15-to-1)$^6_{13,7,7}\times$(8-to-CCZ)$_{25,15,15}$ CCZ factory developed in Ref.~\cite{litinski2019magic}, and account for a sufficient number of factories to produce one $T$ state per TACU gadget per logical cycle.
The number of qubits required for this $T$-state production rate is sufficient to produce CCZ states at the rate that is required in CCZ consumption stages of the QAOA+AA algorithm, so we repurpose $T$ state factories into CCZ state factories on an as-needed basis.
We note that each of our $T$ state factories take just over 5 logical cycles to produce 4 $T$ states, so we need about 1.3 factories per TACU gadget.
In principle, one could slow down the $T$ state consumption stages of the QAOA+AA algorithm to consume $T$ states at a rate that is commensurate with the production rate of one factory.
While simplifying architectural (e.g., qubit layout and routing) considerations for the QAOA+AA algorithm, we note that the optimization of resource-state factories is an active area of research with notable recent advancements \cite{gidney2024magic} and pathways for further improvement that have not been accounted for in this work.
We therefore defer a more detailed analysis of architectural considerations with improved resource-state factories to future work.

\subsection{Quantum-classical crossover}

Fig.~\ref{fig:main_results}A provides a summary of quantum resources at the crossover point at which $T_q = T_c$ for choices of the QAOA depth $p$ that correspond to asymptotically quadratic, cubic, and quartic speedups over state-of-the-art classical solvers.
Even when accounting for the benefits of classical parallelization, our analysis finds, for example, that QAOA+AA is capable of outperforming state-of-the-art classical solvers on 8-SAT instances with $n = \crossoverqubitsfaconespeedupfour{}$ with a few hours of runtime using \Tpqubitsfaconespeedupfour{} million physical qubits.
A more detailed dependence of the crossover time on the asymptotic speedup is provided in
\cref{fig:main_results}B.
We note that the parameters for the crossover point depend on a choice of spacetime tradeoffs in the quantum algorithm, and show in \cref{fig:main_results}C how the crossover time and qubit overheads change if the QAOA phaser is slowed down by a factor of $\tau$ to decrease the rate at which the phaser dispatches TACU gadgets.
A lower dispatch rate reduces the number of parallel jobs $n_{\mathrm{jobs}}$ that run during the phaser, thereby reducing both the number of ancillas that are required for these jobs, and the number of resource-state factories required to maintain the resource-state consumption rate of the phaser. Finally, in \cref{fig:main_results}D, we show how the gap between classical and quantum runtime grows with $n$. For example, $n=\qubitshundredlogicalcyleone$ corresponds to $100\times$ speedup over classical with quantum runtime of \timehundredlogicalcyleone{} hours.

\begin{table*}
  \begin{tabular}{|>{\centering\arraybackslash}p{2.5cm}|c||>{\centering\arraybackslash}p{2.5cm}|>{\centering\arraybackslash}p{1.5cm}|>{\centering\arraybackslash}p{2cm}|>{\centering\arraybackslash}p{2cm}|>{\centering\arraybackslash}p{2.5cm}|}
    \cline{3-7}
    \multicolumn{1}{c}{}
    & \multicolumn{1}{c|}{}
    & \multirow{2}{=}{\centering \textbf{no improvements}}
    & \multirow{2}{=}{\centering smaller resource factories}
    & \multirow{2}{=}{\centering reduced logical cycle time}
    & \multirow{2}{=}{\centering $p_{\mathrm{phys}} = 10^{-4}$}
    & \multirow{2}{=}{\centering \textbf{combined improvements}}
    \\
    \multicolumn{1}{c}{} & \multicolumn{1}{c|}{} & & & & &
    \\ \cline{3-7} \noalign{\vskip\doublerulesep} \hline
    \multirow{3}{=}{\centering \textbf{realistic classical parallelization} \cite{arbelaez2013using}}
    & \textbf{problem size} $n$
    & \improvoptimisticcrossoverqubitsfactwospeedupfournumberthree{}
    & \improvoptimisticcrossoverqubitsfactwospeedupfournumbertwo{}
    & \improvoptimisticcrossoverqubitsfactwospeedupfournumberzero{}
    & \improvoptimisticcrossoverqubitsfactwospeedupfournumberone{}
    & \improvoptimisticcrossoverqubitsfactwospeedupfournumberfour{}
    \\ \cline{2-7}
    & \textbf{crossover time} $T_q$
    & \improvoptimistichoursfactwospeedupfournumberthree{} h
    & \improvoptimistichoursfactwospeedupfournumbertwo{} h
    & \improvoptimistichoursfactwospeedupfournumberzero{} h
    & \improvoptimistichoursfactwospeedupfournumberone{} h
    & \improvoptimistichoursfactwospeedupfournumberfour{} h
    \\ \cline{2-7}
    & \textbf{physical qubits ($\times 10^6$)}
    & \improvoptimisticTpqubitsfactwospeedupfournumberthree{}
    & \improvoptimisticTpqubitsfactwospeedupfournumbertwo{}
    & \improvoptimisticTpqubitsfactwospeedupfournumberzero{}
    & \improvoptimisticTpqubitsfactwospeedupfournumberone{}
    & \improvoptimisticTpqubitsfactwospeedupfournumberfour{}
    \\ \hline \hline
    \multirow{3}{=}{\centering \textbf{perfect classical parallelization}}
    & \textbf{problem size} $n$
    & \improvpessimisticcrossoverqubitsfactwospeedupfournumberthree{}
    & \improvpessimisticcrossoverqubitsfactwospeedupfournumbertwo{}
    & \improvpessimisticcrossoverqubitsfactwospeedupfournumberzero{}
    & \improvpessimisticcrossoverqubitsfactwospeedupfournumberone{}
    & \improvpessimisticcrossoverqubitsfactwospeedupfournumberfour{}
    \\ \cline{2-7}
    & \textbf{crossover time} $T_q$
    & \improvpessimisticdaysfaconespeedupfournumberthree{} d
    & \improvpessimisticdaysfaconespeedupfournumbertwo{} d
    & \improvpessimisticdaysfaconespeedupfournumberzero{} d
    & \improvpessimisticdaysfaconespeedupfournumberone{} d
    & \improvpessimistichoursfaconespeedupfournumberfour{} h
    \\ \cline{2-7}
    & \textbf{physical qubits ($\times 10^6$)}
    & \improvpessimisticTpqubitsfaconespeedupfournumberthree{}
    & \improvpessimisticTpqubitsfaconespeedupfournumbertwo{}
    & \improvpessimisticTpqubitsfaconespeedupfournumberzero{}
    & \improvpessimisticTpqubitsfaconespeedupfournumberone{}
    & \improvpessimisticTpqubitsfaconespeedupfournumberfour{}
    \\ \hline
  \end{tabular}
  \caption{\textbf{Impact of potential future advances on crossover against the MareNostrum 5 GPP supercomputer.}
    Here we consider the impact of the potential future improvements in quantum computing on estimated crossover points for 8-SAT with QAOA+AA with an asymptotically quartic speedup.
    Unlike our main results presented in Fig.~\ref{fig:main_results}, here we do not restrict the energy consumption of the classical solver and instead estimate the time-to-solution for the classical solver if it was run on all $725,760$ cores of MareNostrum 5 GPP (35th largest supercomputer in the world) \cite{top500}.
    For the classical algorithm, we consider both ``realistic'' parallelization of Sparrow based on the most parallelizable unstructured problem instance in Ref.~\cite{arbelaez2013using} (see \cref{sec:parallel_classical_solvers}), and perfect classical parallelization, which assumes that using $n_{\mathrm{cores}}$ CPUs speeds up the classical solver by a factor of $n_{\mathrm{cores}}$.
    For the quantum algorithm, we consider three potential improvements, namely: a fivefold reduction in the spacetime cost of a resource state, a fivefold reduction in the logical cycle time, and a tenfold reduction in physical error rates.
    As both sobering and aspirational points of reference, we also consider the possibility of achieving none or all these improvements.
    For each scenario, we report the number of variables ($n$), quantum runtime ($T_q$), and the required number of physical qubits (in millions) at the crossover point.
    We set the slowdown factor $\tau=2$ (discussed in Fig.~\ref{fig:main_results}) in the rows corresponding to realistic parallelization to reduce the memory footprint at the cost of increased quantum runtimes, and set $\tau=1$ for perfect classical parallelization.
  }
  \label{tab:improvements}
\end{table*}

\subsection{Opportunities for quantum resource reduction}

Our main results show that taking gate parallelization into account and carefully optimizing the quantum circuit for fault-tolerant execution can substantially reduce the crossover point for solving $8$-SAT using the QAOA+AA algorithm implemented on the surface code.
However, our analysis has so far relied only on existing techniques and conservative assumptions about hardware performance ($10^{-3}$ physical error rate and $1~\mu$s cycle time).
We now discuss the impact of potential future improvements to the surface code and hardware, focusing on three realistic improvements: reduced overheads for resource state factories via magic state cultivation \cite{gidney2024magic}, reduced logical cycle times through improved decoding and algorithmic fault-tolerance \cite{zhou2024algorithmic}, and reduced physical error rates.
As summarized in \cref{tab:improvements}, we estimate that these improvements enable a crossover time of only \improvoptimistichoursfactwospeedupfournumberfour{} hours against a classical solver parallelized over all $725,760$ CPU cores of MareNostrum 5 GPP, which is the 35th largest supercomputer in the world at the time of writing and one of the largest CPU-only systems, with total energy consumption of 5.75 MW~\cite{top500}.

\subsubsection{Better resource-state factories}

Recently, Gidney et al.~introduced magic state cultivation \cite{gidney2024magic} as an alternative to distillation.
This protocol leads to an order-of-magnitude reduction in the spacetime footprint required to produce a high-quality $T$ state.
At physical error rates of $10^{-3}$, however, none of the configurations considered in Ref.~\cite{gidney2024magic} achieve the requisite $T$ state fidelity ($\approx 10^{-12}$) for running QAOA+AA for 8-SAT.
Nonetheless, we note that the distillation protocols used for our main results are in fact two-stage distillation protocols, and one could imagine similarly performing magic-state cultivation as the first step in a multi-stage protocol to further improve the fidelity of a resource state.
Moreover, standalone cultivation is a new protocol with potential room for refinement, and it may be possible to achieve cultivation fidelities sufficiently high to obviate the need for secondary distillation.
With these considerations in mind, \cref{tab:improvements} shows the impact of reducing spacetime footprint of a resource state by a (modest) factor of 5 over the distillation-only approach considered in \cref{fig:main_results}.

\subsubsection{Reducing logical cycle times}

Our main results assumed a logical cycle time of $d\times 1~\mu$s, where $d$ is the code distance of the surface code.
This logical cycle time sets an effective clock speed for the fault-tolerant quantum computing architecture that we consider.
A clear way of speeding a quantum algorithm is to decrease this logical cycle time.
For example, one can imagine incorporating improved decoding schemes such as algorithmic fault tolerance \cite{zhou2024algorithmic} in order to reduce the requirement of performing $d$ syndrome measurement for every logical cycle.
The surface code error correction cycle time of $1~\mu$s may also be reduced due to algorithmic or hardware innovations, most notably reduced measurement times in superconducting platforms \cite{ye2024ultrafast}.
We consider the impact of reducing logical cycle times by a factor of 5 in \cref{tab:improvements}.

\subsubsection{Lower physical error rates}

Quantum hardware has matured remarkably over the last decade \cite{acharya2024quantum} and we expect this progress to continue. In particular, two-qubit error rates of $6\times 10^{-4}$ have been demonstrated in fluxonium superconducting qubits~\cite{2407.15783} and $3\times 10^{-4}$ in trapped ions~\cite{2407.07694}.
In \cref{tab:improvements}, we report how a reduction of physical error rates from $10^{-3}$ to $10^{-4}$ would impact the performance of QAOA+AA for 8-SAT.
We note that a reduction of this kind makes the cultivation protocol considered above sufficient in its current form to solve 8-SAT with QAOA+AA.
The results of \cite{gidney2024magic} indicate that physical error rates of $10^{-4}$ lead to $T$-states with fidelity of $4\times 10^{-11}$ and a 10-fold reduction in the spacetime cost of a $T$ state.
We consider the impact of reducing physical error rates to $10^{-4}$ (though still keeping the same distillation protocols that were used in our main results) in \cref{tab:improvements}.

\subsubsection{Additional directions}

In addition to the specific improvements discussed above, other developments and optimizations may yield additional quantum resource reductions.
We do not analyze these reductions in \cref{tab:improvements} as their impact is more difficult to quantify.
For example, further optimizing the circuit gadgets used in this work may lead to reduced time and ancilla-qubit overheads.
Aside from the production of resource states, most quantum processing in QAOA is devoted to implementing the single-qubit phase gate $P(\gamma)$.
Reducing the cost of implementing this gate in a surface code architecture---for example, by finding more gate-efficient or parallelizable implementations in a $\mathrm{Clifford}+T$ gate set \cite{kliuchnikov2024minimalentanglementinjectingdiagonal}, or by finding more efficient implementations in a different gate set \cite{kliuchnikov2023shorter, Gidney2019efficientmagicstate} that is compatible with the surface code architecture---may therefore yield substantial speedups for QAOA+AA.
Finally, there has been a great interest in recent years in the development of new quantum low-density-parity-checks codes \cite{Breuckmann_2021} and concatenated quantum codes \cite{yoshida2024, goto2024high}. Benefits of these codes include more favorable encoding rates and higher relative code distances which may reduce overheads for universal quantum computation \cite{Panteleev_2021, cowtan2025}.

\section{Discussion}
\label{sec:discussion}

Our result is the first complete and realistic resource estimate for a fault-tolerant quantum advantage in optimization. More generally, our findings indicate that low-degree polynomial speedups may be practical on realistic large-scale fault-tolerant quantum computers. This has outsized implications for domains like optimization, in which complexity-theoretic evidence suggests that broadly applicable exponential speedups are unlikely. However, we expect our findings to translate to other domains where low-degree polynomial speedups are available such as planted inference~\cite{2406.19378} and machine learning~\cite{Hastings2020}.

We focus our study on random satisfiability, which is well-studied and enables careful analysis. However, an important future direction is validating that the speedups we estimate also hold for hard industrially-relevant optimization problems, which are more challenging to characterize mathematically and study rigorously.

Our results rely on recent progress in surface code and decoding.
The parallelization of resource-state factories and careful scheduling of quantum subroutines, in particular, proved to be an effective strategy for minimizing computation time while avoiding large space overhead.
Continued advancements in quantum hardware and error correction can be expected to further reduce resource requirements for fault-tolerant quantum computation, and may thereby make practical the deployment of quantum algorithms for which a fault-tolerant quantum advantage was previously deemed implausible \cite{focus_beyond_quadratic}.

While the resource requirements we find are unlikely to be satisfied by near-term or early-fault-tolerant quantum processors, there are further optimizations one can consider.
In particular, we can simply run QAOA with very large $p$ such that $\Pr^{\rm success}_{\rm QAOA}$ is a large constant (e.g., $2/3$) and bypass amplitude amplification.
Ref.~\cite{2503.09563} shows that a simple parameter extrapolation procedure appears to lead to good QAOA parameters for large $p$, enabling execution at very large depth with no instance-specific parameter optimization.
Further investigation is need to ascertain the potential of this scheme and to understand its resource requirements.

There are two major limitations of our work.
First, we rely on the power-law decay of the asymptotic quantum scaling exponent with QAOA depth $p$, reported in Ref.~\cite{Sami_8_SAT_QAOA}.
The evidence in Ref.~\cite{Sami_8_SAT_QAOA} is for small QAOA depths $p$ and may not persist as $p$ grows.
Second, we do not consider the cost of routing
qubits.
The overhead of routing may become non-negligible given the large amount of parallelization in our quantum algorthm, though it may be overcome by careful optimization of the qubit layout and overall fault-tolerant architecture.

\section{Methods}
\label{sec:methods}

\subsection{Amplitude amplification with unknown initial success probability}

Here we present a procedure inspired by Ref.~\cite{brassard2000quantum} for ensuring that we do not overshoot in the amplitude amplification process. The overhead comes from needing to repeat the amplification process multiple times to account for uncertainty in the success probability of QAOA.

\begin{theorem}
  \label{th:quantum_overshooting}
  Let $P_{\text{SAT}}$ be the projector on the space of satisfying assignments and $U_{\text{SAT}}$ an oracle for flagging satisfying assignments. Suppose $|\psi\rangle = U_{\psi}|\mathbf{0}\rangle$ is a quantum state satisfying $p = \lVert P_{\text{SAT}}|\psi\rangle\rVert_2^2$. Then there is a quantum algorithm that with probability at least $1- \delta$, makes at most
  \begin{align*}
    \lceil \frac{\pi}{4\sqrt{p}}\log_2(1/\delta)\rceil
  \end{align*} queries to $U_{\psi}$, $U_{\psi}^{\dagger}$, $U_{\text{SAT}}$, and $U_{\text{SAT}}^{\dagger}$ to output a satisfying assignment.
\end{theorem}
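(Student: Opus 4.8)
The plan is to follow the exponential-search strategy of Brassard et al.~\cite{brassard2000quantum}, combined with a confidence-boosting repetition, and to phrase the whole analysis inside the two-dimensional invariant subspace of the amplitude-amplification operator. First I would set $\theta = \arcsin\sqrt{p}\in(0,\pi/2]$ and decompose $|\psi\rangle = \sin\theta\,|\psi_{\mathrm{good}}\rangle + \cos\theta\,|\psi_{\mathrm{bad}}\rangle$, where $|\psi_{\mathrm{good}}\rangle \propto P_{\mathrm{SAT}}|\psi\rangle$ and $|\psi_{\mathrm{bad}}\rangle\propto (\mathbb{1}-P_{\mathrm{SAT}})|\psi\rangle$. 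The operator $Q = U_\psi O_0 U_\psi^\dagger U_{\mathrm{SAT}}$ restricts to this plane as a rotation by $2\theta$, so applying $Q^j$ to $|\psi\rangle$ and measuring yields a satisfying assignment (certified by one call to $U_{\mathrm{SAT}}$) with probability $\sin^2((2j+1)\theta)$. Each application of $Q$ costs a constant number of queries to $U_\psi,U_\psi^\dagger,U_{\mathrm{SAT}},U_{\mathrm{SAT}}^\dagger$, so the query budget is governed by the total number of $Q$-applications.

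The obstacle is that $p$, hence $\theta$, is not known exactly, so a fixed iteration count $j$ risks ``overshooting'' --- pushing $(2j+1)\theta$ past $\pi/2$, where $\sin^2$ turns over and the success amplitude falls again. To defuse this I would randomize the iteration count within a window. Organize the algorithm into rounds; in a round at scale $M$, draw $j$ uniformly from $\{0,\dots,M-1\}$, run $Q^j|\psi\rangle$, measure, and verify. The per-round success probability averages to
\begin{equation}
  \frac1M\sum_{j=0}^{M-1}\sin^2\!\big((2j+1)\theta\big) = \frac12 - \frac{\sin(4M\theta)}{4M\sin(2\theta)},
\end{equation}
which is at least $\tfrac12$ whenever $\sin(4M\theta)\le 0$, i.e.\ whenever $M$ lands in the half-period window $M\theta\in[\pi/4,\pi/2]$. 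The key point is that this window spans a full octave (its endpoints differ by a factor of two), so it is robust to ignorance of $\theta$: taking $M = \lceil \pi/(4\theta)\rceil$ when a good estimate of $\theta$ is available gives $4M\theta\in[\pi,\pi+4\theta)\subset[\pi,2\pi)$ for $p\le\tfrac12$ and hence a clean per-round success of $\ge\tfrac12$; in the genuinely unknown case, a geometric schedule $M_t = 2^t$ is guaranteed to place some $M_{t^\ast}$ inside $[\pi/(4\theta),\pi/(2\theta))$, after which rounds succeed with probability $\ge\tfrac12$.

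With a per-round success probability of at least $\tfrac12$ in hand, I would run $\lceil\log_2(1/\delta)\rceil$ independent rounds; since verification with $U_{\mathrm{SAT}}$ lets us halt the moment a satisfying assignment appears, the overall failure probability is at most $2^{-\lceil\log_2(1/\delta)\rceil}\le\delta$. For the query count, each round uses at most $M-1$ applications of $Q$, and because $\theta\ge\sqrt p$ we have $M=\lceil\pi/(4\theta)\rceil\le\lceil\pi/(4\sqrt p)\rceil$; multiplying by the number of rounds and absorbing the rounding yields the claimed bound $\lceil\frac{\pi}{4\sqrt p}\log_2(1/\delta)\rceil$.

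I expect the main obstacle to be matching the exact constant $\pi/4$ rather than a loose $O(1/\sqrt p)$. The randomized-window argument produces this constant cleanly only when the window is located without a search --- that is, when we exploit the analytically known (approximate) value of $p$ from Ref.~\cite{Sami_8_SAT_QAOA} to fix $M=\lceil\pi/(4\theta)\rceil$ directly, with the randomization over $\{0,\dots,M-1\}$ supplying robustness against mild mis-estimation. If one instead insists on the fully unknown-$p$ regime and runs the geometric schedule, summing the iteration counts of the discarded smaller scales and accounting for the factor-of-two uncertainty in locating the octave inflates the budget by a bounded constant (the source of the fourfold runtime overhead quoted in the main text); reconciling that geometric sum with the stated clean prefactor, while preserving the no-overshoot robustness, is the delicate bookkeeping step.
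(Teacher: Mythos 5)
Your proposal lives in the same family as the paper's proof---both are descendants of the exponential-search idea in Brassard et al.---but the mechanisms differ, and in the regime the theorem actually targets your argument does not reach the stated bound. The paper's proof is fully deterministic in its iteration counts: at stage $i$ it hypothesizes $\theta_a\in[\theta_i/2,\theta_i]$ with $\theta_i=(\pi/4)/2^{i-1}$, runs exactly $\lceil \pi/(8\theta_i)-\tfrac12\rceil$ amplification rounds, repeats that stage $\lceil\log_2(1/\delta)\rceil$ times, and then halves $\theta_i$. Because each stage uses $\approx\pi/(8\theta_i)$ iterations (not $\pi/(4\theta_i)$ or $\pi/(2\theta_i)$), the geometric sum over stages telescopes to $\sum_{i}\lceil 2^i\pi/(8\theta)-\tfrac12\rceil\le\lceil\pi/(4\sqrt p)\rceil$ per confidence repetition, which is exactly the claimed constant. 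Your randomized-$j$ variant with the averaged success probability $\tfrac12-\sin(4M\theta)/(4M\sin 2\theta)$ is a legitimate alternative route to a per-round success probability bounded below by a constant, but it is not the route the paper takes.

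The concrete gap is in your own accounting. The theorem is needed precisely because the instance-level $p$ is \emph{not} known (the surrounding text: ``to account for uncertainty in the success probability of QAOA''), so your fixed-$M=\lceil\pi/(4\theta)\rceil$ branch, which presumes knowledge of $\theta$, does not prove the statement. In the genuinely unknown-$p$ branch you run the geometric schedule $M_t=2^t$; the good window $[\pi/(4\theta),\pi/(2\theta))$ contains a power of two that may be as large as $\approx\pi/(2\theta)$, each round at scale $M_t$ costs up to $M_t-1$ iterations in the worst case, and the sum over all scales up to $t^\ast$ is $<2^{t^\ast+1}<\pi/\theta$ per confidence repetition---a factor of $4$ above $\pi/(4\sqrt p)$. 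You flag this as ``delicate bookkeeping'' but do not close it, and closing it is not mere bookkeeping: with worst-case (rather than expected) iteration counts and a dyadic schedule of window sizes, the randomized variant genuinely cannot be pushed down to the $\pi/4$ prefactor without changing the per-stage iteration count the way the paper does. You also misattribute the paper's ``quadrupling of the runtime'': that factor is $\lceil\log_2(1/\delta)\rceil=4$ from the choice $\delta=2^{-4}$, not slack in a geometric sum. To repair your proof, replace the randomized rounds by the paper's deterministic per-stage count $\lceil\pi/(8\theta_i)-\tfrac12\rceil$ targeted at the dyadic interval $[\theta_i/2,\theta_i]$, and verify the per-stage success probability claim directly from $\sin^2((2m+1)\theta_a)$.
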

\begin{proof}
  Recall that if $\theta_{a} = \sin^{-1}(\lVert P_{\text{SAT}}|\psi\rangle\rVert_2)$ for a state $|\psi\rangle$, then $m$ rounds of amplitude amplification with $U_{\psi}$ and $U_{\text{SAT}}$ boosts the amplitude to $\sin((2m+1)\theta_a)$ \cite{brassard2000quantum}.

  Begin by measuring $|\psi\rangle$ in the computational basis $\lceil \log(1/\delta)\rceil$ times. If $ p \in [\frac{1}{2}, 1]$, then we will find a satisfying assignment with probability at least $1- \delta$.  If no solution is found, then take $\theta = \sin^{-1}(\frac{1}{\sqrt{2}})$ and perform
  \begin{align}
    \lceil \frac{\pi}{8\theta} - \frac{1}{2}\rceil
  \end{align}
  round of amplitude amplification and then measure the result. This will ensure that if $\theta_a \in [\frac{\theta}{2}, \theta]$, then the probability of success will be boosted to at least $\frac{1}{2}$. We then repeat this $\log_2(1/\delta)$ times, which ensures that if $\theta_{a} \in [\frac{\theta}{2}, \theta]$, we will see a satisfying bitstring with probability at least $1-\delta$.
  If we get a satisfying assignment, then we stop. Otherwise, we continue to halve $\theta$ and repeat the above until a satisfying bit string is found.

  Let $k = \lfloor \log_2(\theta/\theta_a) \rfloor$.
  With probability at least $1-\delta$, the number of calls to the unitaries specified in theorem will be at most

  \begin{align*}
    \lceil\log_2(1/\delta)\rceil & \cdot \sum_{i=0}^{k} \lceil \frac{2^i\pi}{8\theta} - \frac{1}{2}\rceil \leq  \\
    & \leq  \lceil\log_2(1/\delta)\rceil\sum_{i=0}^{k} \lceil \frac{\pi}{8\sqrt{p}\cdot 2^{k-i}} - \frac{1}{2}\rceil \\
    &\leq  \lceil\log_2(1/\delta)\rceil\lceil \frac{\pi}{4\sqrt{p}} \rceil.
  \end{align*}
  Note only the interval containing $\theta_a$ needs to succeed for us to terminate successfully in the above specified time.
\end{proof}
We set $\delta = 1/2^4=0.0625$ for the analysis in the main text, which corresponds to an overhead of $\lceil\log_2(1/\delta)\rceil = 4$ over ordinary Grover.

\subsection{Optimal decomposition accuracy for phase gates}
\label{sec:optimal_decomp_arb_rotation}

In the context fault-tolerant quantum computing, implementing arbitrary rotations is challenging due to the constraints of the fault-tolerant universal gate set for a given architecture.
Typically, these gate sets include the Clifford group gates (such as the Hadamard, Phase, and CNOT gates) and a non-Clifford gate such as the $T$-gate.
While the Clifford gates are relatively easy to implement fault-tolerantly, they are not sufficient for universal quantum computation.
The inclusion of the $T$-gate allows for universality, enabling the approximation of any quantum operation to arbitrary precision at the cost of additional overhead such as magic state distillation.

The Solovay-Kitaev theorem provides a general method for approximating any single-qubit unitary operation using a finite gate set \cite{dawson2005solovay}.
While this method is general, it is not always the most efficient in practice
\cite{ross2016optimal, kliuchnikov2023shorter, Bocharov_efficient_synthesis_2015}.
For a fixed decomposition accuracy $\delta$, the required number of $T$ gates, $\mathcal{N}_T$, typically scales as
\begin{align}
  \mathcal{N}_T = b \log_2\delta^{-1} + c,
  \label{eq:T_gate_count}
\end{align}
where $b,c$ are positive real numbers whose precise value depends on the decomposition scheme \cite{ross2016optimal, Bocharov_efficient_synthesis_2015}.
In this section, we consider two decompositions schemes with $(b,c)=(3,9.19)$ and $(0.57,8.83)$, which are studied in detail in Refs.~\cite{ross2016optimal} and \cite{kliuchnikov2023shorter}, respectively.

We can decompose a phase gate into a sequence of $X$ and $Z$ rotations, where each rotation consumes $T$ states and addresses the rotated qubit for one logical cycle \cite{litinski2022active}.
Treating the infidelity $\epsilon_T$ of a faulty $T$ state as a probability of a depolarization error after consuming this $T$ state, we combine the infidelity after $\mathcal{N}_T$ depolarization channels (see \cref{sec:entanglement_fidelity}) with the decomposition error $\delta$ of a phase gate to obtain an overall entanglement infidelity of decomposition,
\begin{align}
  \mathcal{I}_{\mathrm{ent}}(\epsilon_T, \delta)
  =
  1 - \frac{1}{4}(1+3(1-\epsilon_T)
  ^{\mathcal{N}_T})
  \left (1-\delta^2\right).
  \label{eq:infidelity_cal}
\end{align}
For a fixed $T$ gate infidelity $\epsilon_T$, the optimal decomposition accuracy is
$\delta_{\mathrm{opt}}=\argmin_\delta \mathcal{I}_{\mathrm{ent}}(\epsilon_T,\delta)$.
Substituting Eq.~\eqref{eq:T_gate_count} into Eq.~\eqref{eq:infidelity_cal} and setting $\partial \mathcal{I}_{\mathrm{ent}}(\epsilon_T, \delta)/\partial\delta=0$, we find that the optimal decomposition accuracy $\delta_{\mathrm{opt}}$ satisfies
\begin{multline}
  2\delta_{\mathrm{opt}}^2 \left(1+3(1-\epsilon_T)^{b\log_2\delta_{\mathrm{opt}}^{-1} + c}\right) \\
  =
  -3 b (1-\epsilon_T)^{b\log_2\delta_{\mathrm{opt}}^{-1}+c}
  \log_2(1-\epsilon_T)\left(1-{\delta_{\mathrm{opt}}^2}\right).
  \label{eq:exact_decomp_fidelity}
\end{multline}
This condition is a transcendental equation that can be solved numerically to compute $\delta_{\mathrm{opt}}$ for any given value of $\epsilon_T$.
As an estimate, however, in the regime $\epsilon_T,\delta_{\mathrm{opt}}\ll 1$ we can simplify
\begin{align}
  \delta_{\mathrm{opt}}
  \approx \sqrt{\left(\frac{3b\epsilon_T}{8\ln 2}\right)}.
  \label{eq:approx_decomp_fidelity}
\end{align}

To better understand the impact of decomposing arbitrary rotations on the fidelity of a quantum circuit, we consider the following.
We have a circuit with  a total of  $G$  rotation gates, and we aim to achieve a final infidelity of the circuit $\mathcal{I}$.
Thus one can write,
\begin{align}
  \mathcal{I}\geq 1-\left(1-\mathcal{I}(\epsilon_T,\delta_{\mathrm{opt}})\right)^{G}.
  \label{eq:get_target_t_infidelity_depth}
\end{align}
From this one can obtain a maximum value of $\epsilon_T$ that achieves a target infidelity for the circuit.
\cref{fig:T_fidelity_depth} shows the $T$ state fidelity required to achieve a target QAOA+AA circuit infidelities $\mathcal{I}$ for different numbers of phase gates $G$.
We consider two different decomposition schemes and two different target entanglement fidelities for the calculations. Fig.~\ref{fig:T_fidelity_depth} suggests a significant dependence on the number of gates $G$ and target accuracy $1-\mathcal{I}$ in determining the needed $T$ state fidelity $1-\epsilon_T$.

\begin{figure}
  \centering
  \includegraphics{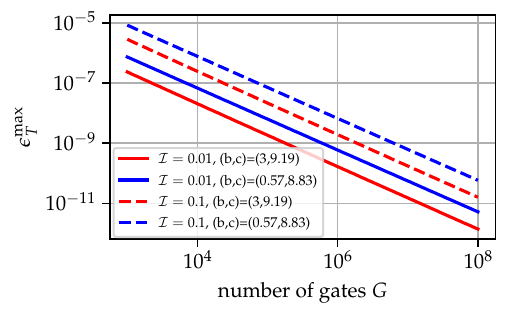}
  \caption{
    The desired infidelity of the $T$ state to achieve a target QAOA+AA circuit infidelity as a function of the total number of phase gates ($G$) in \cref{eq:get_target_t_infidelity_depth}.
    We consider two different decomposition schemes and two different target infidelities $\mathcal{I}$ for the simulations.
    The plot suggests a notable dependence on the number of gates and target accuracy (entanglement fidelity) in determining the needed $T$ state fidelity.
  }
  \label{fig:T_fidelity_depth}
\end{figure}

\subsection{Resource requirements for the $k$-SAT QAOA phaser}
\label{sec:phase_op_resources}

Here we consider the spacetime cost of applying the QAOA phaser for 8-SAT, which is the dominant cost of the $k$-SAT QAOA+AA circuit as sketched out in Fig.~\ref{fig:algorithms} of the main text.
The $k$-SAT phaser is
\begin{align}
  U_C(\gamma) &= e^{-i\gamma H_C} = \prod_{j=1}^m U_j(\gamma),
\end{align}
where
\begin{align}
  U_j(\gamma) &= \sum_{x\in\set{0,1}^n} e^{-i\gamma C_j(x)} \op{x},
  \label{eq:phase-op-cost}
\end{align}
is an exponentiated clause that can be converted into a $k$-qubit phase gate $P_k(\gamma) = e^{i\gamma\op{1}^{\otimes k}}$ by appropriately conjugating qubits that are negated in $C_j$ by Pauli-$X$ gates.
For brevity, we henceforth keep the value of $\gamma$ arbitrary but fixed, and suppress the explicit dependence of $U_C(\gamma)$, $U_j(\gamma)$, and derived objects on $\gamma$.

We say that a circuit $V$ is $Z$-equivalent to the unitary $U$ if the action of $U$ is equal to the action of $V$ followed by single-qubit Pauli-$Z$ corrections that can be efficiently computed from the outcomes of measurements performed in $V$.
For the purposes of the statements below, we define a \textit{logical cycle} to be the time required to perform one logical two-qubit Pauli operator measurement.
In practice, one logical cycle in a distance-$d$ surface code on a square lattice with nearest-neighbor interactions is the time required to perform $d$ rounds of syndrome measurement \cite{litinski2019game}.
The key technical result that we leverage to bound the resource requirements of the $k$-SAT phaser $U_C$ is the following:
\begin{theorem}
  The $K$-qubit phase gate $P_K(\gamma) = e^{i\gamma\op{1}^{\otimes K}}$ is $Z$-equivalent to a circuit that addresses the $K$ qubits of $P_K(\gamma)$ for one logical cycle, introduces $\lfloor\frac{13}{2} K\rfloor$ ancilla qubits, consumes $\sum_{\ell=1}^{\lceil\log_2 K\rceil} \lfloor K/2^\ell\rfloor \le K-1$ CCZ states, and can be implemented in a total of $n_P + 4\lceil\log_2 K\rceil$ logical cycles, where $n_P$ is the number of logical cycles required to implement the single-qubit phase gate $P_1(\gamma)$.
  \label{thm:phase-gate}
\end{theorem}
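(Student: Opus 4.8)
The plan is to reduce the $K$-qubit phase gate to a single-qubit phase applied to a ``flag'' ancilla that holds the conjunction $q_1\wedge\cdots\wedge q_K$ of the input qubits, and to realize that conjunction with a balanced binary tree of temporary-AND gadgets built from CCZ states. Concretely, I would (i) compute the conjunction of the inputs into flag ancillas using CCZ-state-consuming AND gadgets arranged in a tree of depth $\lceil\log_2 K\rceil$, (ii) apply $P_1(\gamma)=e^{i\gamma\op{1}}$ to the flag, and (iii) uncompute the tree. The essential device is the measurement-based temporary-AND (TACU) gadget of Ref.~\cite{litinski2022active}: computing each AND consumes one CCZ state, but its uncomputation is performed by measurement and is Clifford up to classically-determined Pauli-$Z$ corrections. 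This is exactly what yields $Z$-equivalence: the only residue of the whole gadget is a pattern of single-qubit Pauli-$Z$ gates whose locations are computed from measurement outcomes and can be commuted to the end of the circuit.

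The ``one logical cycle'' claim is the part I would establish most carefully. In the first level of the tree, the $K$ input qubits are coupled to fresh ancillas through lattice-surgery Pauli-product measurements; under the convention that one logical cycle is the time for a two-qubit logical measurement \cite{litinski2019game}, this coupling can be scheduled within a single logical cycle. After the first level, every subsequent AND and every uncomputation acts only on ancillas, and the back-action of the measurement-based uncomputation on the inputs is precisely a deferred Pauli-$Z$ correction. Hence the input register is physically addressed for only one logical cycle, which is what allows a fresh batch of clauses to be dispatched at every cycle of the phaser in the main text.

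For the resource counts I would argue as follows. Decompose $K$ into its maximal dyadic blocks, of which there are $s_2(K)$ (the number of ones in the binary expansion of $K$); computing the conjunction of a block of size $2^a$ by a balanced tree costs $2^a-1$ CCZ states, so the total is $\sum_a(2^{a}-1)=K-s_2(K)=\sum_{\ell=1}^{\lceil\log_2 K\rceil}\lfloor K/2^\ell\rfloor\le K-1$, with equality for the powers of two (in particular $K-1=7$ for the $k=8$ case, matching \cref{fig:techniques}C). The logical-cycle budget follows from the tree depth: the compute and uncompute sweeps each traverse $\lceil\log_2 K\rceil$ levels at a fixed number of cycles per level, contributing $4\lceil\log_2 K\rceil$, while the central $P_1(\gamma)$ contributes $n_P$, for a total of $n_P+4\lceil\log_2 K\rceil$. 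The ancilla count $\lfloor\tfrac{13}{2}K\rfloor$ comes from tallying, per tree level, the AND-result registers, the three qubits of each consumed CCZ state, and the measurement/routing workspace; for $K=8$ this gives the $52$ logical $\op{0}$-state ancillas quoted in \cref{fig:techniques}C. The detailed gadget and layout are given in \cref{sec:TACU}.

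The main obstacle I anticipate is the simultaneous bookkeeping rather than any single hard inequality: one must verify, in one coherent scheduling of lattice-surgery operations, that (a) every back-action on the input register and every uncomputation step collapses to a classically tracked Pauli-$Z$, so that the circuit is genuinely $Z$-equivalent to $P_K(\gamma)$ and not merely equal up to more general corrections; (b) all $K$-qubit couplings can be packed into a single logical cycle without routing conflicts; and (c) the ancilla tally is exactly $\lfloor\tfrac{13}{2}K\rfloor$ once CCZ-state qubits and transient workspace are included. A secondary subtlety is the non-power-of-two case, where after forming the $s_2(K)$ block flags one must apply the phase conditioned on all of them; I would handle this by folding the small ($s_2(K)\le\lceil\log_2 K\rceil$) final combination into the phase-application stage so that it does not inflate the CCZ count beyond the stated formula.
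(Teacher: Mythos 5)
Your proposal takes essentially the same route as the paper's proof in \cref{sec:TACU}: a binary tree of measurement-based temporary-AND (TACU) gadgets, one CCZ state per AND, the single-qubit phase $P_1(\gamma)$ applied at the root, and uncomputation by measurement so that the only residue is a deferred pattern of Pauli-$Z$ corrections, which yields the $Z$-equivalence, the $n_P + 4\lceil\log_2 K\rceil$ cycle count, and the $\sum_{\ell}\lfloor K/2^\ell\rfloor$ CCZ count. The step you flag but do not execute --- compressing the two-qubit gadget so that the data register is addressed for exactly one logical cycle with $13$ ancillas per pair --- is precisely where the paper's appendix does its circuit-level work (the ``offloading'' identity and the lattice-surgery decomposition of the CNOT chains), and your plan correctly identifies it as the load-bearing claim.
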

We prove this theorem by construction in \cref{sec:TACU}.

Let $Q_j$ be the set of qubits that correspond to the variables addressed in the clause $C_j$.
Since all exponentiated clauses $U_j$ are diagonal in the computational basis, in order to implement the phaser $U_C$ it is sufficient to implement each exponentiated clause $U_j$ by a $Z$-equivalent circuit $V_j$, and defer all Pauli-$Z$ corrections to the end of $V_C = \prod_j V_j$.
We refer to the circuit $V_j$ as a \textit{task}, and we say that task $V_j$ has been \textit{dispatched} once all of the the gates in $V_j$ that address qubits in $Q_j$ have been applied.

Let $\mathop{\mathrm{LogicalDepth}}(W)$ denote the total number of logical cycles required to implement the circuit $W$.
We say that a partition $\mathcal{C} = \set{\mathcal{C}_1,\mathcal{C}_2,\cdots,\mathcal{C}_c}$ of the $k$-SAT clauses $C = \set{C_1,\cdots,C_m}$ is \textit{disjoint} if each part $\mathcal{C}_\ell\subset C$ consists of clauses that address mutually disjoint sets of qubits, which is to say that if $C_i,C_j\in\mathcal{C}_\ell$, then $\abs{Q_i\cap Q_j} = 0$.
Theorem \ref{thm:phase-gate} implies that
\begin{corollary}
  A circuit $V_C = \prod_j V_j$ in which $V_j$ is $Z$-equivalent to the exponentiated $k-\mathrm{SAT}$ clause $U_j$ can be implemented in $\mathop{\mathrm{LogicalDepth}}(V_C)\le c + n_P + 4\lceil\log_2 k\rceil$ logical cycles, where $c$ is the size of a disjoint partition partition of $C$.
  \label{thm:phaser-depth}
\end{corollary}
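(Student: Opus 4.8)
The plan is to reduce the corollary to a pipelined scheduling argument built directly on Theorem~\ref{thm:phase-gate}. First I would note that each exponentiated clause $U_j$ is, up to a global phase and conjugation by Pauli-$X$ gates on the negated literals, exactly a $k$-qubit phase gate $P_k(\gamma) = e^{i\gamma\op{1}^{\otimes k}}$: since $C_j(x)$ evaluates to $0$ only on the single assignment that falsifies every literal, $U_j = e^{-i\gamma C_j}$ imparts a nontrivial relative phase to precisely one computational-basis configuration of the clause qubits $Q_j$, and the $X$ conjugation maps that configuration to $\ket{1}^{\otimes k}$. Theorem~\ref{thm:phase-gate} then furnishes, for each $j$, a $Z$-equivalent circuit $V_j$ that contacts the clause qubits $Q_j$ for only one logical cycle and has total logical depth $n_P + 4\lceil\log_2 k\rceil$. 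I would identify the \emph{dispatch} of $V_j$ with that single clause-qubit-addressing cycle, after which $V_j$ acts only on its ancilla register.

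The second step is the schedule itself. Because the partition $\mathcal{C}=\{\mathcal{C}_1,\dots,\mathcal{C}_c\}$ is disjoint, the clauses within any one part $\mathcal{C}_\ell$ act on mutually disjoint qubit sets, so their dispatch cycles can all be executed simultaneously in a single logical cycle. I would therefore dispatch part $\mathcal{C}_\ell$ at logical cycle $\ell$ for $\ell=1,\dots,c$. Once a task is dispatched its clause qubits are released, so dispatching $\mathcal{C}_{\ell+1}$ at the following cycle never collides with the clause qubits of earlier parts, while the continuations of already-dispatched tasks proceed concurrently on their disjoint ancilla registers. The last part $\mathcal{C}_c$ is dispatched at cycle $c$ and requires $n_P + 4\lceil\log_2 k\rceil - 1$ further cycles to complete, for a total of $c + n_P + 4\lceil\log_2 k\rceil - 1 \le c + n_P + 4\lceil\log_2 k\rceil$ logical cycles.

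Finally I would dispose of the correction bookkeeping. Each $V_j$ is only $Z$-equivalent to $U_j$, but since every $U_j$ is diagonal in the computational basis, the single-qubit Pauli-$Z$ corrections commute through all remaining $U_{j'}$ and can be accumulated and applied in the Pauli frame at the very end, contributing nothing to the logical depth. This exhibits $V_C = \prod_j V_j$ as a $Z$-equivalent implementation of $U_C$ within the claimed depth.

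The main obstacle is not the arithmetic but justifying that the ancilla-only continuations of concurrently running tasks genuinely do not interfere, i.e.\ that confining each task's contact with the shared clause qubits to a single logical cycle (the substantive content of Theorem~\ref{thm:phase-gate}) is exactly what makes the color-by-color pipeline sound. I would be careful to emphasize that $\mathop{\mathrm{LogicalDepth}}$ measures only the critical path, and that for this depth bound one assumes enough ancilla qubits and resource-state factories that the overlapping continuations never serialize; the separate accounting of how many ancillas and factories this actually demands, and how they are recycled across cycles, belongs to the qubit-requirement analysis and to the explicit gadget construction in \cref{sec:TACU}, which I would defer to rather than reprove here.
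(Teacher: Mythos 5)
Your proposal is correct and follows essentially the same route as the paper: a disjoint partition induces a schedule of $c$ one-cycle dispatch batches, Theorem~\ref{thm:phase-gate} guarantees each gadget touches its clause qubits for only one cycle so the pipeline never collides, and the tail of the last batch plus deferred Pauli-$Z$ corrections accounts for the additive $n_P + 4\lceil\log_2 k\rceil$ term. Your slightly tighter count of $c + n_P + 4\lceil\log_2 k\rceil - 1$ is consistent with the stated upper bound.
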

The basic idea here is that a disjoint partition $\mathcal{C}$ of $C$ induces a \textit{schedule} $\mathcal{V}_{\mathcal{C}} = (\mathcal{V}_1, \mathcal{V}_2, \cdots, \mathcal{V}_c)$, where $\mathcal{V}_\ell = \set{V_i:C_i\in\mathcal{C}_\ell}$ is a collection of tasks that can be executed in parallel in one logical cycle.
It therefore takes $c$ logical cycles to sequentially dispatch all tasks in $V_C$ according to the schedule $\mathcal{V}_{\mathcal{C}}$.
The additional contribution of $n_P + 4\lceil\log_2 k\rceil$ in Corollary \ref{thm:phaser-depth} accounts for the time required for the first task prior to dispatching, and the last task after dispatching.
For the 8-SAT instances considered in the main text, $c \gg n_P + 4\lceil\log_2 k\rceil$, so $\mathop{\mathrm{LogicalDepth}}(V_C)\approx c\approx 12r = 2112$.

\subsection{Target distance of the surface code}
\label{sec:distance_surface_code}

To determine the target distance of the surface code for the QAOA+AA algorithm applied to 8-SAT, we consider the physical error rate $p_{\mathrm{ph}}$ and the total number of non-Clifford gates $G$ in the circuit.
The final infidelity of $\mathcal{I}$ the circuit is conservatively bounded by:
\begin{align}
  \mathcal{I} \leq 1 - \mathcal{F}(p_{\mathrm{ph}})^G
\end{align}
where the fidelity function $ \mathcal{F}(p_{\mathrm{ph}}) $ is approximately given by:
\begin{align}
  \mathcal{F}(p_{\mathrm{ph}})
  \approx 1 - \left(\frac{p_{\mathrm{ph}}}{p_{\mathrm{th}}}\right)^{d/2}.
\end{align}
Here $ p_{\mathrm{th}} \approx 10^{-2}$  is the threshold error rate for the surface code.
For a target infidelity $\mathcal{I}_{\mathrm{tar}}$, we therefore solve these expressions for $d$ in the limits that $\mathcal{I}_{\mathrm{tar}}, 1 - \mathcal{F}(p_{\mathrm{ph}})\ll1$, and set the code distance to
\begin{align}
  d = \left\lceil\frac{2\log(\mathcal{I}_{\mathrm{tar}}/G)}{\log(p_{\mathrm{ph}}/p_{\mathrm{th}})}\right\rceil.
\end{align}
We set $\mathcal{I}_{\mathrm{tar}}=0.99$ and $p_{\mathrm{ph}}/p_{\mathrm{th}} = 0.1$ in the main text.

\subsection{Number of surface code decoders}
\label{sec:number_of_decoders}

To determine the number of surface-code decoders required by the QAOA+AA algorithm, we first assign one decoder to each logical data and ancilla qubit.
Recognizing that resource-state factories may require more complex processing, as a heuristic we assign 10 decoders to each resource-state factory, in total setting
\begin{align}
  n_{\mathrm{decoders}}
  = n + n_{\mathrm{ancilla}} + 10\times n_{\mathrm{fac}}.
\end{align}

\subsection{Classical solvers for random 8-SAT}
\label{sec:classical_solvers}

\paragraph{SAT and random k-SAT}
As the first problem known to be NP-complete \cite{karp2009reducibility}, the Boolean satisfiability problem (SAT) is a fundamental challenge in mathematics and computer science, with wide-ranging applications in hardware
verification \cite{gupta2006sat}, motion planning \cite{imeson2019motionplanning}, discrete optimization \cite{heule2016solving}, machine learning \cite{baluta2019quantitativesatml} and quantum computing \cite{vardi2023solving}.

A SAT formula is conventionally represented in the conjunctive normal form (CNF). A (CNF) k-SAT problem instance on $n$ Boolean variables $x_1,\cdots, x_n$ is a conjunction (\texttt{AND}, $\wedge$) of $m$ clauses, where each clause is a disjunction (\texttt{OR}, $\vee$) of $k$ literals and a literal is a Boolean variable or its negation. An example for a 3-SAT formula with $n=4$, $m=2$ is $(x_1\vee x_2\vee x_4)\wedge (\neg x_2\vee x_3\vee\neg x_4)$. A clause is satisfied if at least one of its literals is. A formula is satisfied by an assignment of Boolean variables if all of its clauses are satisfied. The formula above is satisfied by $x_1=x_2=x_3=x_4=1$.

Random (CNF) $k$-SAT is an important and well-studied family of SAT benchmarks, where each clause consists of $k$ random literals.
The clause-variable ratio, denoted by $\alpha$, plays a critical role in random $k$-SAT problems.
It is observed experimentally that the random $k$-SAT problem goes through a satisfiability phase-transition as $\alpha$ increases \cite{crawford1996experimental}.
It is conjectured that the phase-transition is sharp, which is to say that if $f_{n,k,\alpha}$ is a random $k$-SAT formula with $n$ variables and $\alpha n$ clauses, then there exists a constant $\alpha_k$ for which the probability $\mathbb{P}[\operatorname{SAT}(f_{n,k,\alpha})]$ that $f_{n,k,\alpha}$ is satisfiable behaves as
\begin{align*}
  \lim\limits_{n\to\infty}\mathbb{P}[\operatorname{SAT}(f_{n,k,\alpha})]
  =
  \begin{cases}
    1 & \alpha < \alpha_k \\
    0 & \alpha > \alpha_k
  \end{cases}.
\end{align*}
The conjuncture has been proven for $k=2$ \cite{goerdt1992threshold,reedmick} and large $k$ by the landmark work in Ref.~\cite{ding2015proof}.
Experiments indicate that formulas close to the threshold $\alpha_k$ are especially challenging for practical SAT solvers \cite{bresler2022algorithmic}.

\paragraph{Benchmark generation} This work focuses on random $8$-CNF benchmarks near the satisfiability threshold $\alpha_8$, which is experimentally observed to be approximately $176$ \cite{Sami_8_SAT_QAOA}.   Each problem is formed by generating $176n$ random $OR$-clauses where each clause consists of $8$ literals chosen uniformly (with replacement) from $\set{x_1,\cdots,x_n}$ and negated with probability $1/2$. For each instance generated, we use the complete solver Kissat \cite{heule2024proceedings} to prove that it is satisfiable; if the instance is not satisfiable, it is discarded. For each value of $n$ ranging from 30, 31, $\ldots$, 70, at least 70 satisfiable instances were generated.

\paragraph{Experiment configurations} The time limit is set to 1000 seconds for all problem instances. The experiment is conducted on an AMD EPYC 7R32 CPU with 48 cores 
and a TDP of 280W.

\paragraph{Solvers Selection} SAT solvers are typically classified into complete and incomplete solvers. Complete solvers are able to determine the satisfiability of such formulas by either giving a satisfying assignment or proving unsatisfiability. State-of-the-art complete SAT solvers based on Conflict-Driven Clause Learning (CDCL) \cite{biere2009conflict}, e.g., Kissat \cite{heule2024proceedings} and MapleSAT \cite{biere2017cadical}, are dominant in solving industrial SAT instances.  Incomplete  solvers based on (stochastic) local search (SLS)
\cite{hoos1999stochastic,kyrillidis2021continuous} and message passing (MP) \cite{kroc2009message} offer a practical alternative by conducting unstructed search, moving quickly in the Boolean cube to decrease the number of unsatisfied clauses.  They haven shown promise in promptly solving satisfiable large random instances \cite{lorenz2020effect}.

We included five state-of-the-art incomplete SAT solvers 
in the evaluation of classic SAT solvers:

\begin{itemize}
  \item probSAT~\cite{balint2012choosing}, a local search based SAT solver built on WalkSAT with probabilistic variable selection heuristics. %
  \item yalSAT~\cite{biere2016splatz}, the winner of the random track of SAT Competition (SC) 2017~\cite{balyosat} built based on probSAT.
  \item Dimetheus~\cite{gableske2013performance}, a solver combining SLS and MP techniques and the winner of the random track in SC 2016~\cite{balyo2017sat}.
  \item Sparrow~\cite{balint2014engineering}, the winner of the random track of SC 2018~\cite{heule2019sat}.
  \item WalkSATlm~\cite{cai2013improving}, a solver combines the classic WalkSAT solver with tie-breaking heuristics.
\end{itemize}

We also considered the following SAT solvers but did not include them in the evaluation, including SOTA complete solvers MapleSAT and Kissat, incomplete solvers WalkSAT \cite{selman1993local} and Survey Propagation \cite{braunstein2005survey}.
They are, however, outperformed by the solvers in the evaluation by a large margin.

\subsection{Parallelizing classical solvers for 8-SAT}
\label{sec:parallel_classical_solvers}

The local search solvers that are most performant for random SAT can be straightforwardly parallelized by running many independent local searches in parallel.
Since no communication is required between independent local searches, the runtime for a parallelized search can be predicted by analyzing the distribution of runtimes for one local search.
Specifically, if $n_{\rm cores}$ searches are executed in parallel, they can all be terminated as soon as one of them finishes, so the expected runtime for $n_{\rm cores}$ searches is the minimum over $n_{\rm cores}$ samples from the distribution of runtimes for a single search.

Ref.~\cite{arbelaez2013using} uses this insight to analyze the performance of Sparrow and predict how it scales to large supercomputers.
Specifically, they benchmark Sparrow on random instances from SAT competitions and find that the running times follow shifted exponential distribution for easier random instances and lognormal for harder ones. 
To make a conservative assumption for our crossover point, we choose the random instance with the most favorable classical scaling from Ref.~\cite{arbelaez2013using}, namely Rand-9.
The distributions of runtimes for this instances is the shifted exponential distribution $f_Y(t)=\lambda e^{-\lambda(t-x_0)}$ with parameters $\lambda=9.8\times 10^{-4}$ and $x_0=6.8$.
We then use the model from Ref.~\cite{arbelaez2013using} to obtain a speedup from parallelization with $n_{\rm cores}$.
Finally, we divide the measured serial runtime of Sparrow by this speedup factor, obtaining the estimated parallel runtime.

\section*{Acknowledgements}

The authors thank Pradeep Niroula for helpful discussions about the overhead of fixed-point amplitude amplification. The authors thank their colleagues at the Global Technology Applied Research center of JPMorganChase for support.

\section*{Data availability}
The data for this work is available at \href{https://doi.org/10.5281/zenodo.15122122}{10.5281/zenodo.15122122}.

\bibliography{main}

\begin{thebibliography}{108}%
\makeatletter
\providecommand \@ifxundefined [1]{%
 \@ifx{#1\undefined}
}%
\providecommand \@ifnum [1]{%
 \ifnum #1\expandafter \@firstoftwo
 \else \expandafter \@secondoftwo
 \fi
}%
\providecommand \@ifx [1]{%
 \ifx #1\expandafter \@firstoftwo
 \else \expandafter \@secondoftwo
 \fi
}%
\providecommand \natexlab [1]{#1}%
\providecommand \enquote  [1]{``#1''}%
\providecommand \bibnamefont  [1]{#1}%
\providecommand \bibfnamefont [1]{#1}%
\providecommand \citenamefont [1]{#1}%
\providecommand \href@noop [0]{\@secondoftwo}%
\providecommand \href [0]{\begingroup \@sanitize@url \@href}%
\providecommand \@href[1]{\@@startlink{#1}\@@href}%
\providecommand \@@href[1]{\endgroup#1\@@endlink}%
\providecommand \@sanitize@url [0]{\catcode `\\12\catcode `\$12\catcode `\&12\catcode `\#12\catcode `\^12\catcode `\_12\catcode `\%12\relax}%
\providecommand \@@startlink[1]{}%
\providecommand \@@endlink[0]{}%
\providecommand \url  [0]{\begingroup\@sanitize@url \@url }%
\providecommand \@url [1]{\endgroup\@href {#1}{\urlprefix }}%
\providecommand \urlprefix  [0]{URL }%
\providecommand \Eprint [0]{\href }%
\providecommand \doibase [0]{https://doi.org/}%
\providecommand \selectlanguage [0]{\@gobble}%
\providecommand \bibinfo  [0]{\@secondoftwo}%
\providecommand \bibfield  [0]{\@secondoftwo}%
\providecommand \translation [1]{[#1]}%
\providecommand \BibitemOpen [0]{}%
\providecommand \bibitemStop [0]{}%
\providecommand \bibitemNoStop [0]{.\EOS\space}%
\providecommand \EOS [0]{\spacefactor3000\relax}%
\providecommand \BibitemShut  [1]{\csname bibitem#1\endcsname}%
\let\auto@bib@innerbib\@empty
\bibitem [{\citenamefont {Boulebnane}\ and\ \citenamefont {Montanaro}(2024)}]{Sami_8_SAT_QAOA}%
  \BibitemOpen
  \bibfield  {author} {\bibinfo {author} {\bibfnamefont {S.}~\bibnamefont {Boulebnane}}\ and\ \bibinfo {author} {\bibfnamefont {A.}~\bibnamefont {Montanaro}},\ }\bibfield  {title} {\bibinfo {title} {Solving boolean satisfiability problems with the quantum approximate optimization algorithm},\ }\href {https://doi.org/10.1103/PRXQuantum.5.030348} {\bibfield  {journal} {\bibinfo  {journal} {PRX Quantum}\ }\textbf {\bibinfo {volume} {5}},\ \bibinfo {pages} {030348} (\bibinfo {year} {2024})}\BibitemShut {NoStop}%
\bibitem [{\citenamefont {Abbas}\ \emph {et~al.}(2024)\citenamefont {Abbas}, \citenamefont {Ambainis}, \citenamefont {Augustino}, \citenamefont {B\"{a}rtschi}, \citenamefont {Buhrman}, \citenamefont {Coffrin}, \citenamefont {Cortiana}, \citenamefont {Dunjko}, \citenamefont {Egger}, \citenamefont {Elmegreen}, \citenamefont {Franco}, \citenamefont {Fratini}, \citenamefont {Fuller}, \citenamefont {Gacon}, \citenamefont {Gonciulea}, \citenamefont {Gribling}, \citenamefont {Gupta}, \citenamefont {Hadfield}, \citenamefont {Heese}, \citenamefont {Kircher}, \citenamefont {Kleinert}, \citenamefont {Koch}, \citenamefont {Korpas}, \citenamefont {Lenk}, \citenamefont {Marecek}, \citenamefont {Markov}, \citenamefont {Mazzola}, \citenamefont {Mensa}, \citenamefont {Mohseni}, \citenamefont {Nannicini}, \citenamefont {O’Meara}, \citenamefont {Tapia}, \citenamefont {Pokutta}, \citenamefont {Proissl}, \citenamefont {Rebentrost}, \citenamefont {Sahin}, \citenamefont {Symons}, \citenamefont {Tornow}, \citenamefont {Valls},
  \citenamefont {Woerner}, \citenamefont {Wolf-Bauwens}, \citenamefont {Yard}, \citenamefont {Yarkoni}, \citenamefont {Zechiel}, \citenamefont {Zhuk},\ and\ \citenamefont {Zoufal}}]{Abbas2024}%
  \BibitemOpen
  \bibfield  {author} {\bibinfo {author} {\bibfnamefont {A.}~\bibnamefont {Abbas}}, \bibinfo {author} {\bibfnamefont {A.}~\bibnamefont {Ambainis}}, \bibinfo {author} {\bibfnamefont {B.}~\bibnamefont {Augustino}}, \bibinfo {author} {\bibfnamefont {A.}~\bibnamefont {B\"{a}rtschi}}, \bibinfo {author} {\bibfnamefont {H.}~\bibnamefont {Buhrman}}, \bibinfo {author} {\bibfnamefont {C.}~\bibnamefont {Coffrin}}, \bibinfo {author} {\bibfnamefont {G.}~\bibnamefont {Cortiana}}, \bibinfo {author} {\bibfnamefont {V.}~\bibnamefont {Dunjko}}, \bibinfo {author} {\bibfnamefont {D.~J.}\ \bibnamefont {Egger}}, \bibinfo {author} {\bibfnamefont {B.~G.}\ \bibnamefont {Elmegreen}}, \bibinfo {author} {\bibfnamefont {N.}~\bibnamefont {Franco}}, \bibinfo {author} {\bibfnamefont {F.}~\bibnamefont {Fratini}}, \bibinfo {author} {\bibfnamefont {B.}~\bibnamefont {Fuller}}, \bibinfo {author} {\bibfnamefont {J.}~\bibnamefont {Gacon}}, \bibinfo {author} {\bibfnamefont {C.}~\bibnamefont {Gonciulea}}, \bibinfo {author} {\bibfnamefont
  {S.}~\bibnamefont {Gribling}}, \bibinfo {author} {\bibfnamefont {S.}~\bibnamefont {Gupta}}, \bibinfo {author} {\bibfnamefont {S.}~\bibnamefont {Hadfield}}, \bibinfo {author} {\bibfnamefont {R.}~\bibnamefont {Heese}}, \bibinfo {author} {\bibfnamefont {G.}~\bibnamefont {Kircher}}, \bibinfo {author} {\bibfnamefont {T.}~\bibnamefont {Kleinert}}, \bibinfo {author} {\bibfnamefont {T.}~\bibnamefont {Koch}}, \bibinfo {author} {\bibfnamefont {G.}~\bibnamefont {Korpas}}, \bibinfo {author} {\bibfnamefont {S.}~\bibnamefont {Lenk}}, \bibinfo {author} {\bibfnamefont {J.}~\bibnamefont {Marecek}}, \bibinfo {author} {\bibfnamefont {V.}~\bibnamefont {Markov}}, \bibinfo {author} {\bibfnamefont {G.}~\bibnamefont {Mazzola}}, \bibinfo {author} {\bibfnamefont {S.}~\bibnamefont {Mensa}}, \bibinfo {author} {\bibfnamefont {N.}~\bibnamefont {Mohseni}}, \bibinfo {author} {\bibfnamefont {G.}~\bibnamefont {Nannicini}}, \bibinfo {author} {\bibfnamefont {C.}~\bibnamefont {O’Meara}}, \bibinfo {author} {\bibfnamefont {E.~P.}\ \bibnamefont
  {Tapia}}, \bibinfo {author} {\bibfnamefont {S.}~\bibnamefont {Pokutta}}, \bibinfo {author} {\bibfnamefont {M.}~\bibnamefont {Proissl}}, \bibinfo {author} {\bibfnamefont {P.}~\bibnamefont {Rebentrost}}, \bibinfo {author} {\bibfnamefont {E.}~\bibnamefont {Sahin}}, \bibinfo {author} {\bibfnamefont {B.~C.~B.}\ \bibnamefont {Symons}}, \bibinfo {author} {\bibfnamefont {S.}~\bibnamefont {Tornow}}, \bibinfo {author} {\bibfnamefont {V.}~\bibnamefont {Valls}}, \bibinfo {author} {\bibfnamefont {S.}~\bibnamefont {Woerner}}, \bibinfo {author} {\bibfnamefont {M.~L.}\ \bibnamefont {Wolf-Bauwens}}, \bibinfo {author} {\bibfnamefont {J.}~\bibnamefont {Yard}}, \bibinfo {author} {\bibfnamefont {S.}~\bibnamefont {Yarkoni}}, \bibinfo {author} {\bibfnamefont {D.}~\bibnamefont {Zechiel}}, \bibinfo {author} {\bibfnamefont {S.}~\bibnamefont {Zhuk}},\ and\ \bibinfo {author} {\bibfnamefont {C.}~\bibnamefont {Zoufal}},\ }\bibfield  {title} {\bibinfo {title} {Challenges and opportunities in quantum optimization},\ }\href
  {https://doi.org/10.1038/s42254-024-00770-9} {\bibfield  {journal} {\bibinfo  {journal} {Nature Reviews Physics}\ }\textbf {\bibinfo {volume} {6}},\ \bibinfo {pages} {718–735} (\bibinfo {year} {2024})}\BibitemShut {NoStop}%
\bibitem [{\citenamefont {Alexeev}\ \emph {et~al.}(2021)\citenamefont {Alexeev}, \citenamefont {Bacon}, \citenamefont {Brown}, \citenamefont {Calderbank}, \citenamefont {Carr}, \citenamefont {Chong}, \citenamefont {DeMarco}, \citenamefont {Englund}, \citenamefont {Farhi}, \citenamefont {Fefferman}, \citenamefont {Gorshkov}, \citenamefont {Houck}, \citenamefont {Kim}, \citenamefont {Kimmel}, \citenamefont {Lange}, \citenamefont {Lloyd}, \citenamefont {Lukin}, \citenamefont {Maslov}, \citenamefont {Maunz}, \citenamefont {Monroe}, \citenamefont {Preskill}, \citenamefont {Roetteler}, \citenamefont {Savage},\ and\ \citenamefont {Thompson}}]{Alexeev2021}%
  \BibitemOpen
  \bibfield  {author} {\bibinfo {author} {\bibfnamefont {Y.}~\bibnamefont {Alexeev}}, \bibinfo {author} {\bibfnamefont {D.}~\bibnamefont {Bacon}}, \bibinfo {author} {\bibfnamefont {K.~R.}\ \bibnamefont {Brown}}, \bibinfo {author} {\bibfnamefont {R.}~\bibnamefont {Calderbank}}, \bibinfo {author} {\bibfnamefont {L.~D.}\ \bibnamefont {Carr}}, \bibinfo {author} {\bibfnamefont {F.~T.}\ \bibnamefont {Chong}}, \bibinfo {author} {\bibfnamefont {B.}~\bibnamefont {DeMarco}}, \bibinfo {author} {\bibfnamefont {D.}~\bibnamefont {Englund}}, \bibinfo {author} {\bibfnamefont {E.}~\bibnamefont {Farhi}}, \bibinfo {author} {\bibfnamefont {B.}~\bibnamefont {Fefferman}}, \bibinfo {author} {\bibfnamefont {A.~V.}\ \bibnamefont {Gorshkov}}, \bibinfo {author} {\bibfnamefont {A.}~\bibnamefont {Houck}}, \bibinfo {author} {\bibfnamefont {J.}~\bibnamefont {Kim}}, \bibinfo {author} {\bibfnamefont {S.}~\bibnamefont {Kimmel}}, \bibinfo {author} {\bibfnamefont {M.}~\bibnamefont {Lange}}, \bibinfo {author} {\bibfnamefont {S.}~\bibnamefont
  {Lloyd}}, \bibinfo {author} {\bibfnamefont {M.~D.}\ \bibnamefont {Lukin}}, \bibinfo {author} {\bibfnamefont {D.}~\bibnamefont {Maslov}}, \bibinfo {author} {\bibfnamefont {P.}~\bibnamefont {Maunz}}, \bibinfo {author} {\bibfnamefont {C.}~\bibnamefont {Monroe}}, \bibinfo {author} {\bibfnamefont {J.}~\bibnamefont {Preskill}}, \bibinfo {author} {\bibfnamefont {M.}~\bibnamefont {Roetteler}}, \bibinfo {author} {\bibfnamefont {M.~J.}\ \bibnamefont {Savage}},\ and\ \bibinfo {author} {\bibfnamefont {J.}~\bibnamefont {Thompson}},\ }\bibfield  {title} {\bibinfo {title} {Quantum computer systems for scientific discovery},\ }\bibfield  {journal} {\bibinfo  {journal} {PRX Quantum}\ }\textbf {\bibinfo {volume} {2}},\ \href {https://doi.org/10.1103/prxquantum.2.017001} {10.1103/prxquantum.2.017001} (\bibinfo {year} {2021})\BibitemShut {NoStop}%
\bibitem [{\citenamefont {Herman}\ \emph {et~al.}(2023)\citenamefont {Herman}, \citenamefont {Googin}, \citenamefont {Liu}, \citenamefont {Sun}, \citenamefont {Galda}, \citenamefont {Safro}, \citenamefont {Pistoia},\ and\ \citenamefont {Alexeev}}]{Herman2023}%
  \BibitemOpen
  \bibfield  {author} {\bibinfo {author} {\bibfnamefont {D.}~\bibnamefont {Herman}}, \bibinfo {author} {\bibfnamefont {C.}~\bibnamefont {Googin}}, \bibinfo {author} {\bibfnamefont {X.}~\bibnamefont {Liu}}, \bibinfo {author} {\bibfnamefont {Y.}~\bibnamefont {Sun}}, \bibinfo {author} {\bibfnamefont {A.}~\bibnamefont {Galda}}, \bibinfo {author} {\bibfnamefont {I.}~\bibnamefont {Safro}}, \bibinfo {author} {\bibfnamefont {M.}~\bibnamefont {Pistoia}},\ and\ \bibinfo {author} {\bibfnamefont {Y.}~\bibnamefont {Alexeev}},\ }\bibfield  {title} {\bibinfo {title} {Quantum computing for finance},\ }\href {https://doi.org/10.1038/s42254-023-00603-1} {\bibfield  {journal} {\bibinfo  {journal} {Nature Reviews Physics}\ }\textbf {\bibinfo {volume} {5}},\ \bibinfo {pages} {450–465} (\bibinfo {year} {2023})}\BibitemShut {NoStop}%
\bibitem [{\citenamefont {{Christoph D\"{u}rr}}\ and\ \citenamefont {{Peter H\o{}yer}}(1996)}]{quant-ph/9607014}%
  \BibitemOpen
  \bibfield  {author} {\bibinfo {author} {\bibnamefont {{Christoph D\"{u}rr}}}\ and\ \bibinfo {author} {\bibnamefont {{Peter H\o{}yer}}},\ }\bibfield  {title} {\bibinfo {title} {A quantum algorithm for finding the minimum},\ }\bibfield  {journal} {\bibinfo  {journal} {arXiv:quant-ph/9607014}\ }\href {https://doi.org/10.48550/arXiv.quant-ph/9607014} {10.48550/arXiv.quant-ph/9607014} (\bibinfo {year} {1996})\BibitemShut {NoStop}%
\bibitem [{\citenamefont {Montanaro}(2018)}]{montanaro2018quantum}%
  \BibitemOpen
  \bibfield  {author} {\bibinfo {author} {\bibfnamefont {A.}~\bibnamefont {Montanaro}},\ }\bibfield  {title} {\bibinfo {title} {Quantum-walk speedup of backtracking algorithms},\ }\href {https://doi.org/10.4086/toc.2018.v014a015} {\bibfield  {journal} {\bibinfo  {journal} {Theory Of Computing}\ }\textbf {\bibinfo {volume} {14}},\ \bibinfo {pages} {1} (\bibinfo {year} {2018})}\BibitemShut {NoStop}%
\bibitem [{\citenamefont {Montanaro}(2020)}]{montanaro2020quantum}%
  \BibitemOpen
  \bibfield  {author} {\bibinfo {author} {\bibfnamefont {A.}~\bibnamefont {Montanaro}},\ }\bibfield  {title} {\bibinfo {title} {Quantum speedup of branch-and-bound algorithms},\ }\href {https://doi.org/10.1103/PhysRevResearch.2.013056} {\bibfield  {journal} {\bibinfo  {journal} {Physical Review Research}\ }\textbf {\bibinfo {volume} {2}},\ \bibinfo {pages} {013056} (\bibinfo {year} {2020})}\BibitemShut {NoStop}%
\bibitem [{\citenamefont {Chakrabarti}\ \emph {et~al.}(2022)\citenamefont {Chakrabarti}, \citenamefont {Minssen}, \citenamefont {Yalovetzky},\ and\ \citenamefont {Pistoia}}]{2210.03210}%
  \BibitemOpen
  \bibfield  {author} {\bibinfo {author} {\bibfnamefont {S.}~\bibnamefont {Chakrabarti}}, \bibinfo {author} {\bibfnamefont {P.}~\bibnamefont {Minssen}}, \bibinfo {author} {\bibfnamefont {R.}~\bibnamefont {Yalovetzky}},\ and\ \bibinfo {author} {\bibfnamefont {M.}~\bibnamefont {Pistoia}},\ }\bibfield  {title} {\bibinfo {title} {Universal quantum speedup for branch-and-bound, branch-and-cut, and tree-search algorithms},\ }\bibfield  {journal} {\bibinfo  {journal} {arXiv:2210.03210}\ }\href {https://doi.org/10.48550/ARXIV.2210.03210} {10.48550/ARXIV.2210.03210} (\bibinfo {year} {2022})\BibitemShut {NoStop}%
\bibitem [{\citenamefont {Somma}\ \emph {et~al.}(2008)\citenamefont {Somma}, \citenamefont {Boixo}, \citenamefont {Barnum},\ and\ \citenamefont {Knill}}]{Somma2008}%
  \BibitemOpen
  \bibfield  {author} {\bibinfo {author} {\bibfnamefont {R.~D.}\ \bibnamefont {Somma}}, \bibinfo {author} {\bibfnamefont {S.}~\bibnamefont {Boixo}}, \bibinfo {author} {\bibfnamefont {H.}~\bibnamefont {Barnum}},\ and\ \bibinfo {author} {\bibfnamefont {E.}~\bibnamefont {Knill}},\ }\bibfield  {title} {\bibinfo {title} {Quantum simulations of classical annealing processes},\ }\bibfield  {journal} {\bibinfo  {journal} {Physical Review Letters}\ }\textbf {\bibinfo {volume} {101}},\ \href {https://doi.org/10.1103/physrevlett.101.130504} {10.1103/physrevlett.101.130504} (\bibinfo {year} {2008})\BibitemShut {NoStop}%
\bibitem [{\citenamefont {Wocjan}\ and\ \citenamefont {Abeyesinghe}(2008)}]{Wocjan2008}%
  \BibitemOpen
  \bibfield  {author} {\bibinfo {author} {\bibfnamefont {P.}~\bibnamefont {Wocjan}}\ and\ \bibinfo {author} {\bibfnamefont {A.}~\bibnamefont {Abeyesinghe}},\ }\bibfield  {title} {\bibinfo {title} {Speedup via quantum sampling},\ }\bibfield  {journal} {\bibinfo  {journal} {Physical Review A}\ }\textbf {\bibinfo {volume} {78}},\ \href {https://doi.org/10.1103/physreva.78.042336} {10.1103/physreva.78.042336} (\bibinfo {year} {2008})\BibitemShut {NoStop}%
\bibitem [{\citenamefont {Hastings}(2018)}]{Hastings2018}%
  \BibitemOpen
  \bibfield  {author} {\bibinfo {author} {\bibfnamefont {M.~B.}\ \bibnamefont {Hastings}},\ }\bibfield  {title} {\bibinfo {title} {A short path quantum algorithm for exact optimization},\ }\href {https://doi.org/10.22331/q-2018-07-26-78} {\bibfield  {journal} {\bibinfo  {journal} {Quantum}\ }\textbf {\bibinfo {volume} {2}},\ \bibinfo {pages} {78} (\bibinfo {year} {2018})}\BibitemShut {NoStop}%
\bibitem [{\citenamefont {Dalzell}\ \emph {et~al.}(2023{\natexlab{a}})\citenamefont {Dalzell}, \citenamefont {Pancotti}, \citenamefont {Campbell},\ and\ \citenamefont {Brand{\~{a}}o}}]{2212.01513}%
  \BibitemOpen
  \bibfield  {author} {\bibinfo {author} {\bibfnamefont {A.~M.}\ \bibnamefont {Dalzell}}, \bibinfo {author} {\bibfnamefont {N.}~\bibnamefont {Pancotti}}, \bibinfo {author} {\bibfnamefont {E.~T.}\ \bibnamefont {Campbell}},\ and\ \bibinfo {author} {\bibfnamefont {F.~G.}\ \bibnamefont {Brand{\~{a}}o}},\ }\bibfield  {title} {\bibinfo {title} {Mind the gap: Achieving a super-grover quantum speedup by jumping to the end},\ }in\ \href {https://doi.org/10.1145/3564246.3585203} {\emph {\bibinfo {booktitle} {Proceedings of the {ACM} Symposium on Theory of Computing}}}\ (\bibinfo {year} {2023})\ p.\ \bibinfo {pages} {1131–1144}\BibitemShut {NoStop}%
\bibitem [{\citenamefont {Chakrabarti}\ \emph {et~al.}(2024)\citenamefont {Chakrabarti}, \citenamefont {Herman}, \citenamefont {Ozgul}, \citenamefont {Zhu}, \citenamefont {Augustino}, \citenamefont {Hao}, \citenamefont {He}, \citenamefont {Shaydulin},\ and\ \citenamefont {Pistoia}}]{2410.23270}%
  \BibitemOpen
  \bibfield  {author} {\bibinfo {author} {\bibfnamefont {S.}~\bibnamefont {Chakrabarti}}, \bibinfo {author} {\bibfnamefont {D.}~\bibnamefont {Herman}}, \bibinfo {author} {\bibfnamefont {G.}~\bibnamefont {Ozgul}}, \bibinfo {author} {\bibfnamefont {S.}~\bibnamefont {Zhu}}, \bibinfo {author} {\bibfnamefont {B.}~\bibnamefont {Augustino}}, \bibinfo {author} {\bibfnamefont {T.}~\bibnamefont {Hao}}, \bibinfo {author} {\bibfnamefont {Z.}~\bibnamefont {He}}, \bibinfo {author} {\bibfnamefont {R.}~\bibnamefont {Shaydulin}},\ and\ \bibinfo {author} {\bibfnamefont {M.}~\bibnamefont {Pistoia}},\ }\bibfield  {title} {\bibinfo {title} {Generalized short path algorithms: Towards super-quadratic speedup over markov chain search for combinatorial optimization},\ }\href@noop {} {\bibfield  {journal} {\bibinfo  {journal} {arXiv:2410.23270}\ } (\bibinfo {year} {2024})}\BibitemShut {NoStop}%
\bibitem [{\citenamefont {Montanaro}\ and\ \citenamefont {Zhou}(2024)}]{2411.04979}%
  \BibitemOpen
  \bibfield  {author} {\bibinfo {author} {\bibfnamefont {A.}~\bibnamefont {Montanaro}}\ and\ \bibinfo {author} {\bibfnamefont {L.}~\bibnamefont {Zhou}},\ }\bibfield  {title} {\bibinfo {title} {Quantum speedups in solving near-symmetric optimization problems by low-depth qaoa},\ }\href@noop {} {\bibfield  {journal} {\bibinfo  {journal} {arXiv:2411.04979}\ } (\bibinfo {year} {2024})}\BibitemShut {NoStop}%
\bibitem [{\citenamefont {Jordan}\ \emph {et~al.}(2024)\citenamefont {Jordan}, \citenamefont {Shutty}, \citenamefont {Wootters}, \citenamefont {Zalcman}, \citenamefont {Schmidhuber}, \citenamefont {King}, \citenamefont {Isakov},\ and\ \citenamefont {Babbush}}]{2408.08292}%
  \BibitemOpen
  \bibfield  {author} {\bibinfo {author} {\bibfnamefont {S.~P.}\ \bibnamefont {Jordan}}, \bibinfo {author} {\bibfnamefont {N.}~\bibnamefont {Shutty}}, \bibinfo {author} {\bibfnamefont {M.}~\bibnamefont {Wootters}}, \bibinfo {author} {\bibfnamefont {A.}~\bibnamefont {Zalcman}}, \bibinfo {author} {\bibfnamefont {A.}~\bibnamefont {Schmidhuber}}, \bibinfo {author} {\bibfnamefont {R.}~\bibnamefont {King}}, \bibinfo {author} {\bibfnamefont {S.~V.}\ \bibnamefont {Isakov}},\ and\ \bibinfo {author} {\bibfnamefont {R.}~\bibnamefont {Babbush}},\ }\bibfield  {title} {\bibinfo {title} {Optimization by decoded quantum interferometry},\ }\href@noop {} {\bibfield  {journal} {\bibinfo  {journal} {arXiv:2408.08292}\ } (\bibinfo {year} {2024})}\BibitemShut {NoStop}%
\bibitem [{\citenamefont {Farhi}\ \emph {et~al.}(2025)\citenamefont {Farhi}, \citenamefont {Gutmann}, \citenamefont {Ranard},\ and\ \citenamefont {Villalonga}}]{2503.12789}%
  \BibitemOpen
  \bibfield  {author} {\bibinfo {author} {\bibfnamefont {E.}~\bibnamefont {Farhi}}, \bibinfo {author} {\bibfnamefont {S.}~\bibnamefont {Gutmann}}, \bibinfo {author} {\bibfnamefont {D.}~\bibnamefont {Ranard}},\ and\ \bibinfo {author} {\bibfnamefont {B.}~\bibnamefont {Villalonga}},\ }\bibfield  {title} {\bibinfo {title} {Lower bounding the maxcut of high girth 3-regular graphs using the qaoa},\ }\href@noop {} {\bibfield  {journal} {\bibinfo  {journal} {arXiv:2503.12789}\ } (\bibinfo {year} {2025})}\BibitemShut {NoStop}%
\bibitem [{\citenamefont {Babbush}\ \emph {et~al.}(2021)\citenamefont {Babbush}, \citenamefont {McClean}, \citenamefont {Newman}, \citenamefont {Gidney}, \citenamefont {Boixo},\ and\ \citenamefont {Neven}}]{focus_beyond_quadratic}%
  \BibitemOpen
  \bibfield  {author} {\bibinfo {author} {\bibfnamefont {R.}~\bibnamefont {Babbush}}, \bibinfo {author} {\bibfnamefont {J.~R.}\ \bibnamefont {McClean}}, \bibinfo {author} {\bibfnamefont {M.}~\bibnamefont {Newman}}, \bibinfo {author} {\bibfnamefont {C.}~\bibnamefont {Gidney}}, \bibinfo {author} {\bibfnamefont {S.}~\bibnamefont {Boixo}},\ and\ \bibinfo {author} {\bibfnamefont {H.}~\bibnamefont {Neven}},\ }\bibfield  {title} {\bibinfo {title} {Focus beyond quadratic speedups for error-corrected quantum advantage},\ }\href {https://doi.org/10.1103/PRXQuantum.2.010103} {\bibfield  {journal} {\bibinfo  {journal} {PRX Quantum}\ }\textbf {\bibinfo {volume} {2}},\ \bibinfo {pages} {010103} (\bibinfo {year} {2021})}\BibitemShut {NoStop}%
\bibitem [{\citenamefont {Sanders}\ \emph {et~al.}(2020)\citenamefont {Sanders}, \citenamefont {Berry}, \citenamefont {Costa}, \citenamefont {Tessler}, \citenamefont {Wiebe}, \citenamefont {Gidney}, \citenamefont {Neven},\ and\ \citenamefont {Babbush}}]{Yuval_toffoli}%
  \BibitemOpen
  \bibfield  {author} {\bibinfo {author} {\bibfnamefont {Y.~R.}\ \bibnamefont {Sanders}}, \bibinfo {author} {\bibfnamefont {D.~W.}\ \bibnamefont {Berry}}, \bibinfo {author} {\bibfnamefont {P.~C.}\ \bibnamefont {Costa}}, \bibinfo {author} {\bibfnamefont {L.~W.}\ \bibnamefont {Tessler}}, \bibinfo {author} {\bibfnamefont {N.}~\bibnamefont {Wiebe}}, \bibinfo {author} {\bibfnamefont {C.}~\bibnamefont {Gidney}}, \bibinfo {author} {\bibfnamefont {H.}~\bibnamefont {Neven}},\ and\ \bibinfo {author} {\bibfnamefont {R.}~\bibnamefont {Babbush}},\ }\bibfield  {title} {\bibinfo {title} {Compilation of fault-tolerant quantum heuristics for combinatorial optimization},\ }\href {https://doi.org/10.1103/PRXQuantum.1.020312} {\bibfield  {journal} {\bibinfo  {journal} {PRX Quantum}\ }\textbf {\bibinfo {volume} {1}},\ \bibinfo {pages} {020312} (\bibinfo {year} {2020})}\BibitemShut {NoStop}%
\bibitem [{\citenamefont {Campbell}\ \emph {et~al.}(2019)\citenamefont {Campbell}, \citenamefont {Khurana},\ and\ \citenamefont {Montanaro}}]{Campbell2019}%
  \BibitemOpen
  \bibfield  {author} {\bibinfo {author} {\bibfnamefont {E.}~\bibnamefont {Campbell}}, \bibinfo {author} {\bibfnamefont {A.}~\bibnamefont {Khurana}},\ and\ \bibinfo {author} {\bibfnamefont {A.}~\bibnamefont {Montanaro}},\ }\bibfield  {title} {\bibinfo {title} {Applying quantum algorithms to constraint satisfaction problems},\ }\href {https://doi.org/10.22331/q-2019-07-18-167} {\bibfield  {journal} {\bibinfo  {journal} {Quantum}\ }\textbf {\bibinfo {volume} {3}},\ \bibinfo {pages} {167} (\bibinfo {year} {2019})}\BibitemShut {NoStop}%
\bibitem [{\citenamefont {Dalzell}\ \emph {et~al.}(2023{\natexlab{b}})\citenamefont {Dalzell}, \citenamefont {Clader}, \citenamefont {Salton}, \citenamefont {Berta}, \citenamefont {Lin}, \citenamefont {Bader}, \citenamefont {Stamatopoulos}, \citenamefont {Schuetz}, \citenamefont {Brandão}, \citenamefont {Katzgraber},\ and\ \citenamefont {Zeng}}]{Dalzell2023}%
  \BibitemOpen
  \bibfield  {author} {\bibinfo {author} {\bibfnamefont {A.~M.}\ \bibnamefont {Dalzell}}, \bibinfo {author} {\bibfnamefont {B.~D.}\ \bibnamefont {Clader}}, \bibinfo {author} {\bibfnamefont {G.}~\bibnamefont {Salton}}, \bibinfo {author} {\bibfnamefont {M.}~\bibnamefont {Berta}}, \bibinfo {author} {\bibfnamefont {C.~Y.-Y.}\ \bibnamefont {Lin}}, \bibinfo {author} {\bibfnamefont {D.~A.}\ \bibnamefont {Bader}}, \bibinfo {author} {\bibfnamefont {N.}~\bibnamefont {Stamatopoulos}}, \bibinfo {author} {\bibfnamefont {M.~J.~A.}\ \bibnamefont {Schuetz}}, \bibinfo {author} {\bibfnamefont {F.~G. S.~L.}\ \bibnamefont {Brandão}}, \bibinfo {author} {\bibfnamefont {H.~G.}\ \bibnamefont {Katzgraber}},\ and\ \bibinfo {author} {\bibfnamefont {W.~J.}\ \bibnamefont {Zeng}},\ }\bibfield  {title} {\bibinfo {title} {End-to-end resource analysis for quantum interior-point methods and portfolio optimization},\ }\bibfield  {journal} {\bibinfo  {journal} {PRX Quantum}\ }\textbf {\bibinfo {volume} {4}},\ \href
  {https://doi.org/10.1103/prxquantum.4.040325} {10.1103/prxquantum.4.040325} (\bibinfo {year} {2023}{\natexlab{b}})\BibitemShut {NoStop}%
\bibitem [{\citenamefont {Brehm}\ and\ \citenamefont {Weggemans}(2024)}]{2412.13274}%
  \BibitemOpen
  \bibfield  {author} {\bibinfo {author} {\bibfnamefont {M.}~\bibnamefont {Brehm}}\ and\ \bibinfo {author} {\bibfnamefont {J.}~\bibnamefont {Weggemans}},\ }\bibfield  {title} {\bibinfo {title} {Assessing fault-tolerant quantum advantage for $k$-sat with structure},\ }\href@noop {} {\bibfield  {journal} {\bibinfo  {journal} {arXiv:2412.13274}\ } (\bibinfo {year} {2024})}\BibitemShut {NoStop}%
\bibitem [{\citenamefont {Hogg}\ and\ \citenamefont {Portnov}(2000)}]{Hogg2000}%
  \BibitemOpen
  \bibfield  {author} {\bibinfo {author} {\bibfnamefont {T.}~\bibnamefont {Hogg}}\ and\ \bibinfo {author} {\bibfnamefont {D.}~\bibnamefont {Portnov}},\ }\bibfield  {title} {\bibinfo {title} {Quantum optimization},\ }\href {https://doi.org/10.1016/s0020-0255(00)00052-9} {\bibfield  {journal} {\bibinfo  {journal} {Information Sciences}\ }\textbf {\bibinfo {volume} {128}},\ \bibinfo {pages} {181–197} (\bibinfo {year} {2000})}\BibitemShut {NoStop}%
\bibitem [{\citenamefont {Hogg}(2000)}]{Hogg2000search}%
  \BibitemOpen
  \bibfield  {author} {\bibinfo {author} {\bibfnamefont {T.}~\bibnamefont {Hogg}},\ }\bibfield  {title} {\bibinfo {title} {Quantum search heuristics},\ }\bibfield  {journal} {\bibinfo  {journal} {Physical Review A}\ }\textbf {\bibinfo {volume} {61}},\ \href {https://doi.org/10.1103/physreva.61.052311} {10.1103/physreva.61.052311} (\bibinfo {year} {2000})\BibitemShut {NoStop}%
\bibitem [{\citenamefont {Farhi}\ \emph {et~al.}(2014)\citenamefont {Farhi}, \citenamefont {Goldstone},\ and\ \citenamefont {Gutmann}}]{farhi2014quantumapproximateoptimizationalgorithm}%
  \BibitemOpen
  \bibfield  {author} {\bibinfo {author} {\bibfnamefont {E.}~\bibnamefont {Farhi}}, \bibinfo {author} {\bibfnamefont {J.}~\bibnamefont {Goldstone}},\ and\ \bibinfo {author} {\bibfnamefont {S.}~\bibnamefont {Gutmann}},\ }\bibfield  {title} {\bibinfo {title} {A quantum approximate optimization algorithm},\ }\href {https://arxiv.org/abs/1411.4028} {\bibfield  {journal} {\bibinfo  {journal} {arXiv preprint arXiv:1411.4028}\ } (\bibinfo {year} {2014})}\BibitemShut {NoStop}%
\bibitem [{\citenamefont {Shaydulin}\ \emph {et~al.}(2024)\citenamefont {Shaydulin}, \citenamefont {Li}, \citenamefont {Chakrabarti}, \citenamefont {DeCross}, \citenamefont {Herman}, \citenamefont {Kumar}, \citenamefont {Larson}, \citenamefont {Lykov}, \citenamefont {Minssen}, \citenamefont {Sun}, \citenamefont {Alexeev}, \citenamefont {Dreiling}, \citenamefont {Gaebler}, \citenamefont {Gatterman}, \citenamefont {Gerber}, \citenamefont {Gilmore}, \citenamefont {Gresh}, \citenamefont {Hewitt}, \citenamefont {Horst}, \citenamefont {Hu}, \citenamefont {Johansen}, \citenamefont {Matheny}, \citenamefont {Mengle}, \citenamefont {Mills}, \citenamefont {Moses}, \citenamefont {Neyenhuis}, \citenamefont {Siegfried}, \citenamefont {Yalovetzky},\ and\ \citenamefont {Pistoia}}]{Shaydulin2024}%
  \BibitemOpen
  \bibfield  {author} {\bibinfo {author} {\bibfnamefont {R.}~\bibnamefont {Shaydulin}}, \bibinfo {author} {\bibfnamefont {C.}~\bibnamefont {Li}}, \bibinfo {author} {\bibfnamefont {S.}~\bibnamefont {Chakrabarti}}, \bibinfo {author} {\bibfnamefont {M.}~\bibnamefont {DeCross}}, \bibinfo {author} {\bibfnamefont {D.}~\bibnamefont {Herman}}, \bibinfo {author} {\bibfnamefont {N.}~\bibnamefont {Kumar}}, \bibinfo {author} {\bibfnamefont {J.}~\bibnamefont {Larson}}, \bibinfo {author} {\bibfnamefont {D.}~\bibnamefont {Lykov}}, \bibinfo {author} {\bibfnamefont {P.}~\bibnamefont {Minssen}}, \bibinfo {author} {\bibfnamefont {Y.}~\bibnamefont {Sun}}, \bibinfo {author} {\bibfnamefont {Y.}~\bibnamefont {Alexeev}}, \bibinfo {author} {\bibfnamefont {J.~M.}\ \bibnamefont {Dreiling}}, \bibinfo {author} {\bibfnamefont {J.~P.}\ \bibnamefont {Gaebler}}, \bibinfo {author} {\bibfnamefont {T.~M.}\ \bibnamefont {Gatterman}}, \bibinfo {author} {\bibfnamefont {J.~A.}\ \bibnamefont {Gerber}}, \bibinfo {author} {\bibfnamefont
  {K.}~\bibnamefont {Gilmore}}, \bibinfo {author} {\bibfnamefont {D.}~\bibnamefont {Gresh}}, \bibinfo {author} {\bibfnamefont {N.}~\bibnamefont {Hewitt}}, \bibinfo {author} {\bibfnamefont {C.~V.}\ \bibnamefont {Horst}}, \bibinfo {author} {\bibfnamefont {S.}~\bibnamefont {Hu}}, \bibinfo {author} {\bibfnamefont {J.}~\bibnamefont {Johansen}}, \bibinfo {author} {\bibfnamefont {M.}~\bibnamefont {Matheny}}, \bibinfo {author} {\bibfnamefont {T.}~\bibnamefont {Mengle}}, \bibinfo {author} {\bibfnamefont {M.}~\bibnamefont {Mills}}, \bibinfo {author} {\bibfnamefont {S.~A.}\ \bibnamefont {Moses}}, \bibinfo {author} {\bibfnamefont {B.}~\bibnamefont {Neyenhuis}}, \bibinfo {author} {\bibfnamefont {P.}~\bibnamefont {Siegfried}}, \bibinfo {author} {\bibfnamefont {R.}~\bibnamefont {Yalovetzky}},\ and\ \bibinfo {author} {\bibfnamefont {M.}~\bibnamefont {Pistoia}},\ }\bibfield  {title} {\bibinfo {title} {Evidence of scaling advantage for the quantum approximate optimization algorithm on a classically intractable problem},\
  }\bibfield  {journal} {\bibinfo  {journal} {Science Advances}\ }\textbf {\bibinfo {volume} {10}},\ \href {https://doi.org/10.1126/sciadv.adm6761} {10.1126/sciadv.adm6761} (\bibinfo {year} {2024})\BibitemShut {NoStop}%
\bibitem [{\citenamefont {Shaydulin}\ and\ \citenamefont {Pistoia}(2023)}]{Shaydulin2023npgeq}%
  \BibitemOpen
  \bibfield  {author} {\bibinfo {author} {\bibfnamefont {R.}~\bibnamefont {Shaydulin}}\ and\ \bibinfo {author} {\bibfnamefont {M.}~\bibnamefont {Pistoia}},\ }\bibfield  {title} {\bibinfo {title} {Qaoa with $n\cdot p\geq 200$},\ }in\ \href {https://doi.org/10.1109/qce57702.2023.00121} {\emph {\bibinfo {booktitle} {2023 IEEE Int. Conf. Quantum Comput. Eng.}}}\ (\bibinfo  {publisher} {IEEE},\ \bibinfo {year} {2023})\ p.\ \bibinfo {pages} {1074–1077}\BibitemShut {NoStop}%
\bibitem [{\citenamefont {Pelofske}\ \emph {et~al.}(2023)\citenamefont {Pelofske}, \citenamefont {B\"{a}rtschi},\ and\ \citenamefont {Eidenbenz}}]{Pelofske2023}%
  \BibitemOpen
  \bibfield  {author} {\bibinfo {author} {\bibfnamefont {E.}~\bibnamefont {Pelofske}}, \bibinfo {author} {\bibfnamefont {A.}~\bibnamefont {B\"{a}rtschi}},\ and\ \bibinfo {author} {\bibfnamefont {S.}~\bibnamefont {Eidenbenz}},\ }\bibfield  {title} {\bibinfo {title} {Quantum annealing vs. {QAOA}: 127 qubit higher-order ising problems on~{NISQ} computers},\ }in\ \href {https://doi.org/10.1007/978-3-031-32041-5_13} {\emph {\bibinfo {booktitle} {Lecture Notes in Computer Science}}}\ (\bibinfo  {publisher} {Springer Nature Switzerland},\ \bibinfo {year} {2023})\ pp.\ \bibinfo {pages} {240--258}\BibitemShut {NoStop}%
\bibitem [{\citenamefont {Pelofske}\ \emph {et~al.}(2024)\citenamefont {Pelofske}, \citenamefont {B\"{a}rtschi}, \citenamefont {Cincio}, \citenamefont {Golden},\ and\ \citenamefont {Eidenbenz}}]{Pelofske2024}%
  \BibitemOpen
  \bibfield  {author} {\bibinfo {author} {\bibfnamefont {E.}~\bibnamefont {Pelofske}}, \bibinfo {author} {\bibfnamefont {A.}~\bibnamefont {B\"{a}rtschi}}, \bibinfo {author} {\bibfnamefont {L.}~\bibnamefont {Cincio}}, \bibinfo {author} {\bibfnamefont {J.}~\bibnamefont {Golden}},\ and\ \bibinfo {author} {\bibfnamefont {S.}~\bibnamefont {Eidenbenz}},\ }\bibfield  {title} {\bibinfo {title} {Scaling whole-chip qaoa for higher-order ising spin glass models on heavy-hex graphs},\ }\bibfield  {journal} {\bibinfo  {journal} {npj Quantum Information}\ }\textbf {\bibinfo {volume} {10}},\ \href {https://doi.org/10.1038/s41534-024-00906-w} {10.1038/s41534-024-00906-w} (\bibinfo {year} {2024})\BibitemShut {NoStop}%
\bibitem [{\citenamefont {He}\ \emph {et~al.}(2024)\citenamefont {He}, \citenamefont {Amaro}, \citenamefont {Shaydulin},\ and\ \citenamefont {Pistoia}}]{2409.12104}%
  \BibitemOpen
  \bibfield  {author} {\bibinfo {author} {\bibfnamefont {Z.}~\bibnamefont {He}}, \bibinfo {author} {\bibfnamefont {D.}~\bibnamefont {Amaro}}, \bibinfo {author} {\bibfnamefont {R.}~\bibnamefont {Shaydulin}},\ and\ \bibinfo {author} {\bibfnamefont {M.}~\bibnamefont {Pistoia}},\ }\bibfield  {title} {\bibinfo {title} {Performance of quantum approximate optimization with quantum error detection},\ }\href@noop {} {\bibfield  {journal} {\bibinfo  {journal} {arXiv:2409.12104}\ } (\bibinfo {year} {2024})}\BibitemShut {NoStop}%
\bibitem [{\citenamefont {Tasseff}\ \emph {et~al.}(2024)\citenamefont {Tasseff}, \citenamefont {Albash}, \citenamefont {Morrell}, \citenamefont {Vuffray}, \citenamefont {Lokhov}, \citenamefont {Misra},\ and\ \citenamefont {Coffrin}}]{Tasseff2024}%
  \BibitemOpen
  \bibfield  {author} {\bibinfo {author} {\bibfnamefont {B.}~\bibnamefont {Tasseff}}, \bibinfo {author} {\bibfnamefont {T.}~\bibnamefont {Albash}}, \bibinfo {author} {\bibfnamefont {Z.}~\bibnamefont {Morrell}}, \bibinfo {author} {\bibfnamefont {M.}~\bibnamefont {Vuffray}}, \bibinfo {author} {\bibfnamefont {A.~Y.}\ \bibnamefont {Lokhov}}, \bibinfo {author} {\bibfnamefont {S.}~\bibnamefont {Misra}},\ and\ \bibinfo {author} {\bibfnamefont {C.}~\bibnamefont {Coffrin}},\ }\bibfield  {title} {\bibinfo {title} {On the emerging potential of quantum annealing hardware for combinatorial optimization},\ }\href {https://doi.org/10.1007/s10732-024-09530-5} {\bibfield  {journal} {\bibinfo  {journal} {Journal of Heuristics}\ }\textbf {\bibinfo {volume} {30}},\ \bibinfo {pages} {325–358} (\bibinfo {year} {2024})}\BibitemShut {NoStop}%
\bibitem [{\citenamefont {Balint}(2014)}]{balint2014engineering}%
  \BibitemOpen
  \bibfield  {author} {\bibinfo {author} {\bibfnamefont {A.}~\bibnamefont {Balint}},\ }\emph {\bibinfo {title} {Engineering stochastic local search for the satisfiability problem}},\ \href@noop {} {Ph.D. thesis},\ \bibinfo  {school} {Universit{\"a}t Ulm} (\bibinfo {year} {2014})\BibitemShut {NoStop}%
\bibitem [{\citenamefont {Arbelaez}\ \emph {et~al.}(2013)\citenamefont {Arbelaez}, \citenamefont {Truchet},\ and\ \citenamefont {Codognet}}]{arbelaez2013using}%
  \BibitemOpen
  \bibfield  {author} {\bibinfo {author} {\bibfnamefont {A.}~\bibnamefont {Arbelaez}}, \bibinfo {author} {\bibfnamefont {C.}~\bibnamefont {Truchet}},\ and\ \bibinfo {author} {\bibfnamefont {P.}~\bibnamefont {Codognet}},\ }\bibfield  {title} {\bibinfo {title} {Using sequential runtime distributions for the parallel speedup prediction of sat local search},\ }\href@noop {} {\bibfield  {journal} {\bibinfo  {journal} {Theory and Practice of Logic Programming}\ }\textbf {\bibinfo {volume} {13}},\ \bibinfo {pages} {625} (\bibinfo {year} {2013})}\BibitemShut {NoStop}%
\bibitem [{\citenamefont {Barber}\ \emph {et~al.}(2025)\citenamefont {Barber}, \citenamefont {Barnes}, \citenamefont {Bialas}, \citenamefont {Buğdaycı}, \citenamefont {Campbell}, \citenamefont {Gillespie}, \citenamefont {Johar}, \citenamefont {Rajan}, \citenamefont {Richardson}, \citenamefont {Skoric}, \citenamefont {Topal}, \citenamefont {Turner},\ and\ \citenamefont {Ziad}}]{Barber2025}%
  \BibitemOpen
  \bibfield  {author} {\bibinfo {author} {\bibfnamefont {B.}~\bibnamefont {Barber}}, \bibinfo {author} {\bibfnamefont {K.~M.}\ \bibnamefont {Barnes}}, \bibinfo {author} {\bibfnamefont {T.}~\bibnamefont {Bialas}}, \bibinfo {author} {\bibfnamefont {O.}~\bibnamefont {Buğdaycı}}, \bibinfo {author} {\bibfnamefont {E.~T.}\ \bibnamefont {Campbell}}, \bibinfo {author} {\bibfnamefont {N.~I.}\ \bibnamefont {Gillespie}}, \bibinfo {author} {\bibfnamefont {K.}~\bibnamefont {Johar}}, \bibinfo {author} {\bibfnamefont {R.}~\bibnamefont {Rajan}}, \bibinfo {author} {\bibfnamefont {A.~W.}\ \bibnamefont {Richardson}}, \bibinfo {author} {\bibfnamefont {L.}~\bibnamefont {Skoric}}, \bibinfo {author} {\bibfnamefont {C.}~\bibnamefont {Topal}}, \bibinfo {author} {\bibfnamefont {M.~L.}\ \bibnamefont {Turner}},\ and\ \bibinfo {author} {\bibfnamefont {A.~B.}\ \bibnamefont {Ziad}},\ }\bibfield  {title} {\bibinfo {title} {A real-time, scalable, fast and resource-efficient decoder for a quantum computer},\ }\href
  {https://doi.org/10.1038/s41928-024-01319-5} {\bibfield  {journal} {\bibinfo  {journal} {Nature Electronics}\ }\textbf {\bibinfo {volume} {8}},\ \bibinfo {pages} {84–91} (\bibinfo {year} {2025})}\BibitemShut {NoStop}%
\bibitem [{\citenamefont {Gidney}\ \emph {et~al.}(2024)\citenamefont {Gidney}, \citenamefont {Shutty},\ and\ \citenamefont {Jones}}]{gidney2024magic}%
  \BibitemOpen
  \bibfield  {author} {\bibinfo {author} {\bibfnamefont {C.}~\bibnamefont {Gidney}}, \bibinfo {author} {\bibfnamefont {N.}~\bibnamefont {Shutty}},\ and\ \bibinfo {author} {\bibfnamefont {C.}~\bibnamefont {Jones}},\ }\bibfield  {title} {\bibinfo {title} {Magic state cultivation: growing t states as cheap as cnot gates},\ }\href@noop {} {\bibfield  {journal} {\bibinfo  {journal} {arXiv preprint arXiv:2409.17595}\ } (\bibinfo {year} {2024})}\BibitemShut {NoStop}%
\bibitem [{\citenamefont {Zhou}\ \emph {et~al.}(2024)\citenamefont {Zhou}, \citenamefont {Zhao}, \citenamefont {Cain}, \citenamefont {Bluvstein}, \citenamefont {Duckering}, \citenamefont {Hu}, \citenamefont {Wang}, \citenamefont {Kubica},\ and\ \citenamefont {Lukin}}]{zhou2024algorithmic}%
  \BibitemOpen
  \bibfield  {author} {\bibinfo {author} {\bibfnamefont {H.}~\bibnamefont {Zhou}}, \bibinfo {author} {\bibfnamefont {C.}~\bibnamefont {Zhao}}, \bibinfo {author} {\bibfnamefont {M.}~\bibnamefont {Cain}}, \bibinfo {author} {\bibfnamefont {D.}~\bibnamefont {Bluvstein}}, \bibinfo {author} {\bibfnamefont {C.}~\bibnamefont {Duckering}}, \bibinfo {author} {\bibfnamefont {H.-Y.}\ \bibnamefont {Hu}}, \bibinfo {author} {\bibfnamefont {S.-T.}\ \bibnamefont {Wang}}, \bibinfo {author} {\bibfnamefont {A.}~\bibnamefont {Kubica}},\ and\ \bibinfo {author} {\bibfnamefont {M.~D.}\ \bibnamefont {Lukin}},\ }\bibfield  {title} {\bibinfo {title} {Algorithmic fault tolerance for fast quantum computing},\ }\href@noop {} {\bibfield  {journal} {\bibinfo  {journal} {arXiv preprint arXiv:2406.17653}\ } (\bibinfo {year} {2024})}\BibitemShut {NoStop}%
\bibitem [{\citenamefont {{TOP500}}(2025)}]{top500}%
  \BibitemOpen
  \bibfield  {author} {\bibinfo {author} {\bibnamefont {{TOP500}}},\ }\href@noop {} {\bibinfo {title} {Top500 supercomputer sites}},\ \bibinfo {howpublished} {\url{https://www.top500.org/}} (\bibinfo {year} {2025}),\ \bibinfo {note} {accessed: 2025-03-19}\BibitemShut {NoStop}%
\bibitem [{\citenamefont {Wu}\ and\ \citenamefont {Zhong}(2023)}]{Wu2023}%
  \BibitemOpen
  \bibfield  {author} {\bibinfo {author} {\bibfnamefont {Y.}~\bibnamefont {Wu}}\ and\ \bibinfo {author} {\bibfnamefont {L.}~\bibnamefont {Zhong}},\ }\bibfield  {title} {\bibinfo {title} {Fusion blossom: Fast mwpm decoders for qec},\ }in\ \href {https://doi.org/10.1109/qce57702.2023.00107} {\emph {\bibinfo {booktitle} {2023 IEEE International Conference on Quantum Computing and Engineering (QCE)}}}\ (\bibinfo  {publisher} {IEEE},\ \bibinfo {year} {2023})\ p.\ \bibinfo {pages} {928–938}\BibitemShut {NoStop}%
\bibitem [{\citenamefont {Higgott}\ and\ \citenamefont {Gidney}(2025)}]{Higgott2025}%
  \BibitemOpen
  \bibfield  {author} {\bibinfo {author} {\bibfnamefont {O.}~\bibnamefont {Higgott}}\ and\ \bibinfo {author} {\bibfnamefont {C.}~\bibnamefont {Gidney}},\ }\bibfield  {title} {\bibinfo {title} {Sparse blossom: correcting a million errors per core second with minimum-weight matching},\ }\href {https://doi.org/10.22331/q-2025-01-20-1600} {\bibfield  {journal} {\bibinfo  {journal} {Quantum}\ }\textbf {\bibinfo {volume} {9}},\ \bibinfo {pages} {1600} (\bibinfo {year} {2025})}\BibitemShut {NoStop}%
\bibitem [{\citenamefont {Litinski}(2019{\natexlab{a}})}]{litinski2019magic}%
  \BibitemOpen
  \bibfield  {author} {\bibinfo {author} {\bibfnamefont {D.}~\bibnamefont {Litinski}},\ }\bibfield  {title} {\bibinfo {title} {Magic state distillation: Not as costly as you think},\ }\href@noop {} {\bibfield  {journal} {\bibinfo  {journal} {Quantum}\ }\textbf {\bibinfo {volume} {3}},\ \bibinfo {pages} {205} (\bibinfo {year} {2019}{\natexlab{a}})}\BibitemShut {NoStop}%
\bibitem [{\citenamefont {Coja-Oghlan}(2009)}]{CojaOghlan2009}%
  \BibitemOpen
  \bibfield  {author} {\bibinfo {author} {\bibfnamefont {A.}~\bibnamefont {Coja-Oghlan}},\ }\bibfield  {title} {\bibinfo {title} {Random constraint satisfaction problems},\ }\href {https://doi.org/10.4204/eptcs.9.4} {\bibfield  {journal} {\bibinfo  {journal} {Electronic Proceedings in Theoretical Computer Science}\ }\textbf {\bibinfo {volume} {9}},\ \bibinfo {pages} {32–37} (\bibinfo {year} {2009})}\BibitemShut {NoStop}%
\bibitem [{\citenamefont {Dequen}\ and\ \citenamefont {Dubois}(2006)}]{dequen2006efficient}%
  \BibitemOpen
  \bibfield  {author} {\bibinfo {author} {\bibfnamefont {G.}~\bibnamefont {Dequen}}\ and\ \bibinfo {author} {\bibfnamefont {O.}~\bibnamefont {Dubois}},\ }\bibfield  {title} {\bibinfo {title} {An efficient approach to solving random k-sat problems},\ }\href@noop {} {\bibfield  {journal} {\bibinfo  {journal} {Journal of Automated Reasoning}\ }\textbf {\bibinfo {volume} {37}},\ \bibinfo {pages} {261} (\bibinfo {year} {2006})}\BibitemShut {NoStop}%
\bibitem [{\citenamefont {Marino}\ \emph {et~al.}(2016)\citenamefont {Marino}, \citenamefont {Parisi},\ and\ \citenamefont {Ricci-Tersenghi}}]{marino2016backtracking}%
  \BibitemOpen
  \bibfield  {author} {\bibinfo {author} {\bibfnamefont {R.}~\bibnamefont {Marino}}, \bibinfo {author} {\bibfnamefont {G.}~\bibnamefont {Parisi}},\ and\ \bibinfo {author} {\bibfnamefont {F.}~\bibnamefont {Ricci-Tersenghi}},\ }\bibfield  {title} {\bibinfo {title} {The backtracking survey propagation algorithm for solving random k-sat problems},\ }\href@noop {} {\bibfield  {journal} {\bibinfo  {journal} {Nature communications}\ }\textbf {\bibinfo {volume} {7}},\ \bibinfo {pages} {12996} (\bibinfo {year} {2016})}\BibitemShut {NoStop}%
\bibitem [{\citenamefont {Coja-Oghlan}(2010)}]{coja2010better}%
  \BibitemOpen
  \bibfield  {author} {\bibinfo {author} {\bibfnamefont {A.}~\bibnamefont {Coja-Oghlan}},\ }\bibfield  {title} {\bibinfo {title} {A better algorithm for random k-sat},\ }\href@noop {} {\bibfield  {journal} {\bibinfo  {journal} {SIAM Journal on Computing}\ }\textbf {\bibinfo {volume} {39}},\ \bibinfo {pages} {2823} (\bibinfo {year} {2010})}\BibitemShut {NoStop}%
\bibitem [{\citenamefont {Aguirre}\ and\ \citenamefont {Vardi}(2001)}]{aguirre2001random}%
  \BibitemOpen
  \bibfield  {author} {\bibinfo {author} {\bibfnamefont {A.~S.~M.}\ \bibnamefont {Aguirre}}\ and\ \bibinfo {author} {\bibfnamefont {M.}~\bibnamefont {Vardi}},\ }\bibfield  {title} {\bibinfo {title} {Random 3-sat and bdds: The plot thickens further},\ }in\ \href@noop {} {\emph {\bibinfo {booktitle} {Principles and Practice of Constraint Programming—CP 2001: 7th International Conference, CP 2001 Paphos, Cyprus, November 26--December 1, 2001 Proceedings 7}}}\ (\bibinfo {organization} {Springer},\ \bibinfo {year} {2001})\ pp.\ \bibinfo {pages} {121--136}\BibitemShut {NoStop}%
\bibitem [{\citenamefont {Campos}\ \emph {et~al.}(2021)\citenamefont {Campos}, \citenamefont {Venegas-Andraca},\ and\ \citenamefont {Lanzagorta}}]{Campos2021}%
  \BibitemOpen
  \bibfield  {author} {\bibinfo {author} {\bibfnamefont {E.}~\bibnamefont {Campos}}, \bibinfo {author} {\bibfnamefont {S.~E.}\ \bibnamefont {Venegas-Andraca}},\ and\ \bibinfo {author} {\bibfnamefont {M.}~\bibnamefont {Lanzagorta}},\ }\bibfield  {title} {\bibinfo {title} {Quantum tunneling and quantum walks as algorithmic resources to solve hard k-sat instances},\ }\bibfield  {journal} {\bibinfo  {journal} {Scientific Reports}\ }\textbf {\bibinfo {volume} {11}},\ \href {https://doi.org/10.1038/s41598-021-95801-1} {10.1038/s41598-021-95801-1} (\bibinfo {year} {2021})\BibitemShut {NoStop}%
\bibitem [{\citenamefont {Zhang}\ \emph {et~al.}(2022)\citenamefont {Zhang}, \citenamefont {Sone},\ and\ \citenamefont {Zhuang}}]{zhang2022quantum}%
  \BibitemOpen
  \bibfield  {author} {\bibinfo {author} {\bibfnamefont {B.}~\bibnamefont {Zhang}}, \bibinfo {author} {\bibfnamefont {A.}~\bibnamefont {Sone}},\ and\ \bibinfo {author} {\bibfnamefont {Q.}~\bibnamefont {Zhuang}},\ }\bibfield  {title} {\bibinfo {title} {Quantum computational phase transition in combinatorial problems},\ }\href {https://doi.org/10.1038/s41534-022-00596-2} {\bibfield  {journal} {\bibinfo  {journal} {npj Quantum Information}\ }\textbf {\bibinfo {volume} {8}},\ \bibinfo {pages} {87} (\bibinfo {year} {2022})}\BibitemShut {NoStop}%
\bibitem [{\citenamefont {Boulebnane}\ \emph {et~al.}(2024)\citenamefont {Boulebnane}, \citenamefont {Ciudad-Alañón}, \citenamefont {Mineh}, \citenamefont {Montanaro},\ and\ \citenamefont {Vaishnav}}]{2411.17442}%
  \BibitemOpen
  \bibfield  {author} {\bibinfo {author} {\bibfnamefont {S.}~\bibnamefont {Boulebnane}}, \bibinfo {author} {\bibfnamefont {M.}~\bibnamefont {Ciudad-Alañón}}, \bibinfo {author} {\bibfnamefont {L.}~\bibnamefont {Mineh}}, \bibinfo {author} {\bibfnamefont {A.}~\bibnamefont {Montanaro}},\ and\ \bibinfo {author} {\bibfnamefont {N.}~\bibnamefont {Vaishnav}},\ }\bibfield  {title} {\bibinfo {title} {Applying the quantum approximate optimization algorithm to general constraint satisfaction problems},\ }\href@noop {} {\bibfield  {journal} {\bibinfo  {journal} {arXiv:2411.17442}\ } (\bibinfo {year} {2024})}\BibitemShut {NoStop}%
\bibitem [{\citenamefont {Heule}\ \emph {et~al.}(2019{\natexlab{a}})\citenamefont {Heule}, \citenamefont {J{\"a}rvisalo},\ and\ \citenamefont {Suda}}]{heule2019sat}%
  \BibitemOpen
  \bibfield  {author} {\bibinfo {author} {\bibfnamefont {M.~J.}\ \bibnamefont {Heule}}, \bibinfo {author} {\bibfnamefont {M.}~\bibnamefont {J{\"a}rvisalo}},\ and\ \bibinfo {author} {\bibfnamefont {M.}~\bibnamefont {Suda}},\ }\bibfield  {title} {\bibinfo {title} {Sat competition 2018},\ }\href@noop {} {\bibfield  {journal} {\bibinfo  {journal} {Journal on Satisfiability, Boolean Modelling and Computation}\ }\textbf {\bibinfo {volume} {11}},\ \bibinfo {pages} {133} (\bibinfo {year} {2019}{\natexlab{a}})}\BibitemShut {NoStop}%
\bibitem [{\citenamefont {Biere}(2016)}]{biere2016splatz}%
  \BibitemOpen
  \bibfield  {author} {\bibinfo {author} {\bibfnamefont {A.}~\bibnamefont {Biere}},\ }\bibfield  {title} {\bibinfo {title} {Splatz, lingeling, plingeling, treengeling, yalsat entering the sat competition 2016},\ }\href@noop {} {\bibfield  {journal} {\bibinfo  {journal} {Proc. of SAT Competition}\ }\textbf {\bibinfo {volume} {14}},\ \bibinfo {pages} {316} (\bibinfo {year} {2016})}\BibitemShut {NoStop}%
\bibitem [{\citenamefont {Balyo}\ \emph {et~al.}()\citenamefont {Balyo}, \citenamefont {Heule},\ and\ \citenamefont {J{\"a}rvisalo}}]{balyosat}%
  \BibitemOpen
  \bibfield  {author} {\bibinfo {author} {\bibfnamefont {T.}~\bibnamefont {Balyo}}, \bibinfo {author} {\bibfnamefont {M.~J.}\ \bibnamefont {Heule}},\ and\ \bibinfo {author} {\bibfnamefont {M.}~\bibnamefont {J{\"a}rvisalo}},\ }\href@noop {} {\bibinfo {title} {Sat competition 2017}}\BibitemShut {NoStop}%
\bibitem [{\citenamefont {Gableske}\ \emph {et~al.}(2013)\citenamefont {Gableske}, \citenamefont {M{\"u}elich},\ and\ \citenamefont {Diepold}}]{gableske2013performance}%
  \BibitemOpen
  \bibfield  {author} {\bibinfo {author} {\bibfnamefont {O.}~\bibnamefont {Gableske}}, \bibinfo {author} {\bibfnamefont {S.}~\bibnamefont {M{\"u}elich}},\ and\ \bibinfo {author} {\bibfnamefont {D.}~\bibnamefont {Diepold}},\ }\bibfield  {title} {\bibinfo {title} {On the performance of cdcl-based message passing inspired decimation using $\rho$$\sigma$pmpi},\ }in\ \href@noop {} {\emph {\bibinfo {booktitle} {Pragmatics of SAT Workshop}}}\ (\bibinfo {year} {2013})\BibitemShut {NoStop}%
\bibitem [{\citenamefont {Balyo}\ \emph {et~al.}(2017)\citenamefont {Balyo}, \citenamefont {Heule},\ and\ \citenamefont {Jarvisalo}}]{balyo2017sat}%
  \BibitemOpen
  \bibfield  {author} {\bibinfo {author} {\bibfnamefont {T.}~\bibnamefont {Balyo}}, \bibinfo {author} {\bibfnamefont {M.}~\bibnamefont {Heule}},\ and\ \bibinfo {author} {\bibfnamefont {M.}~\bibnamefont {Jarvisalo}},\ }\bibfield  {title} {\bibinfo {title} {Sat competition 2016: Recent developments},\ }in\ \href@noop {} {\emph {\bibinfo {booktitle} {Proceedings of the AAAI Conference on Artificial Intelligence}}},\ Vol.~\bibinfo {volume} {31}\ (\bibinfo {year} {2017})\BibitemShut {NoStop}%
\bibitem [{\citenamefont {Heule}\ \emph {et~al.}(2019{\natexlab{b}})\citenamefont {Heule}, \citenamefont {J{\"a}rvisalo},\ and\ \citenamefont {Suda}}]{heule2019benchmark}%
  \BibitemOpen
  \bibfield  {author} {\bibinfo {author} {\bibfnamefont {M.~J.}\ \bibnamefont {Heule}}, \bibinfo {author} {\bibfnamefont {M.}~\bibnamefont {J{\"a}rvisalo}},\ and\ \bibinfo {author} {\bibfnamefont {M.}~\bibnamefont {Suda}},\ }\bibfield  {title} {\bibinfo {title} {Benchmark selection of sat race 2019},\ }\href@noop {} {\bibfield  {journal} {\bibinfo  {journal} {SAT RACE 2019}\ ,\ \bibinfo {pages} {46}} (\bibinfo {year} {2019}{\natexlab{b}})}\BibitemShut {NoStop}%
\bibitem [{\citenamefont {Balint}\ and\ \citenamefont {Sch{\"o}ning}(2012)}]{balint2012choosing}%
  \BibitemOpen
  \bibfield  {author} {\bibinfo {author} {\bibfnamefont {A.}~\bibnamefont {Balint}}\ and\ \bibinfo {author} {\bibfnamefont {U.}~\bibnamefont {Sch{\"o}ning}},\ }\bibfield  {title} {\bibinfo {title} {Choosing probability distributions for stochastic local search and the role of make versus break},\ }in\ \href@noop {} {\emph {\bibinfo {booktitle} {International Conference on Theory and Applications of Satisfiability Testing}}}\ (\bibinfo {organization} {Springer},\ \bibinfo {year} {2012})\ pp.\ \bibinfo {pages} {16--29}\BibitemShut {NoStop}%
\bibitem [{\citenamefont {Cai}\ \emph {et~al.}(2013)\citenamefont {Cai}, \citenamefont {Su},\ and\ \citenamefont {Luo}}]{cai2013improving}%
  \BibitemOpen
  \bibfield  {author} {\bibinfo {author} {\bibfnamefont {S.}~\bibnamefont {Cai}}, \bibinfo {author} {\bibfnamefont {K.}~\bibnamefont {Su}},\ and\ \bibinfo {author} {\bibfnamefont {C.}~\bibnamefont {Luo}},\ }\bibfield  {title} {\bibinfo {title} {Improving walksat for random k-satisfiability problem with k> 3},\ }in\ \href@noop {} {\emph {\bibinfo {booktitle} {Proceedings of the AAAI Conference on Artificial Intelligence}}},\ Vol.~\bibinfo {volume} {27}\ (\bibinfo {year} {2013})\ pp.\ \bibinfo {pages} {145--151}\BibitemShut {NoStop}%
\bibitem [{\citenamefont {Fu}\ \emph {et~al.}(2022)\citenamefont {Fu}, \citenamefont {Liu}, \citenamefont {Wu}, \citenamefont {Xu},\ and\ \citenamefont {Sutcliffe}}]{fu2022improving}%
  \BibitemOpen
  \bibfield  {author} {\bibinfo {author} {\bibfnamefont {H.}~\bibnamefont {Fu}}, \bibinfo {author} {\bibfnamefont {J.}~\bibnamefont {Liu}}, \bibinfo {author} {\bibfnamefont {G.}~\bibnamefont {Wu}}, \bibinfo {author} {\bibfnamefont {Y.}~\bibnamefont {Xu}},\ and\ \bibinfo {author} {\bibfnamefont {G.}~\bibnamefont {Sutcliffe}},\ }\bibfield  {title} {\bibinfo {title} {Improving probability selection based weights for satisfiability problems},\ }\href@noop {} {\bibfield  {journal} {\bibinfo  {journal} {Knowledge-based systems}\ }\textbf {\bibinfo {volume} {245}},\ \bibinfo {pages} {108572} (\bibinfo {year} {2022})}\BibitemShut {NoStop}%
\bibitem [{\citenamefont {Cai}()}]{walksatlm_implementation}%
  \BibitemOpen
  \bibfield  {author} {\bibinfo {author} {\bibfnamefont {S.}~\bibnamefont {Cai}},\ }\href@noop {} {}\bibinfo {howpublished} {\url{http://lcs.ios.ac.cn/~caisw/SAT.html}}\BibitemShut {NoStop}%
\bibitem [{\citenamefont {Brassard}\ \emph {et~al.}(2000)\citenamefont {Brassard}, \citenamefont {Hoyer}, \citenamefont {Mosca},\ and\ \citenamefont {Tapp}}]{brassard2000quantum}%
  \BibitemOpen
  \bibfield  {author} {\bibinfo {author} {\bibfnamefont {G.}~\bibnamefont {Brassard}}, \bibinfo {author} {\bibfnamefont {P.}~\bibnamefont {Hoyer}}, \bibinfo {author} {\bibfnamefont {M.}~\bibnamefont {Mosca}},\ and\ \bibinfo {author} {\bibfnamefont {A.}~\bibnamefont {Tapp}},\ }\bibfield  {title} {\bibinfo {title} {Quantum amplitude amplification and estimation},\ }\href@noop {} {\bibfield  {journal} {\bibinfo  {journal} {arXiv preprint quant-ph/0005055}\ } (\bibinfo {year} {2000})}\BibitemShut {NoStop}%
\bibitem [{\citenamefont {Horsman}\ \emph {et~al.}(2012)\citenamefont {Horsman}, \citenamefont {Fowler}, \citenamefont {Devitt},\ and\ \citenamefont {Van~Meter}}]{horsman2012surface}%
  \BibitemOpen
  \bibfield  {author} {\bibinfo {author} {\bibfnamefont {D.}~\bibnamefont {Horsman}}, \bibinfo {author} {\bibfnamefont {A.~G.}\ \bibnamefont {Fowler}}, \bibinfo {author} {\bibfnamefont {S.}~\bibnamefont {Devitt}},\ and\ \bibinfo {author} {\bibfnamefont {R.}~\bibnamefont {Van~Meter}},\ }\bibfield  {title} {\bibinfo {title} {Surface code quantum computing by lattice surgery},\ }\href@noop {} {\bibfield  {journal} {\bibinfo  {journal} {New Journal of Physics}\ }\textbf {\bibinfo {volume} {14}},\ \bibinfo {pages} {123011} (\bibinfo {year} {2012})}\BibitemShut {NoStop}%
\bibitem [{\citenamefont {Gidney}\ and\ \citenamefont {Ekerå}(2021)}]{Gidney2021}%
  \BibitemOpen
  \bibfield  {author} {\bibinfo {author} {\bibfnamefont {C.}~\bibnamefont {Gidney}}\ and\ \bibinfo {author} {\bibfnamefont {M.}~\bibnamefont {Ekerå}},\ }\bibfield  {title} {\bibinfo {title} {How to factor 2048 bit rsa integers in 8 hours using 20 million noisy qubits},\ }\href {https://doi.org/10.22331/q-2021-04-15-433} {\bibfield  {journal} {\bibinfo  {journal} {Quantum}\ }\textbf {\bibinfo {volume} {5}},\ \bibinfo {pages} {433} (\bibinfo {year} {2021})}\BibitemShut {NoStop}%
\bibitem [{\citenamefont {Kliuchnikov}\ \emph {et~al.}(2023)\citenamefont {Kliuchnikov}, \citenamefont {Lauter}, \citenamefont {Minko}, \citenamefont {Paetznick},\ and\ \citenamefont {Petit}}]{kliuchnikov2023shorter}%
  \BibitemOpen
  \bibfield  {author} {\bibinfo {author} {\bibfnamefont {V.}~\bibnamefont {Kliuchnikov}}, \bibinfo {author} {\bibfnamefont {K.}~\bibnamefont {Lauter}}, \bibinfo {author} {\bibfnamefont {R.}~\bibnamefont {Minko}}, \bibinfo {author} {\bibfnamefont {A.}~\bibnamefont {Paetznick}},\ and\ \bibinfo {author} {\bibfnamefont {C.}~\bibnamefont {Petit}},\ }\bibfield  {title} {\bibinfo {title} {Shorter quantum circuits via single-qubit gate approximation},\ }\href@noop {} {\bibfield  {journal} {\bibinfo  {journal} {Quantum}\ }\textbf {\bibinfo {volume} {7}},\ \bibinfo {pages} {1208} (\bibinfo {year} {2023})}\BibitemShut {NoStop}%
\bibitem [{\citenamefont {Litinski}\ and\ \citenamefont {Nickerson}(2022)}]{litinski2022active}%
  \BibitemOpen
  \bibfield  {author} {\bibinfo {author} {\bibfnamefont {D.}~\bibnamefont {Litinski}}\ and\ \bibinfo {author} {\bibfnamefont {N.}~\bibnamefont {Nickerson}},\ }\bibfield  {title} {\bibinfo {title} {Active volume: An architecture for efficient fault-tolerant quantum computers with limited non-local connections},\ }\href@noop {} {\bibfield  {journal} {\bibinfo  {journal} {arXiv preprint arXiv:2211.15465}\ } (\bibinfo {year} {2022})}\BibitemShut {NoStop}%
\bibitem [{\citenamefont {Hagberg}\ \emph {et~al.}(2008)\citenamefont {Hagberg}, \citenamefont {Swart},\ and\ \citenamefont {Schult}}]{osti_960616}%
  \BibitemOpen
  \bibfield  {author} {\bibinfo {author} {\bibfnamefont {A.}~\bibnamefont {Hagberg}}, \bibinfo {author} {\bibfnamefont {P.~J.}\ \bibnamefont {Swart}},\ and\ \bibinfo {author} {\bibfnamefont {D.~A.}\ \bibnamefont {Schult}},\ }\bibfield  {title} {\bibinfo {title} {Exploring network structure, dynamics, and function using networkx}\ }(\bibinfo {organization} {Los Alamos National Laboratory (LANL), Los Alamos, NM (United States)},\ \bibinfo {year} {2008})\BibitemShut {NoStop}%
\bibitem [{\citenamefont {Vizing}(1964)}]{vizing1964estimate}%
  \BibitemOpen
  \bibfield  {author} {\bibinfo {author} {\bibfnamefont {V.~G.}\ \bibnamefont {Vizing}},\ }\bibfield  {title} {\bibinfo {title} {On an estimate of the chromatic class of a p-graph},\ }\href@noop {} {\bibfield  {journal} {\bibinfo  {journal} {Diskret analiz}\ }\textbf {\bibinfo {volume} {3}},\ \bibinfo {pages} {25} (\bibinfo {year} {1964})}\BibitemShut {NoStop}%
\bibitem [{\citenamefont {Ye}\ \emph {et~al.}(2024)\citenamefont {Ye}, \citenamefont {Kline}, \citenamefont {Chen}, \citenamefont {Yen},\ and\ \citenamefont {O’Brien}}]{ye2024ultrafast}%
  \BibitemOpen
  \bibfield  {author} {\bibinfo {author} {\bibfnamefont {Y.}~\bibnamefont {Ye}}, \bibinfo {author} {\bibfnamefont {J.~B.}\ \bibnamefont {Kline}}, \bibinfo {author} {\bibfnamefont {S.}~\bibnamefont {Chen}}, \bibinfo {author} {\bibfnamefont {A.}~\bibnamefont {Yen}},\ and\ \bibinfo {author} {\bibfnamefont {K.~P.}\ \bibnamefont {O’Brien}},\ }\bibfield  {title} {\bibinfo {title} {Ultrafast superconducting qubit readout with the quarton coupler},\ }\href@noop {} {\bibfield  {journal} {\bibinfo  {journal} {Science Advances}\ }\textbf {\bibinfo {volume} {10}},\ \bibinfo {pages} {eado9094} (\bibinfo {year} {2024})}\BibitemShut {NoStop}%
\bibitem [{\citenamefont {Acharya}\ \emph {et~al.}(2024)\citenamefont {Acharya}, \citenamefont {Aghababaie-Beni}, \citenamefont {Aleiner}, \citenamefont {Andersen}, \citenamefont {Ansmann}, \citenamefont {Arute}, \citenamefont {Arya}, \citenamefont {Asfaw}, \citenamefont {Astrakhantsev}, \citenamefont {Atalaya} \emph {et~al.}}]{acharya2024quantum}%
  \BibitemOpen
  \bibfield  {author} {\bibinfo {author} {\bibfnamefont {R.}~\bibnamefont {Acharya}}, \bibinfo {author} {\bibfnamefont {L.}~\bibnamefont {Aghababaie-Beni}}, \bibinfo {author} {\bibfnamefont {I.}~\bibnamefont {Aleiner}}, \bibinfo {author} {\bibfnamefont {T.~I.}\ \bibnamefont {Andersen}}, \bibinfo {author} {\bibfnamefont {M.}~\bibnamefont {Ansmann}}, \bibinfo {author} {\bibfnamefont {F.}~\bibnamefont {Arute}}, \bibinfo {author} {\bibfnamefont {K.}~\bibnamefont {Arya}}, \bibinfo {author} {\bibfnamefont {A.}~\bibnamefont {Asfaw}}, \bibinfo {author} {\bibfnamefont {N.}~\bibnamefont {Astrakhantsev}}, \bibinfo {author} {\bibfnamefont {J.}~\bibnamefont {Atalaya}}, \emph {et~al.},\ }\bibfield  {title} {\bibinfo {title} {Quantum error correction below the surface code threshold},\ }\href@noop {} {\bibfield  {journal} {\bibinfo  {journal} {arXiv preprint arXiv:2408.13687}\ } (\bibinfo {year} {2024})}\BibitemShut {NoStop}%
\bibitem [{\citenamefont {Lin}\ \emph {et~al.}(2024)\citenamefont {Lin}, \citenamefont {Cho}, \citenamefont {Chen}, \citenamefont {Vavilov}, \citenamefont {Wang},\ and\ \citenamefont {Manucharyan}}]{2407.15783}%
  \BibitemOpen
  \bibfield  {author} {\bibinfo {author} {\bibfnamefont {W.-J.}\ \bibnamefont {Lin}}, \bibinfo {author} {\bibfnamefont {H.}~\bibnamefont {Cho}}, \bibinfo {author} {\bibfnamefont {Y.}~\bibnamefont {Chen}}, \bibinfo {author} {\bibfnamefont {M.~G.}\ \bibnamefont {Vavilov}}, \bibinfo {author} {\bibfnamefont {C.}~\bibnamefont {Wang}},\ and\ \bibinfo {author} {\bibfnamefont {V.~E.}\ \bibnamefont {Manucharyan}},\ }\bibfield  {title} {\bibinfo {title} {24 days-stable cnot-gate on fluxonium qubits with over 99.9\% fidelity},\ }\href@noop {} {\bibfield  {journal} {\bibinfo  {journal} {arXiv:2407.15783}\ } (\bibinfo {year} {2024})}\BibitemShut {NoStop}%
\bibitem [{\citenamefont {Löschnauer}\ \emph {et~al.}(2024)\citenamefont {Löschnauer}, \citenamefont {Toba}, \citenamefont {Hughes}, \citenamefont {King}, \citenamefont {Weber}, \citenamefont {Srinivas}, \citenamefont {Matt}, \citenamefont {Nourshargh}, \citenamefont {Allcock}, \citenamefont {Ballance}, \citenamefont {Matthiesen}, \citenamefont {Malinowski},\ and\ \citenamefont {Harty}}]{2407.07694}%
  \BibitemOpen
  \bibfield  {author} {\bibinfo {author} {\bibfnamefont {C.~M.}\ \bibnamefont {Löschnauer}}, \bibinfo {author} {\bibfnamefont {J.~M.}\ \bibnamefont {Toba}}, \bibinfo {author} {\bibfnamefont {A.~C.}\ \bibnamefont {Hughes}}, \bibinfo {author} {\bibfnamefont {S.~A.}\ \bibnamefont {King}}, \bibinfo {author} {\bibfnamefont {M.~A.}\ \bibnamefont {Weber}}, \bibinfo {author} {\bibfnamefont {R.}~\bibnamefont {Srinivas}}, \bibinfo {author} {\bibfnamefont {R.}~\bibnamefont {Matt}}, \bibinfo {author} {\bibfnamefont {R.}~\bibnamefont {Nourshargh}}, \bibinfo {author} {\bibfnamefont {D.~T.~C.}\ \bibnamefont {Allcock}}, \bibinfo {author} {\bibfnamefont {C.~J.}\ \bibnamefont {Ballance}}, \bibinfo {author} {\bibfnamefont {C.}~\bibnamefont {Matthiesen}}, \bibinfo {author} {\bibfnamefont {M.}~\bibnamefont {Malinowski}},\ and\ \bibinfo {author} {\bibfnamefont {T.~P.}\ \bibnamefont {Harty}},\ }\bibfield  {title} {\bibinfo {title} {Scalable, high-fidelity all-electronic control of trapped-ion qubits},\ }\href@noop {} {\bibfield
  {journal} {\bibinfo  {journal} {arXiv:2407.07694}\ } (\bibinfo {year} {2024})}\BibitemShut {NoStop}%
\bibitem [{\citenamefont {Kliuchnikov}\ and\ \citenamefont {Schoute}(2024)}]{kliuchnikov2024minimalentanglementinjectingdiagonal}%
  \BibitemOpen
  \bibfield  {author} {\bibinfo {author} {\bibfnamefont {V.}~\bibnamefont {Kliuchnikov}}\ and\ \bibinfo {author} {\bibfnamefont {E.}~\bibnamefont {Schoute}},\ }\href {https://arxiv.org/abs/2403.18900} {\bibinfo {title} {Minimal entanglement for injecting diagonal gates}} (\bibinfo {year} {2024}),\ \Eprint {https://arxiv.org/abs/2403.18900} {arXiv:2403.18900 [quant-ph]} \BibitemShut {NoStop}%
\bibitem [{\citenamefont {Gidney}\ and\ \citenamefont {Fowler}(2019)}]{Gidney2019efficientmagicstate}%
  \BibitemOpen
  \bibfield  {author} {\bibinfo {author} {\bibfnamefont {C.}~\bibnamefont {Gidney}}\ and\ \bibinfo {author} {\bibfnamefont {A.~G.}\ \bibnamefont {Fowler}},\ }\bibfield  {title} {\bibinfo {title} {Efficient magic state factories with a catalyzed {$|CCZ\rangle$} to {$2|T\rangle$} transformation},\ }\href {https://doi.org/10.22331/q-2019-04-30-135} {\bibfield  {journal} {\bibinfo  {journal} {{Quantum}}\ }\textbf {\bibinfo {volume} {3}},\ \bibinfo {pages} {135} (\bibinfo {year} {2019})}\BibitemShut {NoStop}%
\bibitem [{\citenamefont {Breuckmann}\ and\ \citenamefont {Eberhardt}(2021)}]{Breuckmann_2021}%
  \BibitemOpen
  \bibfield  {author} {\bibinfo {author} {\bibfnamefont {N.~P.}\ \bibnamefont {Breuckmann}}\ and\ \bibinfo {author} {\bibfnamefont {J.~N.}\ \bibnamefont {Eberhardt}},\ }\bibfield  {title} {\bibinfo {title} {Quantum low-density parity-check codes},\ }\bibfield  {journal} {\bibinfo  {journal} {PRX Quantum}\ }\textbf {\bibinfo {volume} {2}},\ \href {https://doi.org/10.1103/prxquantum.2.040101} {10.1103/prxquantum.2.040101} (\bibinfo {year} {2021})\BibitemShut {NoStop}%
\bibitem [{\citenamefont {Yoshida}\ \emph {et~al.}(2024)\citenamefont {Yoshida}, \citenamefont {Tamiya},\ and\ \citenamefont {Yamasaki}}]{yoshida2024}%
  \BibitemOpen
  \bibfield  {author} {\bibinfo {author} {\bibfnamefont {S.}~\bibnamefont {Yoshida}}, \bibinfo {author} {\bibfnamefont {S.}~\bibnamefont {Tamiya}},\ and\ \bibinfo {author} {\bibfnamefont {H.}~\bibnamefont {Yamasaki}},\ }\href {https://arxiv.org/abs/2402.09606} {\bibinfo {title} {Concatenate codes, save qubits}} (\bibinfo {year} {2024}),\ \Eprint {https://arxiv.org/abs/2402.09606} {arXiv:2402.09606 [quant-ph]} \BibitemShut {NoStop}%
\bibitem [{\citenamefont {Goto}(2024)}]{goto2024high}%
  \BibitemOpen
  \bibfield  {author} {\bibinfo {author} {\bibfnamefont {H.}~\bibnamefont {Goto}},\ }\bibfield  {title} {\bibinfo {title} {High-performance fault-tolerant quantum computing with many-hypercube codes},\ }\href@noop {} {\bibfield  {journal} {\bibinfo  {journal} {Science Advances}\ }\textbf {\bibinfo {volume} {10}},\ \bibinfo {pages} {eadp6388} (\bibinfo {year} {2024})}\BibitemShut {NoStop}%
\bibitem [{\citenamefont {Panteleev}\ and\ \citenamefont {Kalachev}(2021)}]{Panteleev_2021}%
  \BibitemOpen
  \bibfield  {author} {\bibinfo {author} {\bibfnamefont {P.}~\bibnamefont {Panteleev}}\ and\ \bibinfo {author} {\bibfnamefont {G.}~\bibnamefont {Kalachev}},\ }\bibfield  {title} {\bibinfo {title} {Degenerate quantum ldpc codes with good finite length performance},\ }\href {https://doi.org/10.22331/q-2021-11-22-585} {\bibfield  {journal} {\bibinfo  {journal} {Quantum}\ }\textbf {\bibinfo {volume} {5}},\ \bibinfo {pages} {585} (\bibinfo {year} {2021})}\BibitemShut {NoStop}%
\bibitem [{\citenamefont {Cowtan}\ \emph {et~al.}(2025)\citenamefont {Cowtan}, \citenamefont {He}, \citenamefont {Williamson},\ and\ \citenamefont {Yoder}}]{cowtan2025}%
  \BibitemOpen
  \bibfield  {author} {\bibinfo {author} {\bibfnamefont {A.}~\bibnamefont {Cowtan}}, \bibinfo {author} {\bibfnamefont {Z.}~\bibnamefont {He}}, \bibinfo {author} {\bibfnamefont {D.~J.}\ \bibnamefont {Williamson}},\ and\ \bibinfo {author} {\bibfnamefont {T.~J.}\ \bibnamefont {Yoder}},\ }\href {https://arxiv.org/abs/2503.05003} {\bibinfo {title} {Parallel logical measurements via quantum code surgery}} (\bibinfo {year} {2025}),\ \Eprint {https://arxiv.org/abs/2503.05003} {arXiv:2503.05003 [quant-ph]} \BibitemShut {NoStop}%
\bibitem [{\citenamefont {Schmidhuber}\ \emph {et~al.}(2024)\citenamefont {Schmidhuber}, \citenamefont {O'Donnell}, \citenamefont {Kothari},\ and\ \citenamefont {Babbush}}]{2406.19378}%
  \BibitemOpen
  \bibfield  {author} {\bibinfo {author} {\bibfnamefont {A.}~\bibnamefont {Schmidhuber}}, \bibinfo {author} {\bibfnamefont {R.}~\bibnamefont {O'Donnell}}, \bibinfo {author} {\bibfnamefont {R.}~\bibnamefont {Kothari}},\ and\ \bibinfo {author} {\bibfnamefont {R.}~\bibnamefont {Babbush}},\ }\bibfield  {title} {\bibinfo {title} {Quartic quantum speedups for planted inference},\ }\href@noop {} {\bibfield  {journal} {\bibinfo  {journal} {arXiv:2406.19378}\ } (\bibinfo {year} {2024})}\BibitemShut {NoStop}%
\bibitem [{\citenamefont {Hastings}(2020)}]{Hastings2020}%
  \BibitemOpen
  \bibfield  {author} {\bibinfo {author} {\bibfnamefont {M.~B.}\ \bibnamefont {Hastings}},\ }\bibfield  {title} {\bibinfo {title} {Classical and quantum algorithms for tensor principal component analysis},\ }\href {https://doi.org/10.22331/q-2020-02-27-237} {\bibfield  {journal} {\bibinfo  {journal} {Quantum}\ }\textbf {\bibinfo {volume} {4}},\ \bibinfo {pages} {237} (\bibinfo {year} {2020})}\BibitemShut {NoStop}%
\bibitem [{\citenamefont {Boulebnane}\ \emph {et~al.}(2025)\citenamefont {Boulebnane}, \citenamefont {Sud}, \citenamefont {Shaydulin},\ and\ \citenamefont {Pistoia}}]{2503.09563}%
  \BibitemOpen
  \bibfield  {author} {\bibinfo {author} {\bibfnamefont {S.}~\bibnamefont {Boulebnane}}, \bibinfo {author} {\bibfnamefont {J.}~\bibnamefont {Sud}}, \bibinfo {author} {\bibfnamefont {R.}~\bibnamefont {Shaydulin}},\ and\ \bibinfo {author} {\bibfnamefont {M.}~\bibnamefont {Pistoia}},\ }\bibfield  {title} {\bibinfo {title} {Equivalence of quantum approximate optimization algorithm and linear-time quantum annealing for the sherrington-kirkpatrick model},\ }\href@noop {} {\bibfield  {journal} {\bibinfo  {journal} {arXiv:2503.09563}\ } (\bibinfo {year} {2025})}\BibitemShut {NoStop}%
\bibitem [{\citenamefont {Dawson}\ and\ \citenamefont {Nielsen}(2005)}]{dawson2005solovay}%
  \BibitemOpen
  \bibfield  {author} {\bibinfo {author} {\bibfnamefont {C.~M.}\ \bibnamefont {Dawson}}\ and\ \bibinfo {author} {\bibfnamefont {M.~A.}\ \bibnamefont {Nielsen}},\ }\bibfield  {title} {\bibinfo {title} {The solovay-kitaev algorithm},\ }\href@noop {} {\bibfield  {journal} {\bibinfo  {journal} {arXiv:quant-ph/0505030}\ } (\bibinfo {year} {2005})}\BibitemShut {NoStop}%
\bibitem [{\citenamefont {Ross}\ and\ \citenamefont {Selinger}(2016)}]{ross2016optimal}%
  \BibitemOpen
  \bibfield  {author} {\bibinfo {author} {\bibfnamefont {N.~J.}\ \bibnamefont {Ross}}\ and\ \bibinfo {author} {\bibfnamefont {P.}~\bibnamefont {Selinger}},\ }\bibfield  {title} {\bibinfo {title} {Optimal ancilla-free clifford+t approximation of z-rotations},\ }\href@noop {} {\bibfield  {journal} {\bibinfo  {journal} {Quantum Info. Comput.}\ }\textbf {\bibinfo {volume} {16}},\ \bibinfo {pages} {901–953} (\bibinfo {year} {2016})}\BibitemShut {NoStop}%
\bibitem [{\citenamefont {Bocharov}\ \emph {et~al.}(2015)\citenamefont {Bocharov}, \citenamefont {Roetteler},\ and\ \citenamefont {Svore}}]{Bocharov_efficient_synthesis_2015}%
  \BibitemOpen
  \bibfield  {author} {\bibinfo {author} {\bibfnamefont {A.}~\bibnamefont {Bocharov}}, \bibinfo {author} {\bibfnamefont {M.}~\bibnamefont {Roetteler}},\ and\ \bibinfo {author} {\bibfnamefont {K.~M.}\ \bibnamefont {Svore}},\ }\bibfield  {title} {\bibinfo {title} {Efficient synthesis of universal repeat-until-success quantum circuits},\ }\href {https://doi.org/10.1103/PhysRevLett.114.080502} {\bibfield  {journal} {\bibinfo  {journal} {Phys. Rev. Lett.}\ }\textbf {\bibinfo {volume} {114}},\ \bibinfo {pages} {080502} (\bibinfo {year} {2015})}\BibitemShut {NoStop}%
\bibitem [{\citenamefont {Litinski}(2019{\natexlab{b}})}]{litinski2019game}%
  \BibitemOpen
  \bibfield  {author} {\bibinfo {author} {\bibfnamefont {D.}~\bibnamefont {Litinski}},\ }\bibfield  {title} {\bibinfo {title} {A game of surface codes: Large-scale quantum computing with lattice surgery},\ }\href@noop {} {\bibfield  {journal} {\bibinfo  {journal} {Quantum}\ }\textbf {\bibinfo {volume} {3}},\ \bibinfo {pages} {128} (\bibinfo {year} {2019}{\natexlab{b}})}\BibitemShut {NoStop}%
\bibitem [{\citenamefont {Karp}(2009)}]{karp2009reducibility}%
  \BibitemOpen
  \bibfield  {author} {\bibinfo {author} {\bibfnamefont {R.~M.}\ \bibnamefont {Karp}},\ }\bibfield  {title} {\bibinfo {title} {Reducibility among combinatorial problems},\ }in\ \href@noop {} {\emph {\bibinfo {booktitle} {50 Years of Integer Programming 1958-2008: from the Early Years to the State-of-the-Art}}}\ (\bibinfo  {publisher} {Springer},\ \bibinfo {year} {2009})\ pp.\ \bibinfo {pages} {219--241}\BibitemShut {NoStop}%
\bibitem [{\citenamefont {Gupta}\ \emph {et~al.}(2006)\citenamefont {Gupta}, \citenamefont {Ganai},\ and\ \citenamefont {Wang}}]{gupta2006sat}%
  \BibitemOpen
  \bibfield  {author} {\bibinfo {author} {\bibfnamefont {A.}~\bibnamefont {Gupta}}, \bibinfo {author} {\bibfnamefont {M.~K.}\ \bibnamefont {Ganai}},\ and\ \bibinfo {author} {\bibfnamefont {C.}~\bibnamefont {Wang}},\ }\bibfield  {title} {\bibinfo {title} {Sat-based verification methods and applications in hardware verification},\ }in\ \href@noop {} {\emph {\bibinfo {booktitle} {International School on Formal Methods for the Design of Computer, Communication and Software Systems}}}\ (\bibinfo  {publisher} {Springer},\ \bibinfo {year} {2006})\ pp.\ \bibinfo {pages} {108--143}\BibitemShut {NoStop}%
\bibitem [{\citenamefont {Imeson}\ and\ \citenamefont {Smith}(2019)}]{imeson2019motionplanning}%
  \BibitemOpen
  \bibfield  {author} {\bibinfo {author} {\bibfnamefont {F.}~\bibnamefont {Imeson}}\ and\ \bibinfo {author} {\bibfnamefont {S.~L.}\ \bibnamefont {Smith}},\ }\bibfield  {title} {\bibinfo {title} {An smt-based approach to motion planning for multiple robots with complex constraints},\ }\href@noop {} {\bibfield  {journal} {\bibinfo  {journal} {IEEE Transactions on Robotics}\ }\textbf {\bibinfo {volume} {35}},\ \bibinfo {pages} {669} (\bibinfo {year} {2019})}\BibitemShut {NoStop}%
\bibitem [{\citenamefont {Heule}\ \emph {et~al.}(2016)\citenamefont {Heule}, \citenamefont {Kullmann},\ and\ \citenamefont {Marek}}]{heule2016solving}%
  \BibitemOpen
  \bibfield  {author} {\bibinfo {author} {\bibfnamefont {M.~J.}\ \bibnamefont {Heule}}, \bibinfo {author} {\bibfnamefont {O.}~\bibnamefont {Kullmann}},\ and\ \bibinfo {author} {\bibfnamefont {V.~W.}\ \bibnamefont {Marek}},\ }\bibfield  {title} {\bibinfo {title} {Solving and verifying the boolean pythagorean triples problem via cube-and-conquer},\ }in\ \href@noop {} {\emph {\bibinfo {booktitle} {International Conference on Theory and Applications of Satisfiability Testing}}}\ (\bibinfo {organization} {Springer},\ \bibinfo {year} {2016})\ pp.\ \bibinfo {pages} {228--245}\BibitemShut {NoStop}%
\bibitem [{\citenamefont {Baluta}\ \emph {et~al.}(2019)\citenamefont {Baluta}, \citenamefont {Shen}, \citenamefont {Shinde}, \citenamefont {Meel},\ and\ \citenamefont {Saxena}}]{baluta2019quantitativesatml}%
  \BibitemOpen
  \bibfield  {author} {\bibinfo {author} {\bibfnamefont {T.}~\bibnamefont {Baluta}}, \bibinfo {author} {\bibfnamefont {S.}~\bibnamefont {Shen}}, \bibinfo {author} {\bibfnamefont {S.}~\bibnamefont {Shinde}}, \bibinfo {author} {\bibfnamefont {K.~S.}\ \bibnamefont {Meel}},\ and\ \bibinfo {author} {\bibfnamefont {P.}~\bibnamefont {Saxena}},\ }\bibfield  {title} {\bibinfo {title} {Quantitative verification of neural networks and its security applications},\ }in\ \href@noop {} {\emph {\bibinfo {booktitle} {Proceedings of the 2019 ACM SIGSAC Conference on Computer and Communications Security}}}\ (\bibinfo {year} {2019})\ pp.\ \bibinfo {pages} {1249--1264}\BibitemShut {NoStop}%
\bibitem [{\citenamefont {Vardi}\ and\ \citenamefont {Zhang}(2023)}]{vardi2023solving}%
  \BibitemOpen
  \bibfield  {author} {\bibinfo {author} {\bibfnamefont {M.~Y.}\ \bibnamefont {Vardi}}\ and\ \bibinfo {author} {\bibfnamefont {Z.}~\bibnamefont {Zhang}},\ }\bibfield  {title} {\bibinfo {title} {Solving quantum-inspired perfect matching problems via tutte-theorem-based hybrid boolean constraints},\ }in\ \href@noop {} {\emph {\bibinfo {booktitle} {Proceedings of the Thirty-Second International Joint Conference on Artificial Intelligence}}}\ (\bibinfo {year} {2023})\ pp.\ \bibinfo {pages} {2039--2048}\BibitemShut {NoStop}%
\bibitem [{\citenamefont {Crawford}\ and\ \citenamefont {Auton}(1996)}]{crawford1996experimental}%
  \BibitemOpen
  \bibfield  {author} {\bibinfo {author} {\bibfnamefont {J.~M.}\ \bibnamefont {Crawford}}\ and\ \bibinfo {author} {\bibfnamefont {L.~D.}\ \bibnamefont {Auton}},\ }\bibfield  {title} {\bibinfo {title} {Experimental results on the crossover point in random 3-sat},\ }\href@noop {} {\bibfield  {journal} {\bibinfo  {journal} {Artificial intelligence}\ }\textbf {\bibinfo {volume} {81}},\ \bibinfo {pages} {31} (\bibinfo {year} {1996})}\BibitemShut {NoStop}%
\bibitem [{\citenamefont {Goerdt}(1992)}]{goerdt1992threshold}%
  \BibitemOpen
  \bibfield  {author} {\bibinfo {author} {\bibfnamefont {A.}~\bibnamefont {Goerdt}},\ }\bibfield  {title} {\bibinfo {title} {A threshold for unsatisfiability},\ }in\ \href@noop {} {\emph {\bibinfo {booktitle} {International Symposium on Mathematical Foundations of Computer Science}}}\ (\bibinfo {organization} {Springer},\ \bibinfo {year} {1992})\ pp.\ \bibinfo {pages} {264--274}\BibitemShut {NoStop}%
\bibitem [{\citenamefont {Reed}()}]{reedmick}%
  \BibitemOpen
  \bibfield  {author} {\bibinfo {author} {\bibfnamefont {B.}~\bibnamefont {Reed}},\ }\bibfield  {title} {\bibinfo {title} {Mick gets some (the odds are on his side)},\ }in\ \href@noop {} {\emph {\bibinfo {booktitle} {Proceedings of the 32nd IEEE Symposium on Foundationsof Computer Science}}},\ pp.\ \bibinfo {pages} {620--626}\BibitemShut {NoStop}%
\bibitem [{\citenamefont {Ding}\ \emph {et~al.}(2015)\citenamefont {Ding}, \citenamefont {Sly},\ and\ \citenamefont {Sun}}]{ding2015proof}%
  \BibitemOpen
  \bibfield  {author} {\bibinfo {author} {\bibfnamefont {J.}~\bibnamefont {Ding}}, \bibinfo {author} {\bibfnamefont {A.}~\bibnamefont {Sly}},\ and\ \bibinfo {author} {\bibfnamefont {N.}~\bibnamefont {Sun}},\ }\bibfield  {title} {\bibinfo {title} {Proof of the satisfiability conjecture for large k},\ }in\ \href@noop {} {\emph {\bibinfo {booktitle} {Proceedings of the forty-seventh annual ACM symposium on Theory of computing}}}\ (\bibinfo {year} {2015})\ pp.\ \bibinfo {pages} {59--68}\BibitemShut {NoStop}%
\bibitem [{\citenamefont {Bresler}\ and\ \citenamefont {Huang}(2022)}]{bresler2022algorithmic}%
  \BibitemOpen
  \bibfield  {author} {\bibinfo {author} {\bibfnamefont {G.}~\bibnamefont {Bresler}}\ and\ \bibinfo {author} {\bibfnamefont {B.}~\bibnamefont {Huang}},\ }\bibfield  {title} {\bibinfo {title} {The algorithmic phase transition of random k-sat for low degree polynomials},\ }in\ \href@noop {} {\emph {\bibinfo {booktitle} {2021 IEEE 62nd annual symposium on foundations of computer science (FOCS)}}}\ (\bibinfo {organization} {IEEE},\ \bibinfo {year} {2022})\ pp.\ \bibinfo {pages} {298--309}\BibitemShut {NoStop}%
\bibitem [{\citenamefont {Heule}\ \emph {et~al.}(2024)\citenamefont {Heule}, \citenamefont {Iser}, \citenamefont {J{\"a}rvisalo},\ and\ \citenamefont {Suda}}]{heule2024proceedings}%
  \BibitemOpen
  \bibfield  {author} {\bibinfo {author} {\bibfnamefont {M.~J.}\ \bibnamefont {Heule}}, \bibinfo {author} {\bibfnamefont {M.}~\bibnamefont {Iser}}, \bibinfo {author} {\bibfnamefont {M.}~\bibnamefont {J{\"a}rvisalo}},\ and\ \bibinfo {author} {\bibfnamefont {M.}~\bibnamefont {Suda}},\ }\bibfield  {title} {\bibinfo {title} {Proceedings of sat competition 2024: Solver, benchmark and proof checker descriptions},\ }in\ \href@noop {} {\emph {\bibinfo {booktitle} {Proceedings of the SAT Competition}}}\ (\bibinfo  {publisher} {Department of Computer Science, University of Helsinki},\ \bibinfo {year} {2024})\BibitemShut {NoStop}%
\bibitem [{\citenamefont {Biere}\ \emph {et~al.}(2009)\citenamefont {Biere}, \citenamefont {Heule}, \citenamefont {van Maaren},\ and\ \citenamefont {Walsh}}]{biere2009conflict}%
  \BibitemOpen
  \bibfield  {author} {\bibinfo {author} {\bibfnamefont {A.}~\bibnamefont {Biere}}, \bibinfo {author} {\bibfnamefont {M.}~\bibnamefont {Heule}}, \bibinfo {author} {\bibfnamefont {H.}~\bibnamefont {van Maaren}},\ and\ \bibinfo {author} {\bibfnamefont {T.}~\bibnamefont {Walsh}},\ }\bibfield  {title} {\bibinfo {title} {Conflict-driven clause learning sat solvers},\ }\href@noop {} {\bibfield  {journal} {\bibinfo  {journal} {Handbook of Satisfiability, Frontiers in Artificial Intelligence and Applications}\ ,\ \bibinfo {pages} {131}} (\bibinfo {year} {2009})}\BibitemShut {NoStop}%
\bibitem [{\citenamefont {Biere}(2017)}]{biere2017cadical}%
  \BibitemOpen
  \bibfield  {author} {\bibinfo {author} {\bibfnamefont {A.}~\bibnamefont {Biere}},\ }\bibfield  {title} {\bibinfo {title} {Cadical, lingeling, plingeling, treengeling and yalsat entering the sat competition 2018},\ }\href@noop {} {\bibfield  {journal} {\bibinfo  {journal} {Proceedings of SAT Competition}\ }\textbf {\bibinfo {volume} {14}},\ \bibinfo {pages} {316} (\bibinfo {year} {2017})}\BibitemShut {NoStop}%
\bibitem [{\citenamefont {Hoos}(1999)}]{hoos1999stochastic}%
  \BibitemOpen
  \bibfield  {author} {\bibinfo {author} {\bibfnamefont {H.}~\bibnamefont {Hoos}},\ }\href@noop {} {\emph {\bibinfo {title} {Stochastic local search-methods, models, applications}}}\ (\bibinfo  {publisher} {Ios Press},\ \bibinfo {year} {1999})\BibitemShut {NoStop}%
\bibitem [{\citenamefont {Kyrillidis}\ \emph {et~al.}(2021)\citenamefont {Kyrillidis}, \citenamefont {Vardi},\ and\ \citenamefont {Zhang}}]{kyrillidis2021continuous}%
  \BibitemOpen
  \bibfield  {author} {\bibinfo {author} {\bibfnamefont {A.}~\bibnamefont {Kyrillidis}}, \bibinfo {author} {\bibfnamefont {M.}~\bibnamefont {Vardi}},\ and\ \bibinfo {author} {\bibfnamefont {Z.}~\bibnamefont {Zhang}},\ }\bibfield  {title} {\bibinfo {title} {On continuous local bdd-based search for hybrid sat solving},\ }in\ \href@noop {} {\emph {\bibinfo {booktitle} {Proceedings of the AAAI Conference on Artificial Intelligence}}},\ Vol.~\bibinfo {volume} {35}\ (\bibinfo {year} {2021})\ pp.\ \bibinfo {pages} {3841--3850}\BibitemShut {NoStop}%
\bibitem [{\citenamefont {Kroc}\ \emph {et~al.}(2009)\citenamefont {Kroc}, \citenamefont {Sabharwal},\ and\ \citenamefont {Selman}}]{kroc2009message}%
  \BibitemOpen
  \bibfield  {author} {\bibinfo {author} {\bibfnamefont {L.}~\bibnamefont {Kroc}}, \bibinfo {author} {\bibfnamefont {A.}~\bibnamefont {Sabharwal}},\ and\ \bibinfo {author} {\bibfnamefont {B.}~\bibnamefont {Selman}},\ }\bibfield  {title} {\bibinfo {title} {Message-passing and local heuristics as decimation strategies for satisfiability},\ }in\ \href@noop {} {\emph {\bibinfo {booktitle} {Proceedings of the 2009 ACM symposium on Applied Computing}}}\ (\bibinfo {year} {2009})\ pp.\ \bibinfo {pages} {1408--1414}\BibitemShut {NoStop}%
\bibitem [{\citenamefont {Lorenz}\ and\ \citenamefont {W{\"o}rz}(2020)}]{lorenz2020effect}%
  \BibitemOpen
  \bibfield  {author} {\bibinfo {author} {\bibfnamefont {J.-H.}\ \bibnamefont {Lorenz}}\ and\ \bibinfo {author} {\bibfnamefont {F.}~\bibnamefont {W{\"o}rz}},\ }\bibfield  {title} {\bibinfo {title} {On the effect of learned clauses on stochastic local search},\ }in\ \href@noop {} {\emph {\bibinfo {booktitle} {Theory and Applications of Satisfiability Testing--SAT 2020: 23rd International Conference, Alghero, Italy, July 3--10, 2020, Proceedings 23}}}\ (\bibinfo {organization} {Springer},\ \bibinfo {year} {2020})\ pp.\ \bibinfo {pages} {89--106}\BibitemShut {NoStop}%
\bibitem [{\citenamefont {Selman}\ \emph {et~al.}(1993)\citenamefont {Selman}, \citenamefont {Kautz}, \citenamefont {Cohen} \emph {et~al.}}]{selman1993local}%
  \BibitemOpen
  \bibfield  {author} {\bibinfo {author} {\bibfnamefont {B.}~\bibnamefont {Selman}}, \bibinfo {author} {\bibfnamefont {H.~A.}\ \bibnamefont {Kautz}}, \bibinfo {author} {\bibfnamefont {B.}~\bibnamefont {Cohen}}, \emph {et~al.},\ }\bibfield  {title} {\bibinfo {title} {Local search strategies for satisfiability testing.},\ }\href@noop {} {\bibfield  {journal} {\bibinfo  {journal} {Cliques, coloring, and satisfiability}\ }\textbf {\bibinfo {volume} {26}},\ \bibinfo {pages} {521} (\bibinfo {year} {1993})}\BibitemShut {NoStop}%
\bibitem [{\citenamefont {Braunstein}\ \emph {et~al.}(2005)\citenamefont {Braunstein}, \citenamefont {M{\'e}zard},\ and\ \citenamefont {Zecchina}}]{braunstein2005survey}%
  \BibitemOpen
  \bibfield  {author} {\bibinfo {author} {\bibfnamefont {A.}~\bibnamefont {Braunstein}}, \bibinfo {author} {\bibfnamefont {M.}~\bibnamefont {M{\'e}zard}},\ and\ \bibinfo {author} {\bibfnamefont {R.}~\bibnamefont {Zecchina}},\ }\bibfield  {title} {\bibinfo {title} {Survey propagation: An algorithm for satisfiability},\ }\href@noop {} {\bibfield  {journal} {\bibinfo  {journal} {Random Structures \& Algorithms}\ }\textbf {\bibinfo {volume} {27}},\ \bibinfo {pages} {201} (\bibinfo {year} {2005})}\BibitemShut {NoStop}%
\bibitem [{\citenamefont {Nielsen}(1996)}]{nielsen1996entanglement}%
  \BibitemOpen
  \bibfield  {author} {\bibinfo {author} {\bibfnamefont {M.~A.}\ \bibnamefont {Nielsen}},\ }\bibfield  {title} {\bibinfo {title} {The entanglement fidelity and quantum error correction},\ }\bibfield  {journal} {\bibinfo  {journal} {arXiv preprint quant-ph/9606012}\ }\href {https://doi.org/10.48550/arXiv.quant-ph/9606012} {10.48550/arXiv.quant-ph/9606012} (\bibinfo {year} {1996})\BibitemShut {NoStop}%
\bibitem [{\citenamefont {Schumacher}(1996)}]{schumacher1996entanglement}%
  \BibitemOpen
  \bibfield  {author} {\bibinfo {author} {\bibfnamefont {B.}~\bibnamefont {Schumacher}},\ }\bibfield  {title} {\bibinfo {title} {Sending entanglement through noisy quantum channels},\ }\href {https://doi.org/10.1103/PhysRevA.54.2614} {\bibfield  {journal} {\bibinfo  {journal} {Phys. Rev. A}\ }\textbf {\bibinfo {volume} {54}},\ \bibinfo {pages} {2614} (\bibinfo {year} {1996})}\BibitemShut {NoStop}%
\bibitem [{\citenamefont {Selinger}(2012)}]{selinger2012efficient}%
  \BibitemOpen
  \bibfield  {author} {\bibinfo {author} {\bibfnamefont {P.}~\bibnamefont {Selinger}},\ }\bibfield  {title} {\bibinfo {title} {Efficient clifford+ t approximation of single-qubit operators},\ }\href@noop {} {\bibfield  {journal} {\bibinfo  {journal} {arXiv preprint arXiv:1212.6253}\ } (\bibinfo {year} {2012})}\BibitemShut {NoStop}%
\bibitem [{\citenamefont {Hastings}(2016)}]{hastings2016turning}%
  \BibitemOpen
  \bibfield  {author} {\bibinfo {author} {\bibfnamefont {M.~B.}\ \bibnamefont {Hastings}},\ }\bibfield  {title} {\bibinfo {title} {Turning gate synthesis errors into incoherent errors},\ }\href@noop {} {\bibfield  {journal} {\bibinfo  {journal} {arXiv preprint arXiv:1612.01011}\ } (\bibinfo {year} {2016})}\BibitemShut {NoStop}%
\bibitem [{\citenamefont {Wiebe}\ and\ \citenamefont {Roetteler}(2014)}]{wiebe2014quantum}%
  \BibitemOpen
  \bibfield  {author} {\bibinfo {author} {\bibfnamefont {N.}~\bibnamefont {Wiebe}}\ and\ \bibinfo {author} {\bibfnamefont {M.}~\bibnamefont {Roetteler}},\ }\bibfield  {title} {\bibinfo {title} {Quantum arithmetic and numerical analysis using repeat-until-success circuits},\ }\href@noop {} {\bibfield  {journal} {\bibinfo  {journal} {arXiv preprint arXiv:1406.2040}\ } (\bibinfo {year} {2014})}\BibitemShut {NoStop}%
\bibitem [{\citenamefont {Vuillot}\ \emph {et~al.}(2019)\citenamefont {Vuillot}, \citenamefont {Lao}, \citenamefont {Criger}, \citenamefont {Almud{\'e}ver}, \citenamefont {Bertels},\ and\ \citenamefont {Terhal}}]{vuillot2019code}%
  \BibitemOpen
  \bibfield  {author} {\bibinfo {author} {\bibfnamefont {C.}~\bibnamefont {Vuillot}}, \bibinfo {author} {\bibfnamefont {L.}~\bibnamefont {Lao}}, \bibinfo {author} {\bibfnamefont {B.}~\bibnamefont {Criger}}, \bibinfo {author} {\bibfnamefont {C.~G.}\ \bibnamefont {Almud{\'e}ver}}, \bibinfo {author} {\bibfnamefont {K.}~\bibnamefont {Bertels}},\ and\ \bibinfo {author} {\bibfnamefont {B.~M.}\ \bibnamefont {Terhal}},\ }\bibfield  {title} {\bibinfo {title} {Code deformation and lattice surgery are gauge fixing},\ }\href@noop {} {\bibfield  {journal} {\bibinfo  {journal} {New Journal of Physics}\ }\textbf {\bibinfo {volume} {21}},\ \bibinfo {pages} {033028} (\bibinfo {year} {2019})}\BibitemShut {NoStop}%
\end{thebibliography}%

\section*{Disclaimer}
This paper was prepared for informational purposes by the Global Technology Applied Research center of JPMorganChase. This paper is not a product of the Research Department of JPMorganChase or its affiliates. Neither JPMorganChase nor any of its affiliates makes any explicit or implied representation or warranty and none of them accept any liability in connection with this position paper, including, without limitation, with respect to the completeness, accuracy, or reliability of the information contained herein and the potential legal, compliance, tax, or accounting effects thereof. This document is not intended as investment research or investment advice, or as a recommendation, offer, or solicitation for the purchase or sale of any security, financial instrument, financial product or service, or to be used in any way for evaluating the merits of participating in any transaction.

\appendix
\onecolumngrid

\section{Entanglement Fidelity}
\label{sec:entanglement_fidelity}

Entanglement fidelity measures how well the state of a system and its entanglement with other systems is preserved under the action of a quantum channel.
The sensitivity of entanglement fidelity to the \textit{entanglement} between systems makes it a better measure of how well a state is preserved under the action of a channel than other measures, such as gate fidelity, in the context of error-corrected quantum computation \cite{nielsen1996entanglement}.
Mathematically, the entanglement fidelity $F_{\mathrm{ent}}(\mathcal{E})$ of a quantum channel $\mathcal{E}$ that acts on density operators of a $d$-dimensional Hilbert space is \cite{schumacher1996entanglement}
\begin{align}
  F_{\mathrm{ent}}(\mathcal{E})
  =  \braket{\phi | (\mathcal{E} \otimes \mathcal{I})(\op\phi) | \phi}
  = \Tr\left[\Pi_\phi (\mathcal{E} \otimes \mathcal{I})(\Pi_\phi)\right],
  \label{eq:ent_fidelity}
\end{align}
where $\ket\phi = \frac{1}{\sqrt{d}} \sum_{a=0}^{d-1} \ket{a}\otimes\ket{a}$ is a maximally entangled state of two $d$-dimensional systems, $\Pi_\phi = \op{\phi}$ is a projector onto $\ket\phi$, and $\mathcal{I}$ is the identity channel.
When a fixed channel acts $\mathcal{E}$ on a particular state $\rho$, we may refer to $F_{\mathrm{ent}}(\mathcal{E})$ as the fidelity of $\rho$ after the application of $\mathcal{E}$.

Consider the depolarizing channel $\mathcal{D}_p$ that replaces its input by the maximally mixed state $\sigma = \frac1d \sum_{a=0}^{d-1} \op{a}$ with probability $p$,
\begin{align}
  \mathcal{D}_p(\rho) = (1-p) \rho + p \sigma.
\end{align}
\begin{theorem}
  The entanglement fidelity of a quantum state after $n$ applications of the depolarizing channel $\mathcal{D}_p$ is
  \begin{align}
    F_{\mathrm{ent}}(\mathcal{D}_p^n) = \left(1 - \frac{1}{d^2}\right) (1-p)^n + \frac1{d^2}.
  \end{align}
\end{theorem}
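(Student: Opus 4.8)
The plan is to prove the statement in two stages: first collapse the $n$-fold composition $\mathcal{D}_p^n$ into a \emph{single} depolarizing channel with a renormalized parameter, and then evaluate the entanglement fidelity of that single channel directly from the definition in Eq.~\eqref{eq:ent_fidelity}. This decoupling avoids having to track the fidelity iteratively through each application of the channel.

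First I would write the depolarizing channel as a convex combination $\mathcal{D}_p = (1-p)\mathcal{I} + p\mathcal{R}$, where $\mathcal{I}$ is the identity channel and $\mathcal{R}(\rho) = \sigma\Tr\rho$ is the replacement channel that discards its input and outputs the maximally mixed state $\sigma = I/d$. The key observation is that $\mathcal{R}$ is absorbing under composition: since $\Tr\sigma = 1$, one has $\mathcal{R}\circ\mathcal{R} = \mathcal{R}$, and trivially $\mathcal{I}\circ\mathcal{R} = \mathcal{R}\circ\mathcal{I} = \mathcal{R}$. Using these relations together with the bilinearity of composition, I would expand $\mathcal{D}_p^n = ((1-p)\mathcal{I} + p\mathcal{R})^n$: every monomial containing at least one factor of $\mathcal{R}$ collapses to $\mathcal{R}$, while the unique all-$\mathcal{I}$ monomial contributes $(1-p)^n\mathcal{I}$. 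Because the scalar coefficients must sum to $((1-p)+p)^n = 1$, the coefficient of $\mathcal{R}$ is forced to equal $1-(1-p)^n$, giving
\begin{align}
  \mathcal{D}_p^n = (1-p)^n\,\mathcal{I} + \left(1-(1-p)^n\right)\mathcal{R}.
\end{align}
Equivalently, this identity follows by a one-line induction on $n$, which may be the cleaner way to present it.

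Next I would compute the entanglement fidelity of the right-hand side by linearity, reducing it to the two contributions $\braket{\phi | (\mathcal{I}\otimes\mathcal{I})(\op\phi) | \phi}$ and $\braket{\phi | (\mathcal{R}\otimes\mathcal{I})(\op\phi) | \phi}$. The first equals $1$, since $(\mathcal{I}\otimes\mathcal{I})$ fixes $\op\phi$. For the second, I would use that discarding one half of the maximally entangled state leaves the maximally mixed marginal $\sigma = I/d$ on the other half, so $(\mathcal{R}\otimes\mathcal{I})(\op\phi) = \sigma\otimes\sigma = (I\otimes I)/d^2$, whose overlap with $\op\phi$ is $\braket{\phi|(I\otimes I)|\phi}/d^2 = 1/d^2$. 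Substituting these two values gives
\begin{align}
  F_{\mathrm{ent}}(\mathcal{D}_p^n) = (1-p)^n\cdot 1 + \left(1-(1-p)^n\right)\frac{1}{d^2},
\end{align}
which rearranges to $\left(1-\tfrac{1}{d^2}\right)(1-p)^n + \tfrac{1}{d^2}$, as claimed.

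Each step is short, so the main obstacle is conceptual bookkeeping rather than any hard estimate. The two places requiring care are justifying the absorbing identities for $\mathcal{R}$ (in particular that $\Tr\sigma = 1$ is what makes $\mathcal{R}$ idempotent, so that the composition truly collapses) and correctly evaluating $(\mathcal{R}\otimes\mathcal{I})(\op\phi)$ by recognizing the maximally mixed marginal. Once the $n$-fold channel is identified as a depolarizing channel with effective parameter $1-(1-p)^n$, the fidelity evaluation is immediate.
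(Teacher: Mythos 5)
Your proof is correct, and it takes a slightly different (and in one respect cleaner) route than the paper. The paper works at the level of states: it applies $\mathcal{D}_p\otimes\mathcal{I}$ iteratively to $\Pi_\phi$, observes that the depolarized component is a fixed point, and pins down its coefficient $q_n(p)=1-(1-p)^n$ by trace preservation. You instead work at the level of channels, writing $\mathcal{D}_p=(1-p)\mathcal{I}+p\mathcal{R}$ with $\mathcal{R}$ the replacement channel, using idempotency of $\mathcal{R}$ to collapse $\mathcal{D}_p^n$ into a single depolarizing channel with effective parameter $1-(1-p)^n$, and only then evaluating the fidelity by linearity. The two arguments are the same in spirit (both hinge on the depolarized part being absorbing), but your version buys two things. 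First, the effective-parameter identity for $\mathcal{D}_p^n$ is reusable beyond this particular fidelity computation. Second, your evaluation of the cross term is actually more careful than the paper's: by using the correct linear extension $\mathcal{R}(\rho)=\sigma\Tr\rho$, you obtain $(\mathcal{R}\otimes\mathcal{I})(\Pi_\phi)=\sigma\otimes\sigma$, whereas the paper's intermediate expression $\sigma\otimes\omega$ with $\omega=\op{+}$ a pure state comes from applying the affine form $\rho\mapsto(1-p)\rho+p\sigma$ term-by-term to the non-density operators $\op{a}{b}$, which is not the tensor extension of a linear map. The discrepancy is harmless here because both $\sigma\otimes\sigma$ and $\sigma\otimes\omega$ have overlap $1/d^2$ with $\Pi_\phi$ and both are fixed by $\mathcal{D}_p\otimes\mathcal{I}$, so the final formula agrees, but your intermediate state is the right one.
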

\textit{Proof}.
Expanding $\Pi_\phi = \frac1d \sum_{a,b=0}^{d-1} \op{a}{b}\otimes\op{a}{b}$, the action of $\mathcal{D}_{p,0} = \mathcal{D}_p\otimes\mathcal{I}$ on $\Pi_\phi$ is
\begin{align}
  \mathcal{D}_{p,0}(\Pi_\phi)
  = (1-p) \Pi_\phi + p \times \frac1d \sum_{a,b} \sigma \otimes \op{a}{b}
  = (1-p) \Pi_\phi + p \sigma \otimes \omega
\end{align}
where $\omega = \op{+}$ is the density operator of the uniform superposition $\ket{+} = \frac1{\sqrt{d}} \sum_{a=0}^{d-1}\ket{a}$.
The depolarizing channel acts trivially on an already depolarized input, which is to say that
\begin{align}
  \mathcal{D}_{p,0}(\sigma \otimes \omega) = \sigma \otimes \omega.
\end{align}
Applying the depolarizing channel $n$ times to the maximally mixed state $\Pi_\phi$ thus yields the state
\begin{align}
  \mathcal{D}_{p,0}^n(\Pi_\phi) = (1-p)^n \Pi_\phi + q_n(p) \sigma \otimes \omega,
\end{align}
where the coefficient $q_n(p)$ can be found be enforcing that the $n$-fold depolarized state has trace 1,
\begin{align}
  1 = \Tr[\mathcal{D}_{p,0}^n(\Pi_\phi)] = (1-p)^n + q_n(p),
\end{align}
which implies that $q_n(p) = 1 - (1-p)^n$.
Altogether, the entanglement fidelity of the $n$-fold depolarization channel is
\begin{align}
  F_{\mathrm{ent}}(\mathcal{D}_p^n)
  = \Tr\left[\Pi_\phi \mathcal{D}_{p,0}(\Pi_\phi)\right]
  = (1-p)^n + q_n(p) \Tr\left[\Pi_\phi (\sigma \otimes \omega)\right],
\end{align}
where
\begin{align}
  \Tr\left[\Pi_\phi (\sigma \otimes \omega)\right]
  &= \frac1{d^3} \sum_{a,b,i,j,k} \Tr[(\op{a}{b} \otimes\op{a}{b})(\op{i}\otimes\op{j}{k})] \\
  &= \frac1{d^3} \sum_{a,b,i,j,k} \Tr[\op{a}{b} \op{i}] \times \Tr[\op{a}{b}\op{j}{k}] \\
  &= \frac1{d^2},
\end{align}
so
\begin{align}
  F_{\mathrm{ent}}(\mathcal{D}_p^n) = (1-p)^n + (1 - (1-p)^n) \frac1{d^2}
  = \frac{d^2-1}{d^2} (1-p)^n + \frac1{d^2}.
\end{align}

\subsection{Entanglement Fidelity and Operator Norm}

It is common in the (deterministic) unitary gate synthesis literature to consider the operator-norm distance $D(U,V) = \norm{U - V}$ as a measure of similarity between two unitaries $U$ and $V$, where $\norm{W}$ is the largest singular value of $W$ \cite{selinger2012efficient, ross2016optimal, hastings2016turning, wiebe2014quantum}.
When synthesizing a gate up to global phase, it is convenient to instead consider the phase-agnostic distance
\begin{align}
  \tilde D(U, V)
  = \min_\alpha D(U, e^{i\alpha} V)
  = \min_\alpha \norm{U - e^{i\alpha} V}.
\end{align}
Here we establish the relationship between the phase-agnostic operator-norm distance $\tilde D(U, V)$ and the entanglement fidelity
\begin{align}
  \mathcal{F}_{\mathrm{ent}}(U^\dag V) = \frac1{d^2} \abs{\Tr(U^\dag V)}^2.
\end{align}
We say that $U$ and $V$ are $\delta$-close if $\tilde D(U, V)=O(\delta)$ and $\delta<1$.
\begin{theorem}
  The entanglement fidelity $\mathcal{F}_{\mathrm{ent}}(U^\dag V)$ and the phase-agnostic distance $\tilde D(U, V)$ between two $\delta$-close single-qubit unitaries $U,V$ are related by
  \begin{align}
    \mathcal{F}_{\mathrm{ent}}(U^\dag V) = 1 - \tilde D(U,V)^2 + O(\delta^4).
  \end{align}
  \label{thm:fidelity-distance}
\end{theorem}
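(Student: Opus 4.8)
The plan is to reduce both sides of the identity to functions of the single unitary $W = U^\dag V$ and exploit unitary invariance. First I would note that, since the operator norm is unitarily invariant, $\tilde D(U,V) = \min_\alpha \norm{U - e^{i\alpha} V} = \min_\alpha \norm{I - e^{i\alpha} W}$, where I have factored out $U$ on the left; and that by the definition recalled just before the statement, $\mathcal{F}_{\mathrm{ent}}(U^\dag V) = \frac14 \abs{\Tr W}^2$ for $d=2$. Both quantities therefore depend only on the spectrum of $W$.

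Next I would diagonalize $W$, writing its eigenvalues as $e^{i\theta_1}, e^{i\theta_2}$. The key structural observation is that $I - e^{i\alpha} W$ is a polynomial in the unitary $W$ and hence normal, so its operator norm equals the largest modulus of its eigenvalues, namely $\max_j \abs{1 - e^{i(\alpha+\theta_j)}} = 2 \max_j \abs{\sin((\alpha+\theta_j)/2)}$. Minimizing over the free phase $\alpha$ is the one genuinely delicate step: I would argue that the optimum symmetrizes the two shifted eigenphases, i.e. $\alpha = -(\theta_1+\theta_2)/2$, giving $\tilde D(U,V) = 2\abs{\sin(\psi/2)}$ with $\psi \equiv (\theta_1 - \theta_2)/2$. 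The hypothesis that $U$ and $V$ are $\delta$-close guarantees the eigenphases can be taken small, which both justifies this centering (no minimizer from a competing $2\pi$ branch) and yields $\psi = O(\delta)$ since $\tilde D(U,V) = 2\abs{\sin(\psi/2)} = \abs{\psi} + O(\psi^3)$.

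With the spectral forms in hand, the remainder is short trigonometry. I would evaluate the fidelity directly as $\mathcal{F}_{\mathrm{ent}}(U^\dag V) = \frac14 \abs{e^{i\theta_1} + e^{i\theta_2}}^2 = \frac{1+\cos 2\psi}{2} = \cos^2\psi$, and use $\tilde D(U,V)^2 = 4\sin^2(\psi/2) = 2(1-\cos\psi)$ to write $1 - \tilde D(U,V)^2 = 2\cos\psi - 1$. Subtracting gives the \emph{exact} relation $\mathcal{F}_{\mathrm{ent}}(U^\dag V) - \left(1 - \tilde D(U,V)^2\right) = \cos^2\psi - 2\cos\psi + 1 = (1-\cos\psi)^2$. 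Finally, since $1 - \cos\psi = O(\psi^2)$ and $\psi = O(\delta)$, this residual is $O(\psi^4) = O(\delta^4)$, which is precisely the claimed error term.

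I expect the main obstacle to be the minimization over $\alpha$ together with its branch bookkeeping: one must verify that the operator norm is controlled by the eigenphase farthest from $0$ modulo $2\pi$, that symmetric centering is optimal rather than merely stationary, and that $\delta$-closeness excludes a spurious competing minimizer. Once these points are settled, the normality argument and the concluding expansion are elementary.
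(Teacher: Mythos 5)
Your proposal is correct and follows essentially the same route as the paper's proof: both reduce to the spectrum of $W=U^\dag V$, obtain $\mathcal{F}_{\mathrm{ent}}=\cos^2\psi$ and $\tilde D(U,V)=2\abs{\sin(\psi/2)}$ with $\psi$ half the eigenphase difference, and the symmetric-centering minimizer you assert is exactly what the paper verifies by its endpoint analysis of $\min_\beta\left(\abs{\sin\beta}\abs{\cos(\theta/4)}+\abs{\cos\beta}\abs{\sin(\theta/4)}\right)$. Your closing step is marginally cleaner, since the exact residual $(1-\cos\psi)^2$ gives the $O(\delta^4)$ bound directly rather than via two separate Taylor expansions.
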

\textit{Proof}.
Without loss of generality we can expand, for some angles $\phi,\theta\in[-\pi,\pi]$,
\begin{align}
  U^\dag V = \exp(i\phi + i\frac{\theta}{2}\vec v\cdot\vec X),
\end{align}
where $\vec v = (v_x, v_y, v_z)$ is a unit vector with $v_x^2 + v_y^2 + v_z^2 = 1$ and $\vec X=(X,Y,Z)$ is a vector of Pauli operators.
This expansion allows us to directly compute
\begin{align}
  \mathcal{F}_{\mathrm{ent}}(U^\dag V)
  = \frac14 \abs{\Tr(e^{i\phi + i\frac{\theta}{2}\vec v\cdot\vec X})}^2
  = \frac14 \abs{\Tr(e^{i\frac{\theta}{2}\vec v\cdot\vec X})}^2
  = \cos(\frac{\theta}{2})^2.
\end{align}
We then observe that
\begin{align}
  \tilde D(U,V)
  = \min_\alpha \norm{\mathds{1} - e^{i\alpha} U^\dag V}
  = \min_\alpha \norm{\lambda_\alpha^+ \Pi_{+v} + \lambda_\alpha^- \Pi_{-v}},
\end{align}
where $\lambda_\alpha^\pm = 1 - e^{i\alpha + i\phi \pm i\theta/2}$ and $\Pi_{\pm v} = \op{\pm v}$ are projectors onto the orthonormal eigenvectors $\ket{\pm v}$ of $\vec v\cdot\vec X$.
It follows that
\begin{align}
  \tilde D(U,V)
  &= 2 \min_\alpha \max_\pm \abs{\lambda_\alpha^\pm} \\
  &= 2 \min_\alpha \max_\pm \abs{\sin(\frac{\alpha + \phi \pm \theta/2}{2})} \\
  &= 2 \min_\beta \max_\pm \abs{\sin(\beta \pm \frac{\theta}{4})} \\
  &= 2 \min_\beta \max_\pm
  \left(
    \abs{\sin\beta}\, \abs{\cos(\frac{\theta}{4})}
    \pm \abs{\cos\beta}\, \abs{\sin(\frac{\theta}{4})}
  \right) \\
  &= 2 \min_\beta
  \left(
    \abs{\sin\beta}\, \abs{\cos(\frac{\theta}{4})} + \abs{\cos\beta}\, \abs{\sin(\frac{\theta}{4})}
  \right).
  \label{eq:DUV_beta}
\end{align}
Without loss of generality, we can restrict $\beta\in[0,\pi/2]$, and consider the quantity $f(\beta) = \sin\beta\abs{\cos(\frac{\theta}{4})} + \cos\beta \abs{\sin(\frac{\theta}{4})}$ that is minimized over $\beta$ in Eq.~\eqref{eq:DUV_beta}.
The derivative $\partial_\beta f(\beta) = \cos\beta \abs{\cos(\frac{\theta}{4})} - \sin\beta \abs{\sin(\frac{\theta}{4})}$ is positive at $\beta=0$, negative at $\beta=\pi/2$, and zero once in between.
It follows that $f(\beta)$ achieves a single maximum at some $\beta\in(0,\pi/2)$, and is otherwise minimal at one of its endpoints, $f(0)=\abs{\sin(\theta/4)}$ or $f(\pi/2)=\abs{\cos(\theta/4)}$.
The fact that $\abs{\theta}\le\pi$ then implies that $\abs{\sin(\theta/4)} \le \abs{\cos(\theta/4)}$, so
\begin{align}
  \tilde D(U,V) = 2\abs{\sin(\frac{\theta}{4})}.
\end{align}
Altogether, for positive $\theta=\delta<1$ we can expand
\begin{align}
  \tilde D(U,V)
  = 2 \sin(\frac{\delta}{4})
  = \frac{\delta}{2} + O(\delta^3),
\end{align}
and
\begin{align}
  \mathcal{F}_{\mathrm{ent}}(U^\dag V)
  = 1 - \frac{\delta^2}{4} + O(\delta^4)
  = 1 - \tilde D(U,V)^2 + O(\delta^4),
\end{align}
thereby arriving at Theorem \ref{thm:fidelity-distance}.

\section{The TACU gadget}
\label{sec:TACU}

An essential circuit primitive for both the Hamiltonian simulation and oracles of the QAOA+AA algorithm in this work is the $K$-qubit temporary-AND-compute-and-uncompute (TACU) gadget \cite{litinski2022active}, which we use to apply $K$-qubit phase gates of the form $P_K(\gamma) = e^{-i\gamma\op{1}^{\otimes K}}$.
Specifically, the $K$-qubit TACU gadget is defined by
\begin{align}
  \vcenter{\hbox{\includegraphics{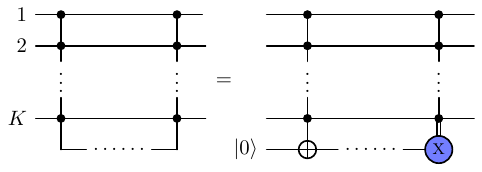}}},
\end{align}
or, for shorthand,
\begin{align}
  \vcenter{\hbox{\includegraphics{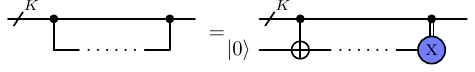}}},
  \label{eq:tacu-k-def}
\end{align}
where the horizontal dots ($\ldots\ldots$) are a placeholder for ancilla-qubit operations, the first gate on the right-hand side is a $(K+1)$-qubit Toffoli gate that applies at Pauli-$X$ to the ancilla qubit if all controls are in $\ket{1}$, and the last gate denotes an $X$-basis measurement of the ancilla qubit, whose measurement outcome determines whether to apply a $K$-qubit multi-controlled-$Z$ gate.
The $K$-qubit TACU gadget can be used to apply a multi-qubit phase gate as
\begin{align}
  \vcenter{\hbox{\includegraphics{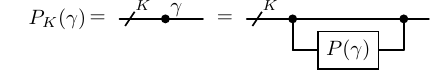}}}
  \label{eq:tacu-phase}
\end{align}
where $P(\gamma) = P_1(\gamma) = e^{-i\gamma\op{1}}$.

We say that a circuit $V$ is $Z$-equivalent to the unitary $U$ if the action of $U$ is equal to the action of $V$ followed by single-qubit Pauli-$Z$ corrections that can be efficiently computed from the outcomes of measurements performed in $V$.
For the purposes of the work below, we formally define a \textit{logical cycle} to be the time required to perform one logical two-qubit Pauli operator measurement.
In practice, one logical cycle in a distance-$d$ surface code on a two-dimensional architecture with local interactions is the time required to perform $d$ rounds of syndrome measurement \cite{litinski2019game}.

The main technical result that we wish to establish in this section is the following:
\begin{theorem}
  The $K$-qubit phase gate $P_K(\gamma) = e^{i\gamma\op{1}^{\otimes K}}$ is $Z$-equivalent to a circuit that addresses the $K$ qubits of $P_K(\gamma)$ for one logical cycle, introduces $\lfloor\frac{13}{2} K\rfloor$ ancilla qubits, consumes $\sum_{\ell=1}^{\lceil\log_2 K\rceil} \lfloor K/2^\ell\rfloor \le K-1$ CCZ states, and can be implemented in a total of $n_P + 4\lceil\log_2 K\rceil$ logical cycles, where $n_P$ is the number of logical cycles required to implement the single-qubit phase gate $P_1(\gamma)$.
  \label{thm:phase-gate-appendix}
\end{theorem}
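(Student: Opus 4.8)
The plan is to prove the theorem by explicit construction, building the $K$-qubit phase gate as an AND-tree that computes the logical conjunction $\bigwedge_{i=1}^K x_i$ onto a single root ancilla, applying the single-qubit phase $P_1(\gamma)$ of \eqref{eq:tacu-phase} at the root, and then uncomputing the tree. The atomic building block is the two-qubit temporary-AND-compute-and-uncompute gadget of Ref.~\cite{litinski2022active} recalled in \eqref{eq:tacu-k-def}, which computes the AND of two qubits into a fresh ancilla by consuming one CCZ state and later erases that ancilla by an $X$-basis measurement followed by a conditional Clifford correction. First I would assemble these gadgets into a balanced binary tree of depth $\lceil\log_2 K\rceil$: level $\ell$ contains $\lfloor K/2^\ell\rfloor$ AND gates, each consuming one CCZ state, so the total CCZ consumption is $\sum_{\ell=1}^{\lceil\log_2 K\rceil}\lfloor K/2^\ell\rfloor$, which sums to at most $K-1$ (with equality when $K$ is a power of two).

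Next I would account for the logical-cycle count. Each AND gate is realized by consuming a CCZ state through $Z$-type Pauli-product measurements, costing a constant number (at most two) of logical cycles, and the same holds for the measurement-based uncomputation of each level. Scheduling the compute phase level-by-level therefore costs $2\lceil\log_2 K\rceil$ logical cycles, the root phase $P_1(\gamma)$ costs $n_P$ by definition, and the uncompute phase costs a further $2\lceil\log_2 K\rceil$, for a grand total of $n_P+4\lceil\log_2 K\rceil$, as claimed.

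The delicate part---and the step I expect to be the main obstacle---is simultaneously establishing that the $K$ data qubits are addressed for only one logical cycle and that all residual corrections on them are single-qubit Pauli-$Z$ operators, so that the circuit is $Z$-equivalent to $P_K(\gamma)$ and the corrections can be deferred. The subtlety is that the naive uncomputation of a temporary AND returns a conditional $CZ$ on its two controls; applied at the leaves this would deposit a two-qubit Clifford correction on pairs of data qubits. To avoid this, I would have each data qubit participate only in a single $Z$-type (lattice-surgery merge) operation that reads its computational-basis value into an ancillary copy feeding the bottom of the tree; this is the one logical cycle during which the data qubits are addressed. The entire AND tree, the root phase, and all uncomputations then act on ancillas, so every conditional $CZ$ correction lands on ancilla copies rather than data qubits. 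I would then propagate all Pauli byproducts in the Heisenberg picture and verify that, because the data qubits are never subjected to any $X$-type or Hadamard operation, the byproducts returning to them upon erasing the copies (via $X$-basis measurements) are exactly single-qubit Pauli-$Z$ operators, efficiently computable from the recorded outcomes. Since $P_K(\gamma)$ and every correction are diagonal, they commute and may be collected at the end, completing the $Z$-equivalence argument.

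Finally I would count ancilla qubits. The budget is dominated by the first tree level, which consists of $\lfloor K/2\rfloor$ two-qubit TACU gadgets; allocating the $13$ ancilla qubits per gadget used by the construction of Ref.~\cite{litinski2022active} (comprising the CCZ-state qubits, the AND-result ancilla, and the copy and routing qubits) gives $13\lfloor K/2\rfloor=\lfloor\tfrac{13}{2}K\rfloor$ ancillas. Because the ancillas of a completed level are freed before the next level is dispatched, the higher levels reuse this workspace and do not increase the peak count, yielding the stated $\lfloor\tfrac{13}{2}K\rfloor$. I expect the bookkeeping of this constant, together with the byproduct-propagation argument of the previous step, to be where most of the care is required; the CCZ count and the depth count are comparatively routine summation and scheduling arguments.
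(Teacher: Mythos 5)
Your construction is essentially the paper's: a balanced binary AND-tree of two-qubit TACU gadgets, with each data qubit offloaded onto ancilla copies in a single $Z$-type lattice-surgery operation so that only deferrable single-qubit Pauli-$Z$ corrections ever return to it, the phase $P_1(\gamma)$ applied at the root, and the tree uncomputed by $X$-basis measurements. The only discrepancies are bookkeeping: the paper's optimized gadget takes three logical cycles per level to write the AND forward and one per level for the sequential measure-out (rather than your $2+2$ split, which happens to give the same total of four per level), and for odd $K$ the stated $\lfloor\frac{13}{2}K\rfloor$ ancilla count includes the one-qubit ``temporary-COPY'' gadgets on unpaired qubits, which your $13\lfloor K/2\rfloor$ omits.
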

Here an \textit{adaptive measurement} is a measurement whose basis may depend on (and can be efficiently computed from) the outcomes of past measurements.
We prove Theorem \ref{thm:phase-gate-appendix} in stages below, by first optimizing the two-qubit TACU gadget in Ref.~\cite{litinski2022active}, and then combining two-qubit TACU gadgets into a $K$-qubit TACU gadget.

\subsection{Optimizing the two-qubit TACU gadget}

Starting with the two-qubit TACU gadget in Fig.~15(c) of Ref.~\cite{litinski2022active}, we can expand the conditional (reactive) CZ gates therein using Fig.~14(b) of Ref.~\cite{litinski2022active} to write
\begin{align}
  \vcenter{\hbox{\includegraphics{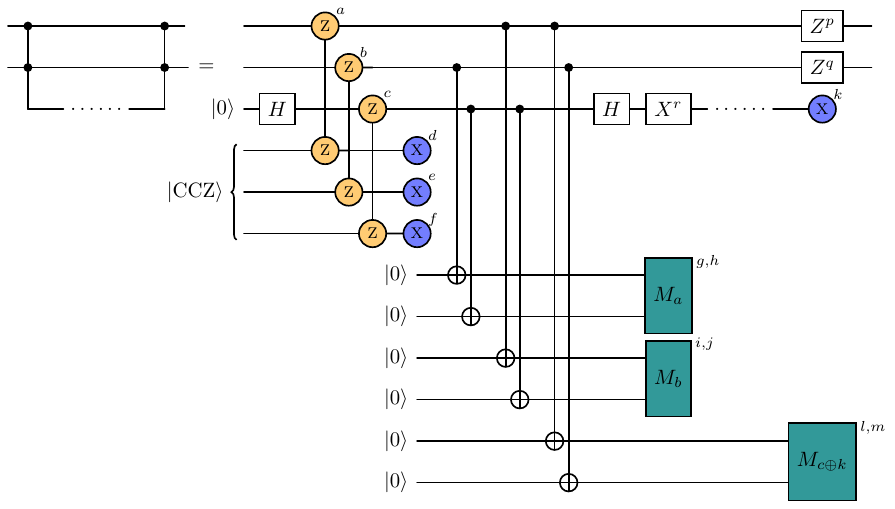}}}.
  \label{eq:tacu-start}
\end{align}
Here circles represent single- and two-qubit Pauli measurements in a basis that is indicated by both text and color, for clarity, and whose outcomes are saved to bits $a,b,c,d,e,f,k$; we define the conditional two-qubit measurement gate
\begin{align}
  \vcenter{\hbox{\includegraphics{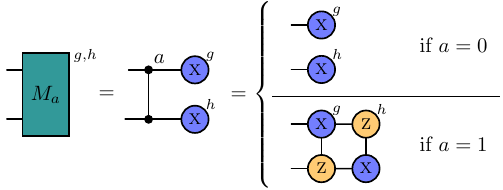}}},
  \label{eq:measure-pair}
\end{align}
where the $a$ superscript on the CZ gate indicates that this gate is applied only if $a=1$; and the control bits $p,q,r$ for the Pauli gates in Eq.~\eqref{eq:tacu-start} are
\begin{align}
  p &= (b\wedge c)\oplus d\oplus i\oplus l, \\
  q &= (a\wedge c)\oplus e\oplus g\oplus m, \\
  r &= (a\wedge b)\oplus f\oplus h\oplus j.
  \label{eq:pauli-controls}
\end{align}
We can ``offload'' gates from the top qubits in Eq.~\eqref{eq:tacu-start} by applying the circuit identity
\begin{align}
  \vcenter{\hbox{\includegraphics{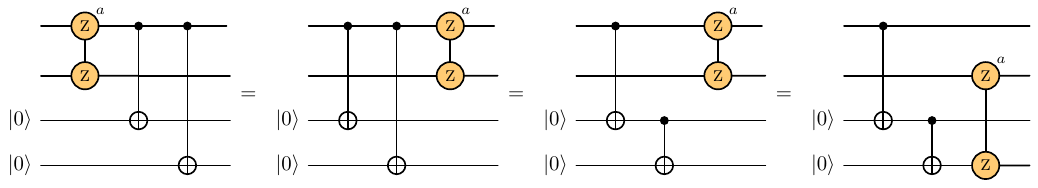}}},
  \label{eq:offload}
\end{align}
where the last equality holds because the joint state of the top qubit and the bottom two ancillas after the two CNOTs looks like $\psi_0\ket{000} + \psi_1\ket{111}$, so the two-qubit $ZZ$ measurement may equivalently address any of these qubits.
We thereby find that
\begin{align}
  \vcenter{\hbox{\includegraphics{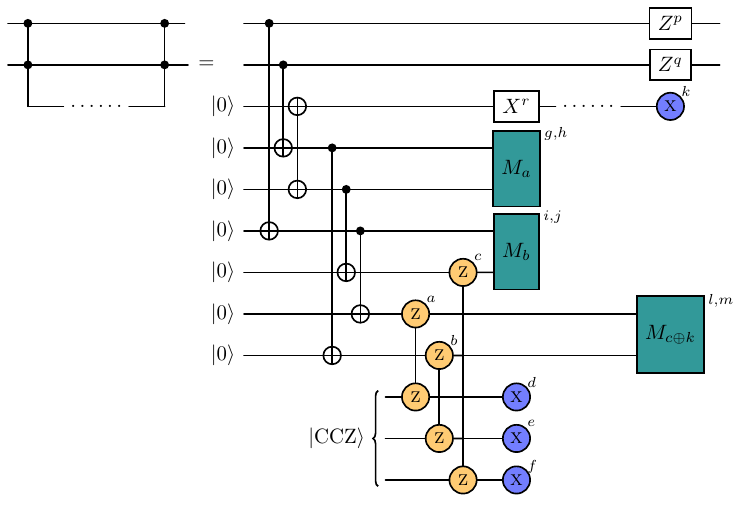}}},
  \label{eq:tacu-mid}
\end{align}
where we use the $X$-controlled-$X$ gate on the third and fifth qubits, which is simply a CZ gate in the $X$ basis.
Finally, we use a decomposition of the CNOT gate into lattice surgery primitives \cite{vuillot2019code},
\begin{align}
  \vcenter{\hbox{\includegraphics{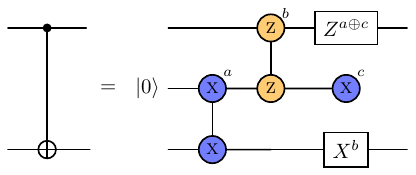}}},
\end{align}
to decompose each chain of interactions between the top and bottom three qubits in Eq.~\eqref{eq:tacu-mid} as
\begin{align}
  \vcenter{\hbox{\includegraphics{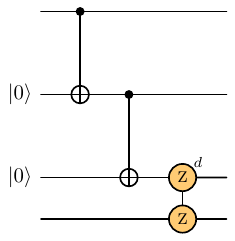}}}
  &= \vcenter{\hbox{\includegraphics{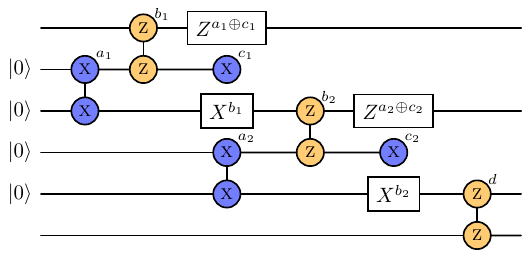}}}
  \label{eq:compress-1} \\
  &= \vcenter{\hbox{\includegraphics{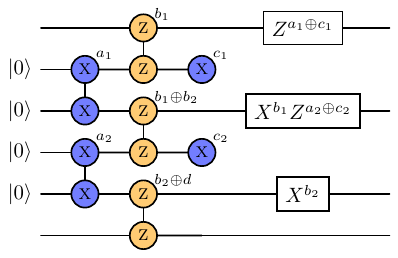}}},
  \label{eq:compress-2}
\end{align}
where the superscripts $b_1\oplus b_2$ and $b_2\oplus d$ on $ZZ$ measurements indicate that these measurement outcomes are equated with $b_1\oplus b_2$ and $b_2\oplus d$, respectively, which implicitly determines the values of $b_2$ and $d$.
We thus find that the two-qubit TACU gadget can be implemented by a circuit of the form
\begin{align}
  \vcenter{\hbox{\includegraphics{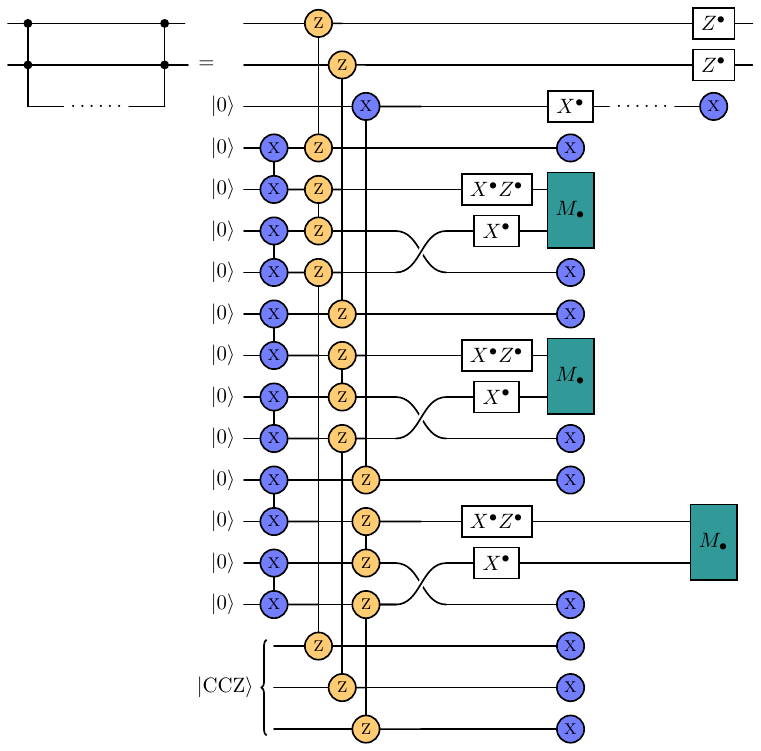}}},
  \label{eq:tacu-final}
\end{align}
where bullets $(\bullet)$ indicate unspecified, but simple dependencies on measurement outcomes in the circuit.
In total, the two-qubit TACU gadget addresses its target qubits for one logial cycle, consumes one CCZ state with 13 $\ket{0}$-state ancilla qubits, requires three logical cycles to ``write'' the AND of its target qubits to an ancilla qubit for further processing, and ends with a measurement that requires one logical cycle to determine adaptive Pauli-$Z$ corrections to the target qubits.

\subsection{The multi-qubit TACU gadget}

The circuit in Eq.~\eqref{eq:tacu-final} can be schematically written as
\begin{align}
  \vcenter{\hbox{\includegraphics{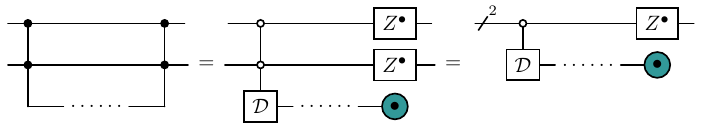}}},
  \label{eq:tacu-dispatch}
\end{align}
where $\mathcal{D}$ is an appropriately defined \textit{TACU dispatch circuit}, and the final measurement in Eq.~\eqref{eq:tacu-dispatch} includes both of the rightmost measurement gates in Eq.~\eqref{eq:tacu-final}.
This notation thereby suppresses the presence of the two ancilla qubits introduced in $\mathcal{D}$ that are measured at the end of Eq.~\eqref{eq:tacu-dispatch}; namely, the ancilla qubits that participate in the last conditional measurement ($M_\bullet$) gate in Eq.~\eqref{eq:tacu-final}.

If $K=2^L$ for integer $L$, then we can construct a $K$-qubit TACU gadget by AND-ing together pairs of qubits in a binary tree.
Defining $K_\ell = K/2^\ell$ and denoting $j$ concurrent copies of the two-qubit TACU dispatch circuit by $\mathcal{D}_j$, we thus find that
\begin{align}
  \vcenter{\hbox{\includegraphics{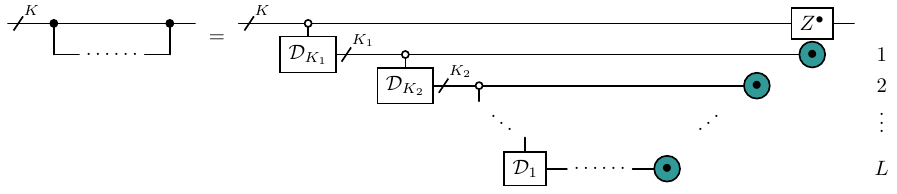}}},
  \label{eq:tacu-k-dispatch}
\end{align}
where the final measurements have to occur sequentially from bottom to top because the measurements at level $\ell$ may induce Pauli-$Z$ corrections at level $\ell-1$ that affect the measurement basis.

If $K$ is not a power of two, we instead define $K_\ell = \lceil K/2^\ell\rceil$ and $L=\lceil\log_2 K\rceil$, and replace two-qubit TACU gadgets in Eq.~\eqref{eq:tacu-k-dispatch} as needed to address unpaired qubits by the ``one-qubit TACU'' gadget
\begin{align}
  \vcenter{\hbox{\includegraphics{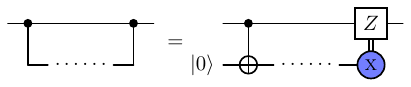}}},
  \label{eq:one-tacu}
\end{align}
which can be thought of as a temporary-COPY-compute-and-uncompute gadget.
In total, the $K$-qubit TACU gadget consumes $\sum_{\ell=1}^{\lceil\log_2 K\rceil} K_\ell \le K - 1$ CCZ states, requires $3\lceil\log_2 K\rceil$ logical cycles to write the AND of its target qubits to an ancilla qubit for further processing, and ends with $\lceil\log_2 K\rceil$ logical cycles to measure out the ancillas.
The dispach circuits $\mathcal{\mathcal{D}}_{K_1}$ introduce at most $\lfloor\frac{13}{2} K\rfloor$ ancilla qubits, and the majority of these ancillas become free at the time of dispatching $\mathcal{\mathcal{D}}_{K_2}$, which in turn needs half the number of ancillas as $\mathcal{\mathcal{D}}_{K_1}$.
In total, $\lfloor\frac{13}{2} K\rfloor$ ancilla qubits thereby suffice for the $K$-qubit TACU gadget.
These resource counts, together with the facts that (a) when ignoring Pauli-$Z$ corrections, data qubits are addressed only once by the $K$-qubit TACU gadget in Eq.~\eqref{eq:tacu-k-dispatch}, and (b) the $K$-qubit phase gate $P_K(\gamma)$ can be implemented by inserting of a single-qubit phase gate $P_1(\gamma)$ into Eq.~\eqref{eq:tacu-k-dispatch}, prove Theorem \ref{thm:phase-gate-appendix}.

\section{Resource bounds for the $k$-SAT oracle}
\label{sec:oracle-bounds}

Here we bound the resources required to implement the $k$-SAT oracle $O_C$.
In principle, this oracle can be implemented using the TACU gadgets in \cref{sec:TACU}.
However, a straightforward TACU-based implementation has an ancilla qubit overhead that would dominate the qubit requirements of the QAOA+AA algorithm in the main text.
We therefore construct an implementation of the oracle with reduced space overheads, based on explicit (as opposed to temporary) computation and uncomputation.

\subsection{The multi-qubit Toffoli gate}

Starting with the Toffoli gate in Fig.~14(a) of Ref.~\cite{litinski2022active},
\begin{align}
  \vcenter{\hbox{\includegraphics{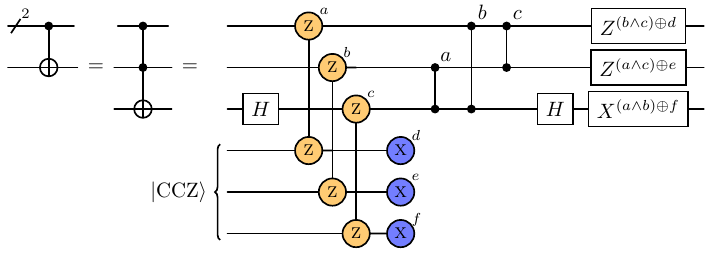}}},
  \label{eq:toffoli}
\end{align}
the three conditional CZ gates can be performed in two logical cycles,
\begin{align}
  \vcenter{\hbox{\includegraphics{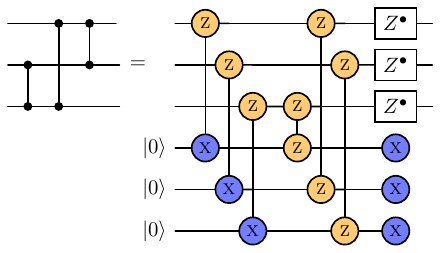}}}.
\end{align}
By commuting through and merging the Hadamard gates in Eq.~\eqref{eq:toffoli}, we can thus implement the Toffoli gate by consuming one CCZ state in three logical cycles using three ancilla qubits.

If $K=2^L$ for integer $L$, then the $K$-qubit TACU gadget in Eq.~\eqref{eq:tacu-k-dispatch} and the three-qubit Toffoli gate in Eq.~\eqref{eq:toffoli} can be combined to construct a $(K+1)$-qubit Toffoli gate
\begin{align}
  \vcenter{\hbox{\includegraphics{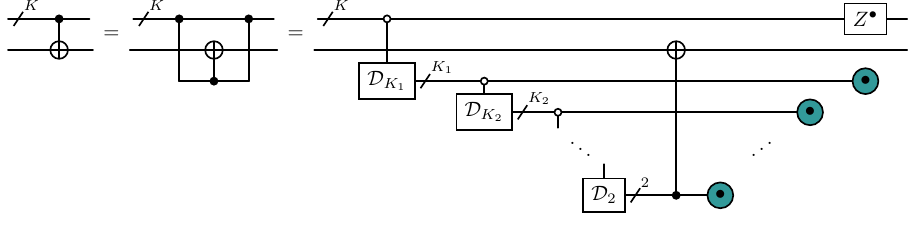}}},
  \label{eq:toffoli-k}
\end{align}
where $K_\ell = \lfloor K/2^\ell\rfloor$, and we make a minor optimization that eliminates the final dispatch ($\mathcal{D}_1$) in Eq.~\eqref{eq:tacu-k-dispatch} and simplifies the resource accounting below.
If $K$ is not a power of two, then two-qubit TACU gadgets and the Toffoli gate are, respectively, replaced by the one-qubit TACU gadget in Eq.~\eqref{eq:one-tacu} and a CNOT gate as necessary to address unpaired qubits.
This $(K+1)$-qubit Toffoli gate thereby consumes up to $\sum_{\ell=1}^{\lceil\log_2 K\rceil} K_\ell \le K - 1$ CCZ states, requires $3\lceil\log_2 K\rceil$ logical cycles to finish addressing the target qubit, and ends with $\lceil\log_2 K\rceil - 1$ logical cycles to uncompute all intermediate data on $\lfloor\frac{13}{2} K\rfloor$ ancilla qubits.

\subsection{The oracle}
\label{sec:oracle}

Each clause of a $k$-SAT instance is an OR of $k$ (possibly negated) variables, which can be converted into a $k$-fold AND operation by the appropriate insertion of single-bit negations (i.e., using De Morgan's Law).
We can schematically represent a gadget that temporarily writes clause $C_j$ onto one qubit using the De Morgan's Law and the $k$-qubit TACU gadget in Eq.~\eqref{eq:tacu-dispatch} by
\begin{align}
  \vcenter{\hbox{\includegraphics{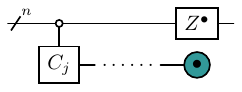}}}.
\end{align}
Let $Q_j$ be the set of qubits that correspond to the variables addressed in the clause $C_j$.
We say that a partition $\mathcal{C} = \set{\mathcal{C}_1,\mathcal{C}_2,\cdots,\mathcal{C}_c}$ of the $k$-SAT clauses $C = \set{C_1,\cdots,C_m}$ is \textit{disjoint} if each part $\mathcal{C}_\ell\subset C$ consists of clauses that address mutually disjoint sets of qubits, which is to say that if $C_i,C_j\in\mathcal{C}_\ell$, then $\abs{Q_i\cap Q_j} = 0$.
Given a disjoint partition $\mathcal{C} = \set{\mathcal{C}_1, \mathcal{C}_2, \cdots, \mathcal{C}_c}$ of $C$ into subsets with maximum size $s = \max_j\abs{\mathcal{C}_j}$, without loss of generality we let $\mathcal{C}_1 = \set{C_1,C_2,\cdots,C_s}$ and define a gadget that writes the AND of all clauses in $\mathcal{C}_1$ onto a single qubit,
\begin{align}
  \vcenter{\hbox{\includegraphics{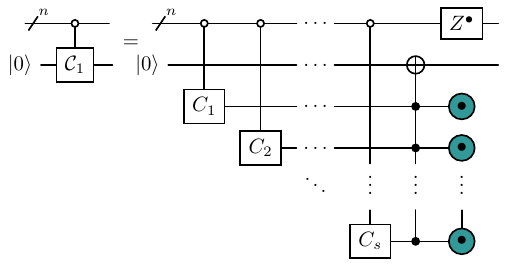}}}.
  \label{eq:one-part}
\end{align}
The gadgets for $\mathcal{C}_2,\mathcal{C}_3,\cdots,\mathcal{C}_c$ are defined analogously.
By construction, all individual clauses in Eq.~\eqref{eq:one-part} can be computed in parallel.
In total, this gadget consumes up to $R_1$ CCZ states in $L_1$ logical cycles using $A_1$ ancilla qubits, where
\begin{align}
  R_1 &= (k-1)s + s - 1 = ks - 1,
  \\
  L_1 &= 4\lceil\log_2k\rceil + 4\lceil\log_2s\rceil - 1
  \approx 4 \log_2(ks),
  \\
  A_1 &= \left\lceil\frac{13}{2} k\right\rceil s + \left\lceil\frac{13}{2}s\right\rceil
  \approx \frac{13}{2}(k+1)s.
\end{align}
We now further partition $\mathcal{C}$ into $\lceil\sqrt{c}\rceil$ subsets $\mathscr{C}_1, \mathscr{C}_2, \cdots, \mathscr{C}_{\lceil\sqrt{c}\rceil}$ of size at most $\lceil\sqrt{c}\rceil$, with for example $\mathscr{C}_1 = \set{\mathcal{C}_1, \mathcal{C}_2, \cdots, \mathcal{C}_{\lceil\sqrt c\rceil}}$.
We can AND together all clauses in $\mathscr{C}_1$ by computing all $\mathcal{C}_j\in\mathscr{C}_1$, AND-ing them together with a $\lceil\sqrt{c}\rceil$-qubit Toffoli, and uncomputing all $\mathcal{C}_j\in\mathscr{C}_1$ with gadgets of the form
\begin{align}
  \vcenter{\hbox{\includegraphics{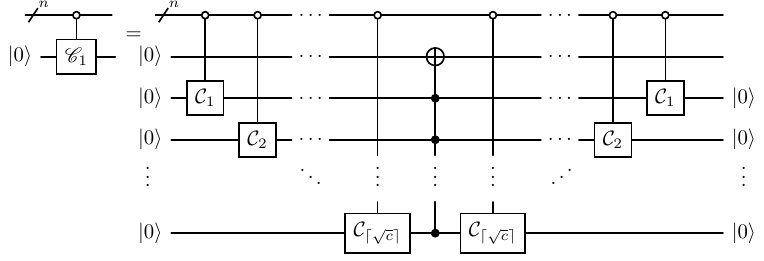}}}.
  \label{eq:many-parts}
\end{align}
If all parts $\mathcal{C}_j$ are computed sequentially, this gadget consumes up to $R_2$ CCZ states in $L_2$ logical cycles using $A_2$ ancilla qubits, where
\begin{align}
  R_2 &= 2\lceil\sqrt{c}\rceil R_1 + \lceil\sqrt{c}\rceil - 1
  \approx 2 ks \sqrt{c},
  \\
  L_2 &= 2\lceil\sqrt{c}\rceil L_1 + 4\log_2\lceil\sqrt{c}\rceil - 1
  \approx 8 \log_2(ks) \sqrt{c},
  \\
  A_2 &= A_1 + \lceil\sqrt{c}\rceil
  \approx \frac{13}{2}(k+1)s + \sqrt{c}.
\end{align}
If $\eta\le L_1$ parts $\mathcal{C}_j$ are computed concurrently, then the number of required logical ancilla qubits grows by a factor of $\eta$, the contribution of $2\lceil\sqrt{c}\rceil L_1$ to $L_2$ gets reduced by a factor of $\eta$, and an additional $\eta-1$ logical cycles are added to $L_2$ to account for the fact that the parts $\mathcal{C}_j$ must be dispatched one logical cycle at a time, such that altogether
\begin{align}
  L_2 = \left\lceil\frac{2\lceil\sqrt{c}\rceil L_1}{\eta}\right\rceil + \eta + 4\log_2\lceil\sqrt{c}\rceil - 2
  \approx 8 \log_2(ks)\sqrt{c} \, \eta^{-1},
  &&
  A_2 = A_1 \eta + \lceil\sqrt{c}\rceil
  \approx \frac{13}{2}(k+1)s\eta + \sqrt{c}.
\end{align}
Finally, the oracle $O_C$ can be implemented by AND-ing all $\mathscr{C}_j$ together onto one qubit, applying a single-qubit phase (Pauli-$Z$) gate, and uncomputing,
\begin{align}
  \vcenter{\hbox{\includegraphics{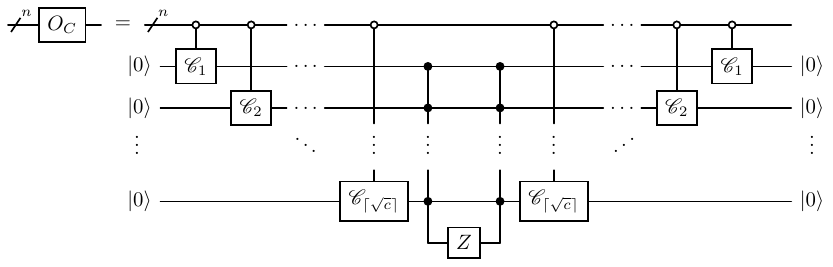}}}.
  \label{eq:oracle}
\end{align}
The oracle $O_C$ therefore consumes up to $R_3$ CCZ states in $L_3$ logical cycles using $A_3$ ancilla qubits, where
\begin{align}
  R_3
  &= 2 \lceil\sqrt{c}\rceil R_2 + \lceil\sqrt{c}\rceil - 1
  = (4ks-2) \lceil\sqrt{c}\rceil^2 - \lceil\sqrt{c}\rceil - 1
  \approx 4ks c, \\
  L_3
  &= 2\lceil\sqrt{c}\rceil L_2 + 4\log_2\lceil\sqrt{c}\rceil
  \approx 16 \log_2(ks) c \eta^{-1}, \\
  A_3
  &= A_2 + \lceil\sqrt{c}\rceil
  \approx \frac{13}{2}(k+1) s \eta + 2\sqrt{c}.
\end{align}
In summary, the oracle $O_C$ can be implemented by a circuit that looks like
\begin{align}
  \vcenter{\hbox{\includegraphics{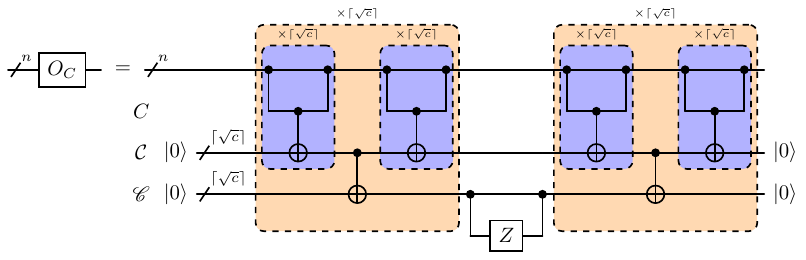}}}.
\end{align}
In the main text, the space overheads of the parallelized phaser exceed those of the unparallelized ($\eta=1$) oracle.
We therefore parallelize the oracle to the point at which its space overheads (from both logical ancilla qubits and CCZ factories) nearly match (but do not exceed) the space overheads of the parallelized phaser.
In practice, this means that $\eta\in[\oracleratiofaconespeeduptwo,\oracleratiofaconespeedupfour]$ for all cases considered in the main text.

\section{Optimal QAOA depths}

As illustrated by the time budget in \cref{fig:techniques}D, the QAOA+AA runtime $T_q$ is dominated by the time to implement the QAOA phaser.
Therefore, to good approximation this runtime goes $ T_q(n,p) \propto p /\sqrt{\Pr^{\rm success}_{\rm QAOA}} = p 2^{0.69p^{-0.32}n/2}$, illustrated also in \cref{fig:fig_optimal_p}.
This dependence on $n$ and $p$ allows us to compute an optimal QAOA depth $p$ for every problem size $n$, namely
\begin{align}
  p_{\mathrm{opt}}(n)
  = \min_p T_q(n,p)
  \approx \left(\frac{\ln 2}{2}\times 0.69 \times 0.32 \times n\right)^{\frac{1}{0.32}}
  \approx 3.25\times10^{-4} \times n^{\frac{1}{0.32}}.
  \label{eq:optimal_p_8_SAT}
\end{align}
We note, however, that this optimization relies on a strong commitment to the functional form $T_q(n,p)$ and its precise parameters ($0.69$ and $0.32$) for arbitrary QAOA depths $p$.
We therefore instead consider, in the main text, specific values of $p$ that correspond to different asymptotic speedups, which has the added benefit of showcasing the plausibility of a quantum advantage in optimization with other quantum algorithms that typically have fixed asymptotic speedups.

\begin{figure}
  \centering
  \includegraphics[width=0.7\linewidth]{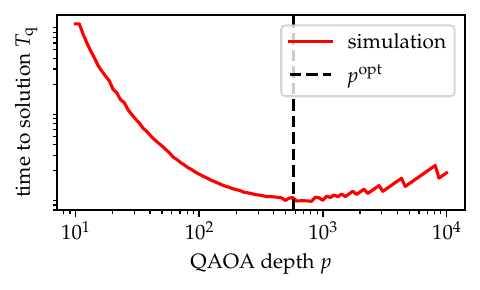}
  \caption{
    Dependence of the time-to-solution $T_q$ for QAOA+AA on the QAOA depth $p$ for $n=100$ 8-SAT variables.
  }
  \label{fig:fig_optimal_p}
\end{figure}

\begin{sidewaystable}
  \centering
  \begin{longtable}{|c|c|c|c|c|c|c|c|c|c|c|c|c|c|c|}
    \hline
    asymptotic
    & QAOA
    & problem
    & code
    & physical
    & classical
    & logical
    & non-Clifford
    & parallel
    & number of
    & number of
    & number of
    & decomposition
    & non-Clifford
    & crossover \\
    quantum
    & cycles
    & size
    & distance
    & qubits
    & decoder
    & depth
    & gates
    & jobs
    & classical
    & logical
    & $T$ gate
    & accuracy
    & infidelity
    & time \\
    speedup
    & $p$
    & $n$
    & $d$ & ($\times 10^6$)
    & $n_{\mathrm{decoder}}$
    & ($\times10^8)$
    & ($\times10^{12})$
    & $n_{\mathrm{jobs}}$
    & cores $n_{\mathrm{cores}}$
    & ancillas
    & for rotation $\mathcal{N}_T$
    & $\delta$
    & $\epsilon_T$
    & $T_q \leq T_c$  \\
    \hline
    & & & & & & & & & &  & & & & \\
    quadratic
    & \pfaconespeeduptwo
    & \crossoverqubitsfaconespeeduptwo
    & \distancefaconespeeduptwo
    & \Tpqubitsfaconespeeduptwo
    & \decodersfaconespeeduptwo k
    & \Tdepthfaconespeeduptwoa
    & \Tcountfaconespeeduptwo
    & \Tfactoriesfaconespeeduptwo
    & \coresfaconespeeduptwo
    & \ancillafaconespeeduptwo
    & \Trotationsfaconespeeduptwo
    & $\rotfaconespeeduptwoa \times 10^{\rotfaconespeeduptwob}$
    & $\Tfaconespeeduptwoa \times 10^{\Tfaconespeeduptwob}$
    & \yearsfaconespeeduptwo{} y
    \\
    & & & & & & & & & & & & & & \\
    \hline
    & & & & & & & & & & & & & & \\
    cubic
    & \pfaconespeedupthree
    & \crossoverqubitsfaconespeedupthree
    & \distancefaconespeedupthree
    & \Tpqubitsfaconespeedupthree
    & \decodersfaconespeedupthree k
    & \Tdepthfaconespeedupthreea
    & \Tcountfaconespeedupthree
    & \Tfactoriesfaconespeedupthree
    & \coresfaconespeedupthree
    & \ancillafaconespeedupthree
    & \Trotationsfaconespeedupthree
    & $\rotfaconespeedupthreea \times 10^{\rotfaconespeedupthreeb}$
    & $\Tfaconespeedupthreea \times 10^{\Tfaconespeedupthreeb}$
    & \hoursfaconespeedupthree{} h
    \\
    & & & & & & & & & & & & & &
    \\ \hline
    & & & & & & & & & & & & & &
    \\
    quartic
    & \pfaconespeedupfour
    & \crossoverqubitsfaconespeedupfour
    & \distancefaconespeedupfour
    & \Tpqubitsfaconespeedupfour
    & \decodersfaconespeedupfour k
    & \Tdepthfaconespeedupfoura
    & \Tcountfaconespeedupfour
    & \Tfactoriesfaconespeedupfour
    & \coresfaconespeedupfour
    & \ancillafaconespeedupfour
    & \Trotationsfaconespeedupfour
    & $\rotfaconespeedupfoura \times 10^{\rotfaconespeedupfourb}$
    & $\Tfaconespeedupfoura \times 10^{\Tfaconespeedupfourb}$
    & \hoursfaconespeedupfour{} h
    \\
    & & & & & & & & & & & & & &\\ \hline
  \end{longtable}
  \caption{\textbf{Extended data for the crossover point.}
  This extended table shows parameters, beyond those in Fig~\ref{fig:main_results} of the main text, for crossover points at which QAOA+AA and the classical solver Sparrow take equal time to solve, in expectation, random instances of 8-SAT near the satisfiability threshold.}
\end{sidewaystable}

\begin{sidewaystable}
  \centering
  \begin{longtable}{|c|c|c|c|c|c|c|c|c|c|c|c|c|c|c|}
    \hline
    asymptotic
    & QAOA
    & problem
    & code
    & physical
    & classical
    & logical
    & non-Clifford
    & parallel
    & number of
    & number of
    & number of
    & decomposition
    & non-Clifford
    & crossover \\
    quantum
    & cycles
    & size
    & distance
    & qubits
    & decoder
    & depth
    & gates
    & jobs
    & classical
    & logical
    & $T$ gate
    & accuracy
    & infidelity
    & time \\
    speedup
    & $p$
    & $n$
    & $d$ & ($\times 10^6$)
    & $n_{\mathrm{decoder}}$
    & ($\times10^8)$
    & ($\times10^{12})$
    & $n_{\mathrm{jobs}}$
    & cores $n_{\mathrm{cores}}$
    & ancillas
    & for rotation $\mathcal{N}_T$
    & $\delta$
    & $\epsilon_T$
    & $T_q \leq T_c$  \\
    \hline
    & & & & & & & & & &  & & & & \\
    quadratic
    & \pfaconespeeduptwo
    & \improvcrossoverqubitsfactwospeeduptwo
    & \improvdistancefactwospeeduptwo
    & \improvTpqubitsfactwospeeduptwo
    & \improvdecodersfactwospeeduptwo k
    & \improvTdepthfactwospeeduptwoa
    & \improvTcountfactwospeeduptwo
    & \improvTfactoriesfactwospeeduptwo
    & \improvcoresfactwospeeduptwo
    & \improvancillafactwospeeduptwo
    & \improvTrotationsfactwospeeduptwo
    & $\improvrotfactwospeeduptwoa \times 10^{\improvrotfactwospeeduptwob}$
    & $\improvTfactwospeeduptwoa \times 10^{\improvTfactwospeeduptwob}$
    & \improvdaysfactwospeeduptwo{} d
    \\
    & & & & & & & & & & & & & & \\
    \hline
    & & & & & & & & & & & & & & \\
    cubic
    & \pfaconespeedupthree
    & \improvcrossoverqubitsfactwospeedupthree
    & \improvdistancefactwospeedupthree
    & \improvTpqubitsfactwospeedupthree
    & \improvdecodersfactwospeedupthree k
    & \improvTdepthfactwospeedupthreea
    & \improvTcountfactwospeedupthree
    & \improvTfactoriesfactwospeedupthree
    & \improvcoresfactwospeedupthree
    & \improvancillafactwospeedupthree
    & \improvTrotationsfactwospeedupthree
    & $\improvrotfaconespeedupthreea \times 10^{\improvrotfactwospeedupthreeb}$
    & $\improvTfactwospeedupthreea \times 10^{\improvTfactwospeedupthreeb}$
    & \improvhoursfactwospeedupthree{} h
    \\
    & & & & & & & & & & & & & &
    \\ \hline
    & & & & & & & & & & & & & &
    \\
    quartic
    & \pfaconespeedupfour
    & \improvcrossoverqubitsfactwospeedupfour
    & \improvdistancefactwospeedupfour
    & \improvTpqubitsfactwospeedupfour
    & \improvdecodersfactwospeedupfour k
    & \improvTdepthfactwospeedupfoura
    & \improvTcountfactwospeedupfour
    & \improvTfactoriesfactwospeedupfour
    & \improvcoresfactwospeedupfour
    & \improvancillafactwospeedupfour
    & \improvTrotationsfactwospeedupfour
    & $\improvrotfactwospeedupfoura \times 10^{\improvrotfactwospeedupfourb}$
    & $\improvTfactwospeedupfoura \times 10^{\improvTfactwospeedupfourb}$
    & \improvhoursfactwospeedupfour{} h
    \\
    & & & & & & & & & & & & & &\\ \hline
  \end{longtable}
  \caption{\textbf{Extended data for the crossover point for the combined improvements  for realistic
classical
parallelization.}
  This extended table shows parameters, beyond those in \cref{tab:improvements} of the main text, for crossover points at which QAOA+AA and the classical solver Sparrow take equal time to solve, in expectation, random instances of 8-SAT near the satisfiability threshold. Classical solver is run on all $725,760$ cores of MareNostrum 5 GPP supercomputer~\cite{top500} using the impact of parallelization estimated in Ref.~\cite{arbelaez2013using}.}
\end{sidewaystable}

\begin{sidewaystable}
  \centering
  \begin{longtable}{|c|c|c|c|c|c|c|c|c|c|c|c|c|c|c|}
    \hline
    asymptotic
    & QAOA
    & problem
    & code
    & physical
    & classical
    & logical
    & non-Clifford
    & parallel
    & number of
    & number of
    & number of
    & decomposition
    & non-Clifford
    & crossover \\
    quantum
    & cycles
    & size
    & distance
    & qubits
    & decoder
    & depth
    & gates
    & jobs
    & classical
    & logical
    & $T$ gate
    & accuracy
    & infidelity
    & time \\
    speedup
    & $p$
    & $n$
    & $d$ & ($\times 10^6$)
    & $n_{\mathrm{decoder}}$
    & ($\times10^8)$
    & ($\times10^{12})$
    & $n_{\mathrm{jobs}}$
    & cores $n_{\mathrm{cores}}$
    & ancillas
    & for rotation $\mathcal{N}_T$
    & $\delta$
    & $\epsilon_T$
    & $T_q \leq T_c$  \\
    \hline
    & & & & & & & & & &  & & & & \\
    quadratic
    & \pfaconespeeduptwo
    & \pessimisticcrossoverqubitsfaconespeeduptwo
    & \pessimisticdistancefaconespeeduptwo
    & \pessimisticTpqubitsfaconespeeduptwo
    & \pessimisticdecodersfaconespeeduptwo k
    & \pessimisticTdepthfaconespeeduptwoa
    & \pessimisticTcountfaconespeeduptwo
    & \pessimisticTfactoriesfaconespeeduptwo
    & \pessimisticcoresfaconespeeduptwo
    & \pessimisticancillafaconespeeduptwo
    & \pessimisticTrotationsfaconespeeduptwo
    & $\pessimisticrotfaconespeeduptwoa \times 10^{\pessimisticrotfaconespeeduptwob}$
    & $\pessimisticTfaconespeeduptwoa \times 10^{\pessimisticTfaconespeeduptwob}$
    & \pessimisticyearsfaconespeeduptwo{} y
    \\
    & & & & & & & & & & & & & & \\
    \hline
    & & & & & & & & & & & & & & \\
    cubic
    & \pfaconespeedupthree
    & \pessimisticcrossoverqubitsfaconespeedupthree
    & \pessimisticdistancefaconespeedupthree
    & \pessimisticTpqubitsfaconespeedupthree
    & \pessimisticdecodersfaconespeedupthree k
    & \pessimisticTdepthfaconespeedupthreea
    & \pessimisticTcountfaconespeedupthree
    & \pessimisticTfactoriesfaconespeedupthree
    & \pessimisticcoresfaconespeedupthree
    & \pessimisticancillafaconespeedupthree
    & \pessimisticTrotationsfaconespeedupthree
    & $\pessimisticrotfaconespeedupthreea \times 10^{\pessimisticrotfactwospeedupthreeb}$
    & $\pessimisticTfaconespeedupthreea \times 10^{\pessimisticTfaconespeedupthreeb}$
    & \pessimistichoursfaconespeedupthree{} h
    \\
    & & & & & & & & & & & & & &
    \\ \hline
    & & & & & & & & & & & & & &
    \\
    quartic
    & \pfaconespeedupfour
    & \pessimisticcrossoverqubitsfaconespeedupfour
    & \pessimisticdistancefaconespeedupfour
    & \pessimisticTpqubitsfaconespeedupfour
    & \pessimisticdecodersfaconespeedupfour k
    & \pessimisticTdepthfaconespeedupfoura
    & \pessimisticTcountfaconespeedupfour
    & \pessimisticTfactoriesfaconespeedupfour
    & \pessimisticcoresfaconespeedupfour
    & \pessimisticancillafaconespeedupfour
    & \pessimisticTrotationsfaconespeedupfour
    & $\pessimisticrotfaconespeedupfoura \times 10^{\pessimisticrotfaconespeedupfourb}$
    & $\pessimisticTfaconespeedupfoura \times 10^{\pessimisticTfaconespeedupfourb}$
    & \pessimistichoursfaconespeedupfour{} h
    \\
    & & & & & & & & & & & & & &\\ \hline
  \end{longtable}
  \caption{\textbf{Extended data for the crossover point for the combined improvements  for perfect
classical
parallelization.}
  This extended table shows parameters, beyond those in \cref{tab:improvements} of the main text, for crossover points at which QAOA+AA and the classical solver Sparrow take equal time to solve, in expectation, random instances of 8-SAT near the satisfiability threshold. Classical solver is run on all $725,760$ cores of MareNostrum 5 GPP supercomputer~\cite{top500} with perfect parallelization assumed (i.e., the runtime of classical solver is the serial runtime divided by $n_{\mathrm{cores}}$).}
\end{sidewaystable}

\end{document}